\newtheorem{theorem}{Theorem}
\newtheorem{Definition}[theorem]{Definition}
\newtheorem{Lemma}[theorem]{Lemma}
\newtheorem{Claim}[theorem]{Claim}
\newcommand{\SWITCH}[1]{\STATE \textbf{switch} (#1)}
\newcommand{\ENDSWITCH}{\STATE \textbf{end switch}}
\newcommand{\INITIALIZATION}{\hskip -0.55cm \textbf{Initialization:\,\,\,}}
\newcommand{\CASE}[1]{\STATE \textbf{case} #1\textbf{} \begin{ALC@g}}
\newcommand{\ENDCASE}{\end{ALC@g}}
\begin{document}
\begin{frontmatter}

\title{Analysis of the Threshold for Energy Consumption in Displacement of Random Sensors}
\author[pwr]{Rafa\l{} Kapelko\fnref{pwrfootnote}\corref{cor1}}
\ead{rafal.kapelko@pwr.edu.pl}
\cortext[cor1]{Corresponding author at: Department of Fundamentals of Computer Science,
Wroc{\l}aw University of Science and Technology, 
 Wybrze\.{z}e Wyspia\'{n}skiego 27, 50-370 Wroc\l{}aw, Poland. Tel.: +48 71 320 30 48; fax: +48 71 320 07 51.}
\address[pwr]{ Department of Fundamentals of Computer Science, 
Wroc{\l}aw University of Science and Technology, Poland}
\fntext[pwrfootnote]{Supported by Polish National Science Center (NCN) grant 2019/33/B/ST6/02988}
\begin{abstract}
The fundamental problem of energy-efficient reallocation of mobile random sensors to provide full coverage without interference is addressed in this paper.
We  consider $n$ mobile sensors with identical sensing range placed randomly  on the unit interval and on the unit square.
The main contribution is summarized as follows:
\begin{itemize}
\item If the sensors are placed on the unit interval we explain \textbf{a increase}
around the sensing radius equal to $\frac{1}{2n}$  and the interference distance equal to $\frac{1}{n}$
for the expected minimal $a$-total displacement, 
\item If the sensors are placed on the unit square we explain \textbf{a increase}
around the square sensing radius equal to $\frac{1}{2 \sqrt{n}}$ and the interference distance equal to $\frac{1}{\sqrt{n}}$ 
for the expected minimal $a$-total displacement. 
\end{itemize}
\end{abstract}

\begin{keyword}
Coverage, Interference, Random, Displacement, Energy, Sensors, Beta distribution 
\end{keyword}
\end{frontmatter}
\noindent\textbf{ACM subject classification:} C.2.4, F.2.m, G.2.1, G.3

\section{Introduction}
Mobile sensors are being deployed in many application areas to enable easier information retrieval in the communication environments,
from sensing and diagnostics to critical infrastructure monitoring (e.g. see  \cite{siamcontrol2015, sajal2008, mohamed2017}). 
Current reduction in manufacturing costs makes random deployment
of the sensors more attractive. Even existing sensor placement schemes cannot guarantee precise placement of sensors, 
so their initial deployment may be somewhat random.

A typical sensor is able to sense and thus cover a bounded region specified by its sensing radius \cite{kumar2005}. To monitor and protect 
a larger region against intruders every point of the region has to  be within 
the sensing range of a sensor. It is also known that proximity between sensors affects the transmission and reception of signals and causes
the degradation of performance \cite{gupta}. Therefore in order  to avoid interference a critical value, say $s$ is established.
It is assumed that for a given parameter $s$ two sensors interfere with each other during communication if their distance is
less than $s$ (see \cite{pervasiveKAPELKO, kranakis_shaikhet}). However, random deployment of the sensors might leave some gaps in the coverage of the area
and the sensors may be too close to each other. Therefore, to attain coverage of the area and to avoid interference
the reallocation of sensors may be the only option. Moreover, the ability to move the mobile sensors to the final destinations is not unrealistic.
Clearly, the displacement of a team of sensors should be performed in the most efficient way.

The \textit{energy consumption} for the displacement of a set of $n$ sensors is measured by the sum of the respective displacements to the power of the individual sensors.
We define below the concept of \textit{$a$-total displacement}. 
\begin{Definition}[$a$-total displacement] 
\label{def:atotalasam}
Let $a> 0$ be a constant.
Suppose the displacement of the $i$-th sensor is a distance $d_i$. The $a$-total 
displacement is defined as the sum $\sum_{i=1}^n d_i^a$. 
\end{Definition}
Motivation for this cost metric arises from the fact that the parameter $a$ in the exponents represents
various conditions  on the region lubrication and friction which  affect the sensor movement.


We  consider $n$ mobile sensors which are  placed independently and uniformly at random 
on the unit interval and on the unit square. 

For the case of unit interval $[0,1]$ each sensor is equipped with an omnidirectional antenna of identical \textit{sensing radius} $r_1>0.$ Thus, a sensor placed at location $x$ on the unit interval 
can cover any point at distance at most $r_1$ either to the left or right of $x.$ (See Figure \ref{fig.1}(a)).

For the case of unit square $[0,1]^2$ each sensor has identical \textit{square sensing radius} $r_2>0.$
\begin{Definition}[cf. \cite{adhocnow2015_KK} Square Sensing Radius]
\label{def:square}
We assume that a sensor located in position $(x_1,x_2)$ where
$0 \le x_1,x_2\le 1$ 
can cover any point in the area delimited by the square with corner points
$(x_1\pm r_2,x_2\pm r_2)$ and call $r_2$ the square sensing radius of the sensor.
\end{Definition}
Figure \ref{fig.1}(b) illustrates the \textit{square sensing radius}.
\begin{figure*}[ht]
\label{fig.1}
  \begin{center}
    \includegraphics[width=0.9\textwidth]{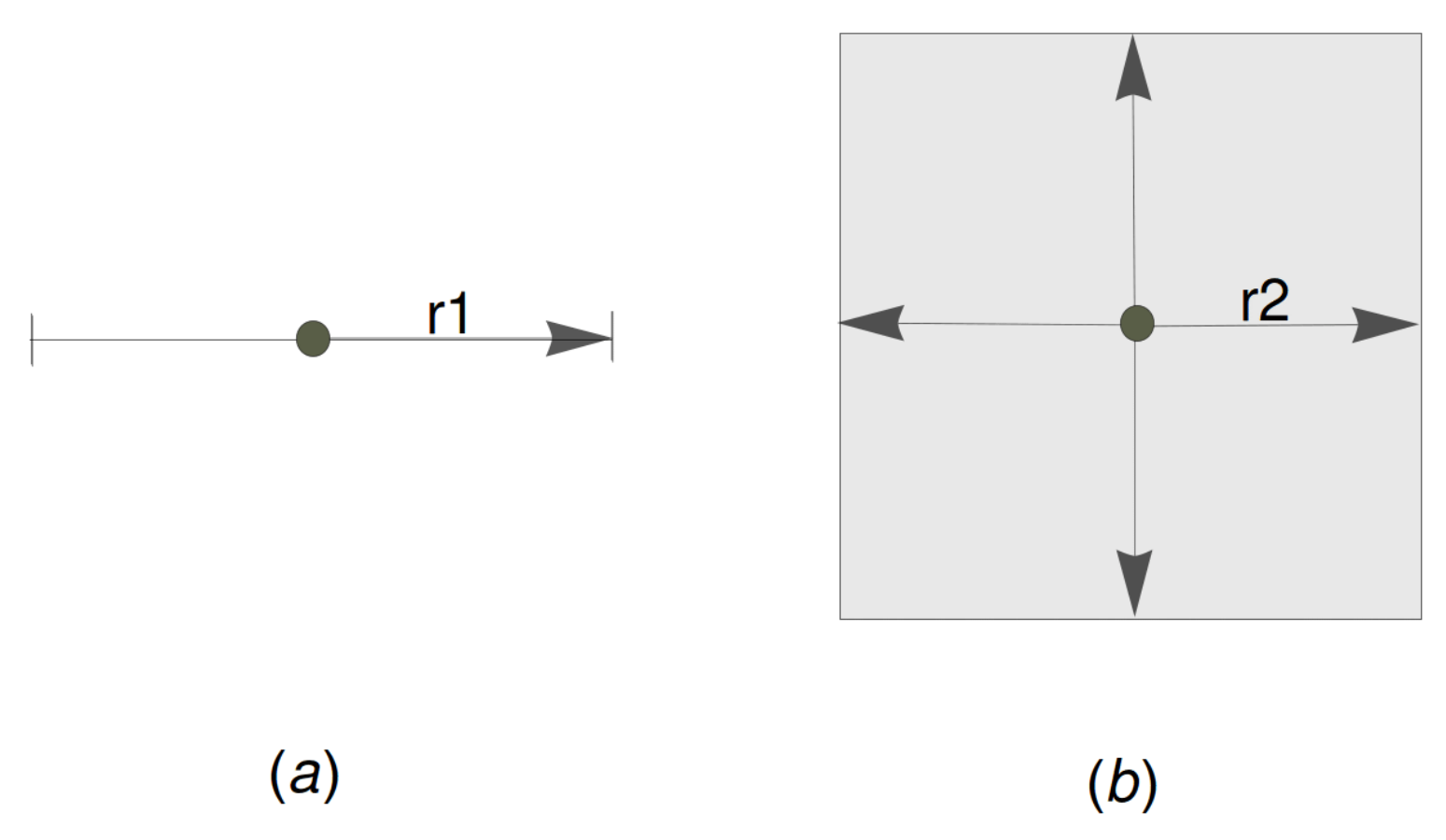}
  \end{center}
  \caption{(a) sensing radius $r_1$ on a line. (b) square sensing radius $r_2.$}
    \label{fig:anna1}
\end{figure*}
However, in most cases the sensing area of a sensor is a circular disk of radius $r_c$ but 
our upper bound result proved in the sequel for square sensing radius $r_2$ are obviously valid
for circular disk of radius $r_c$ equal to $\sqrt{2}r_2$ circumscribing the square.

The sensors are required to move from their current random locations to new positions so as to satisfy
the following requirement.
\begin{Definition}[$(r_m,s)$-C\&I requirement] Fix $m\in\{1,2\}.$ A set of sensors placed on the $m$-dimensional unit cube satisfy 
the $(r_m,s)$ coverage \& interference requirement if: 
\label{def:anna}
requires:
\begin{enumerate}
\item[(a)] Every point on the  $m$-dimensional unit cube $[0,1]^m$ is within the range $r_m$ of a sensor, i.e. the $m$-dimensional unit cube is completely covered.
\item[(b)] Each pair of sensors is placed  at Euclidean distance greater or equal to $s.$ 
\end{enumerate}
\end{Definition}
In this paper we investigate the problem of energy efficient displacement of the random mobile sensors. 
\begin{Definition}[energy efficient displacement]
Assume that $n$ mobile sensors are  placed independently and uniformly at random 
on the unit interval or on the unit square. The sensors move from their intial current location to the final
destination so that in their
final placement the sensor system satisfy the $(r_m,s)$-\textit{coverage \& interference} requirement
and the $a$-total displacement is minimized in expectation.
\end{Definition}
Throughout the paper, we will use
the Landau asymptotic notations: 
\begin{itemize}
\item[(i)] $f(n)=O(g(n))$ if there exists a constant $C_1>0$ and integer $N$ such that $|f(n)|\le C_1|g(n)|$ for all $n>N,$
\item[(ii)] $f(n)=\Omega(g(n))$ if there exists a constant $C_2>0$ and integer $N$ such that $|f(n)|\ge C_2|g(n)|$ for all $n>N,$
\item[(iii)] $f(n)=\Theta(g(n))$ if and only if $f(n)=O(g(n))$ and $f(n)=\Omega(g(n)),$
\end{itemize}

\subsection{Contribution and Outline of the Paper}
Let $a>0$ be a constant. Assume that $n$ mobile sensors with identical sensing radius $r_1$ and square sensing radius $r_2$ are placed independently at random 
with the uniform distribution on the unit interval and on the unit square. 

In this paper we give \textit{the picture} of \textbf{the threshold phenomena}
for the \textit{coverage} \& \textit{interference} requirement in one dimension, as well as in two dimension
(see Definition \ref{def:anna}). The $a$-total displacement (the energy consumption)
is used to measure the movement cost (see Definition \ref{def:atotalasam}) while Euclidean distance is used for
the interference distance and the sensing area of a sensor in two dimension is a square
(see Definition \ref{def:square}). Let us also recall that in two dimension 
the sensors can move \textit{directly to the final locations via the shortest route} not only in vertical and horizontal fashion.

 Let $\epsilon>0$, $1>\delta>0$ be \textit{arbitrary small constants} independent 
on the number of sensors $n.$ 

Table $1$ summarizes our main contribution in one dimension. 
\begin{table*}[ht]
\caption {The expected minimal $a$-total displacement of $n$ random sensors on the unit interval $[0,1]$ as a function of the sensing radius $r_1$ and the interference value $s,$ where $\epsilon>0$, $1>\delta >0.$ } \label{tab:line} 
\begin{center}
 \begin{tabular}{|*{4}{c|}} 
 \hline
  \begin{tabular}{c}Sensing\\ radius $r_1$\end{tabular} &\begin{tabular}{c}Interference\\ distance $s$ \end{tabular}  &  \begin{tabular}{c}Expected minimal $a$-total\\ displacement for\\ $(r_1,s)-C\&I$ \textit{requirement}\end{tabular} & Theorem \\ [0.5ex]
   \hline 
  $r_1=\frac{1}{2n}$ & $s=\frac{1}{n}$  
  &\begin{tabular}{c}$\frac{\Gamma(\frac{a}{2}+1)}{2^{\frac{a}{2}}(1+a)}n^{1-\frac{a}{2}}+O\left(n^{-\frac{a}{2}}\right),$\\
  $a> 0$ \end{tabular} &  
  \ref{thm:exact_one}[cf. \cite{fuchs2020}] 
  \\ [0.5ex]
 \hline 
\begin{tabular}{c}$r_1=\frac{1+\epsilon}{2n},$\\$\epsilon >0$\end{tabular} &  \begin{tabular}{c}$s=\frac{1-\delta}{n},$\\$1>\delta >0$\end{tabular} & 
 $O\left(n^{1-a}\right),\,\,$ $a>0 $ &  \ref{thm:const} \\  [0.5ex]
 \hline
 \end{tabular} 
\end{center}
\end{table*} 

As the sensing radius $r_1$ increases from $\frac{1}{2n}$ to $\frac{1+\epsilon}{2n}$ and  the interference distance $s$ decreases from $\frac{1}{n}$ to $\frac{1-\delta}{n}$
there is a decline 
from $\Theta\left({n^{\frac{a}{2}}}n^{1-a}\right)$
to $O\left(n^{1-a}\right)$
in the expected minimal $a$-total displacement for all powers $a>0.$

Table $2$ summarizes our main contribution in two dimensions. 
\begin {table*}[ht]
\caption {The expected minimal $a$-total displacement of $n$ random sensors on the unit square $[0,1]^2$ as a function of the square sensing radius $r_2$ and the interference value $s,$ where
$\epsilon>0$, $1>\delta>0.$} \label{tab:square} 
\begin{center}
 \begin{tabular}{|*{4}{c|}} 
 \hline
  \begin{tabular}{c}Square\\ sensing\\ radius $r_2$\end{tabular} & \begin{tabular}{c}Interference\\ distance $s$\end{tabular}   &  \begin{tabular}{c}Expected minimal $a$-total\\ displacement for\\ $(r_2,s)-C\&I$ 
 \textit{requirement}\end{tabular} & Theorem \\ [0.5ex]
   \hline 
  \begin{tabular}{c}$r_2=\frac{1}{2\sqrt{n}}$\end{tabular}  & \begin{tabular}{c}$s=\frac{1}{\sqrt{n}}$\end{tabular} 
  & \begin{tabular}{c}$\Theta\left(\sqrt{\ln(n)n}\right)$ \,\,if\,\, $a=1$\\ $\Omega\left((\ln(n))^{\frac{a}{2}}n^{1-\frac{a}{2}}\right)$ \,\,if\,\, $a> 1$\end{tabular}&  \begin{tabular}{c} \ref{thm:tal}[cf.\cite{talagrand_2014}]  \\\ref{thm:tal_a} \end{tabular}  \\ [0.5ex]
 \hline 
 \begin{tabular}{c}$r_2=\frac{1+\epsilon}{2\lfloor\sqrt{n}\rfloor},$\\$\epsilon >0$\end{tabular} & \begin{tabular}{c}$s=\frac{1-\delta}{\lfloor\sqrt{n}\rfloor},$\\$1>\delta >0$\end{tabular} & $O\left(n^{1-\frac{a}{2}}\right)$\,\,if\,\, $a> 0$ &  \ref{thm:const2}  
 \\  [0.5ex]
 \hline
 \end{tabular}
\end{center}
\end{table*}

As the square sensing radius $r_2$ increases from $\frac{1}{2 \sqrt{n}}$ to $\frac{1+\epsilon}{2\lfloor\sqrt{n}\rfloor}$
and  the interference distance $s$ decreases from $\frac{1}{\sqrt{n}}$ to $\frac{1-\delta}{\lfloor\sqrt{n}\rfloor}$
There is a decline 
from $\Omega\left({{({\ln(n)})^{{\frac{a}{2}}}}}n^{1-\frac{a}{2}}\right)$ to $O\left(n^{1-\frac{a}{2}}\right)$
in the expected minimal $a$-total displacement for all powers $a\ge 1.$

Notice that $n$ sensors on the unit interval $[0,1]$ with sensing radius $r_1=\frac{1}{2n}$ and the interference distance  have to move to 
the anchor positions to satisfy $\left(r_1,s\right)$-\textit{coverage \& interference}. When $r_1>\frac{1}{2n}$ and $s<\frac{1}{n}$
there are no anchors positions predetermined in advance.  The similar remark holds for the sensors on the unit square $[0,1]^2.$

Our theoretical results imply that the expected $a$-total displacement  is constant and independent on number of sensors  for some parameters $a.$ 
Namely, we have the following upper bounds:
\begin{itemize}
\item[(i)] For the random sensors on the unit interval, when $$n(2r_1)=1+\epsilon,$$ i.e. the sum of sensing area of $n$ sensors is a \textit{little bigger} then the \textit{length of the unit interval},
it is possible to provide the full area coverage in $O(1)$ expected $a$-total displacement with $a\ge 1.$
\item[(ii)]
For the random sensors on the unit square, when $$n(2r_2)^2\sim (1+\epsilon)^2\,\,\,\,\,\text{as}\,\,\,\,\,n\rightarrow\infty,$$ 
i.e. the sum of sensing area of $n$ sensors is asymptotically a \textit{little bigger} then the \textit{area of unit square},
the expected $a$-total displacement with $a\ge 2$ to provide full area coverage is $O(1).$ 
Obviously, this result is easily applicable to the model when the sensing area of a sensor is a circular disk of radius $r_c$ by taking circle circumscribing the square. 
Namely, when 
$$n\pi (r_c)^2\sim \frac{\pi}{2}(1+\epsilon)^2\,\,\,\,\,\text{as}\,\,\,\,\,n\rightarrow\infty$$ 
then the expected $2$-total displacement to provide full area coverage is constant. 
\end{itemize}
This \textbf{constant cost} seems to be of \textbf{practical importance} due to efficient monitoring against illegal trespassers.
It is well known that intrusion detection is an important application of wireless sensor networks. In this case it is necessary to ensure coverage with good communication.

Notice that constant expected cost in (i) and (ii) are valid for $n$ random sensors with identical sensing radius $r_1=\frac{x(1+\epsilon)}{2n}$ on the interval of length $x$
and for $n$ random sensors with identical square sensing radius $r_2=\frac{x(1+\epsilon)}{2\lfloor\sqrt{n}\rfloor}$ on the square $[0,x]\times[0,x].$

We also present \textit{3 algorithms} (see Algorithms (\ref{alg_left}-\ref{alg_coverage2})). It is worthwhile to mention that, even though the algorithms are simple 
the analysis is challenging. Notice that Algorithms (\ref{alg_left}-\ref{alg_coverage2}) can be implemented by a centralized controller telling each sensor where
and when to move. In Section \ref{sec:preliminary} we prove some technical properties of Beta distribution with special positive integer parameters
needed in the current paper (see Lemma \ref{lemma_e} and Lemma \ref{lemma_f}).

The overall organization of the paper is as follows. Subsection \ref{sub:relate} briefly summarizes some related work. In Section \ref{sec:preliminary} we present some preliminary results that will be used in the sequel. 
Sections \ref{sec:one1} and \ref{sec:one} deals with sensors on the unit interval. In Sections \ref{sec:alpha} and \ref{sec:two} we investigate sensors on the unit square, 
while
further insights in the higher dimension are discussed in
Section \ref{sec:discussion}.
Section \ref{sec:simu} deals with experimental evaluation of Algorithm \ref{alg_left}. 
Section \ref{sec:con} contains conclusions and directions for future work.
Finally, for the sake of readability, certain technical proofs are defarred to the Appendices.
\subsection{Related Work}
\label{sub:relate}
There are extensive studies dealing with both coverage (e.g., see \cite{abbasi2009movement, salajkcover, tcs2009, saipulla2009}) and interference problems 
(e.g., see \cite{burkh02, morin2018, halld07, Kranakis2010}). Closely related to barrier and area coverage the matching problem is also of interest in the research community
(e.g., see \cite{ajtai_84, milos, kapelkogamma, talagrand_2014})

An important setting in considerations for coverage of a domain is when the sensors are initially placed at random with the uniform distribution. 
Some authors proposed using several rounds of random displacement to achieve complete coverage of a domain \cite{eftekhari13, yan}. 
Another approach is to have the sensors relocate from their initial position to a new position to achieve the desired coverage \cite{MinMax,eftekhari13d}. 

More importantly, our work is closely related to the papers \cite{KK_2016_cube, kapelkokranakisIPL} in respect to analysis of the expected $a$-total displacement 
for coverage problem where the sensors are randomly placed on the unit interval \cite{kapelkokranakisIPL} and in the higher dimension \cite{KK_2016_cube}. 
Both papers study performance bounds for some algorithms, using Chernoff's inequality.
The methods used in these papers have limitations - the most important and difficult cases when the sensing radius $r_1$ is close to $\frac{1}{2n}$ 
and
the square sensing radius $r_2$ is close to $\frac{1}{2\sqrt{n}}$ were not included in \cite{KK_2016_cube, kapelkokranakisIPL}. Moreover, in the paper 
\cite{KK_2016_cube} the sensors can move only in parallel to the axes. 
Hence, the analysis of coverage problem in \cite{KK_2016_cube} is incomplete. Moreover, it is natural to investigate the general case when the sensor can move
directly to the final locations via the shortest route not only in vertical and horizontal fashion. 
\textbf{The novelty of work} in the current paper lies in studying the cases for the threshold phenomena, when the sensing radius $r_1$ is close to $\frac{1}{2n},$ i.e. 
$r_1=\frac{1+\epsilon}{2n}$ and
the square sensing radius $r_2$ is close to $\frac{1}{2\sqrt{n}},$ i.e. $r_2=\frac{1+\epsilon}{2\lfloor\sqrt{n}\rfloor}$
for coverage \& interference, provided that $\epsilon$ is an  arbitrary small constant independent on the number of sensors $n.$ 
Compared to the coverage problem,  $(r_m,s)-C\&I$ \textbf{requirement} not only ensures coverage, but also  avoids interference
and is more reasonable in order to provide reliable communication within the network. 

Finally, it is worth mentioning that, our work is related  to the series of papers \cite{ICDCNkapelko,pervasiveKAPELKO,ICDCN2020kapelko,transdaskapelko2021}.
In \cite{ICDCNkapelko,pervasiveKAPELKO} the author 
investigated the maximum of the expected sensor's displacement (the time required) for coverage \& interference. In \cite{ICDCNkapelko,pervasiveKAPELKO} it is assumed that
the $n$ sensors are initially deployed on the $[0,\infty)$ according to the arrival times of the Poisson process with arrival rate $\lambda>0$
and coverage (connectivity) is in the sense that there are no uncovered points from the origin to the last rightmost sensor. 
The work by \cite{ICDCN2020kapelko} investigates the expected minimal $a$-total discplacement for interference-connectivity requirement
when the $n$ sensors are initially placed on the $[0,\infty)^d$ according to $d$ identical and independent Poisson processes 
each with arrival rate $\lambda>0.$
It is worth pointing out that the $d$-dimensional model in \cite{ICDCN2020kapelko} is only the
direct extension of the interference-connectivity requirement from
one dimension to the $d$-dimensional space and the sensors move 
only in parallel to the axes.
\section{Preliminaries}
\label{sec:preliminary}
In this section we introduce some basic concepts and notations that will be used in the sequel.
We also present three lemmas which will be helpful in proving our main results. 
In this paper, in the one dimensional scenario, the $n$ mobile sensors are thrown independently at random
following the uniform distribution in the unit interval $[0, 1].$
Let $X_{(\ell)}$ be the position of the $\ell$-th sensor  after sorting the initial random locations of $n$ sensors with respect to the origin of the interval $[0,1],$
i.e. the $\ell$-th order statistics of the uniform distribution in the unit interval. 
It is known that the random variable $X_{(\ell)}$ obeys the Beta distribution with parameters $\ell,n+1-\ell$ (see \cite[page13]{arnold2008}). 

Assume that $c,d$ are positive integers. The Beta distribution $\mathrm{Beta}(c,d)$ (see \cite{NIST}) with parameters $c,d$ is the continuous distribution on $[0,1]$ with 
probability density function $f_{c,d}(t)$ given by
\begin{equation}
\label{eq:prosto}
f_{c,d}(t)=c\binom{c+d-1}{c}t^{c-1}(1-t)^{d-1},\,\,\, \text{when}\,\,\,0\le t\le 1.
\end{equation}

The cumulative distribution function of the Beta distribution with parameters $c,d$ is
given by the incomplete Beta function
\begin{equation}
\label{incomplete:first}
I_z(c,d)=c\binom{c+d-1}{c}\int_{0}^{z}t^{c-1}(1-t)^{d-1}dt\,\,\, \text{for}\,\,\,0\le z\le 1.
\end{equation}

Moreover, the incomplete Beta function is related to the binomial distribution by
\begin{equation}
\label{probal_eq}
1-I_z(c,d)=\sum_{j=0}^{c-1}\binom{c+d-1}{j}z^j(1-z)^{c+d-1-j}
\end{equation}
(see \cite[Identity 8.17.5]{NIST} for $c:=m,$ $d:=n-m+1$ and $x:=z$) and the binomial identity 
\begin{equation}
\label{eq:binole}
\sum_{j=0}^{c+d-1}\binom{c+d-1}{j}z^j(1-z)^{c+d-1-j}=1.
\end{equation}
The following inequality which relates binomial and Poisson distribution was discovered by Yu. V. Prohorov (see  \cite[Theorem 2]{LeCam1965}, \cite{prohorov}).
\begin{equation}
\label{eq:lecam}
\binom{n}{j}x^j(1-x)^{n-j}\le \left(\frac{n}{m_1}\right)^{\frac{1}{2}} e^{-nx}\frac{(nx)^j}{j!},
\end{equation}
where $m_1$ is some integer which satisfies $n(1-x)-1<m_1\le n(1-x).$

We will also use the classical Stirling's approximation for factorial (see \cite[page 54]{feller1968})
\begin{equation}
\label{eq:stirlingform}
\sqrt{2\pi}N^{N+\frac{1}{2}}e^{-N+\frac{1}{12N+1}}< N!< \sqrt{2\pi}N^{N+\frac{1}{2}}e^{-N+\frac{1}{12N}}.
\end{equation}
We use the following notation
$|x|^{+}=\max\{x,0\}$ 
for the positive parts of $x\in\mathbb{R}.$

We are now ready to give some useful properties of Beta distribution in the following sequences of lemmas.
\begin{Lemma}
 \label{lemma_first}
Let $a>0.$ Assume that $n$ is positive integer.
Then
$$
\Pr\left[\mathrm{Beta}(n,1) <1-\frac{1}{n^{\frac{a}{1+a}}}\right]<\frac{1}{e^{n^{\frac{1}{1+a}}}}.
$$
\end{Lemma}
\begin{proof}  
 First of all observe that (see (\ref{eq:prosto}) for $c:=n$ and $d:=1.$)
\begin{align}
\label{eq:dus_b}
\nonumber\Pr\left[\mathrm{Beta}(n,1)<1-\frac{1}{n^{\frac{a}{1+a}}}\right]&=\int_{0}^{1-\frac{1}{n^{\frac{a}{1+a}}}}f_{n,1}(t)dt=\left(1-\frac{1}{n^{\frac{a}{1+a}}}\right)^n\\
&=\left(\left(1-\frac{1}{n^{\frac{a}{1+a}}}\right)^{n^{\frac{a}{a+1}}}\right)^{n^{\frac{1}{a+1}}}.
\end{align}
Using (\ref{eq:dus_b}) and the basic inequality $(1-x)^{1/x}<e^{-1}$ when $x>0$ we have
$$\Pr\left[\mathrm{Beta}(n,1)<1-\frac{1}{n^{\frac{a}{1+a}}}\right]<\frac{1}{e^{n^{\frac{1}{1+a}}}}$$
which completes the proof. 
\end{proof} 
\begin{Lemma}
 \label{lemma_e}
Let $a>0$ be a constant.
Fix $\gamma>0$ independent on $n.$ Let $\rho=\frac{1+\gamma}{n}.$
Assume that $\ell, n$ are positive integers and $\ell\le n.$
Then
\begin{equation}
\label{eq:estaax}
\mathbb{E}\left[\left(|\mathrm{Beta}(\ell,n-\ell+1)-\rho \ell|^{+}\right)^a\right]=O\left(\frac{1}{n^a}\right),
\,\,\,\text{uniformly in}\,\,\,\ell\in\{1,2,\dots,n\},
\end{equation}
\begin{equation}
\label{eq:estaay}
\sum_{\ell=1}^{n}\frac{n}{\ell} \mathbb{E}\left[\left(|\mathrm{Beta}(\ell,n-\ell+1)-\rho \ell|^{+}\right)^a\right]=O\left(n^{1-a}\right).
\end{equation}
\end{Lemma}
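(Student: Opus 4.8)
The plan is to reduce both estimates to a single two‑parameter bound on the upper tail of $\mathrm{Beta}(l,n-l+1)$ above the level $\rho l$, and then integrate in one variable and sum in the other. Write $B_l$ for $\mathrm{Beta}(l,n-l+1)$ and $Y_l=|B_l-\rho l|^{+}$. Since $Y_l\ge 0$, I would start from the layer‑cake identity
$$\mathbf{E}\!\left[Y_l^a\right]=a\int_0^\infty u^{a-1}\Pr[Y_l>u]\,du=a\int_0^{(1-\rho l)^{+}} u^{a-1}\Pr[B_l>\rho l+u]\,du,$$
the last equality holding because $B_l\le 1$ almost surely; in particular $\mathbf{E}[Y_l^a]=0$ whenever $\rho l\ge 1$, i.e. $l\ge n/(1+\gamma)$.

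The heart of the argument is the tail bound for $\Pr[B_l>\rho l+u]$. Using the beta–binomial relation (\ref{probal_eq}) with $c:=l$, $d:=n-l+1$, $z:=\rho l+u$,
$$\Pr[B_l>\rho l+u]=\sum_{j=0}^{l-1}\binom{n}{j}(\rho l+u)^j(1-\rho l-u)^{n-j}=\Pr\!\left[\mathrm{Bin}(n,\rho l+u)\le l-1\right].$$
Because $\rho=(1+\gamma)/n$, the mean of this binomial is $n(\rho l+u)=(1+\gamma)l+nu\ge(1+\gamma)l>l-1$, so this is a lower‑tail event. Applying the Chernoff bound for the lower tail of a binomial (equivalently, Prohorov's inequality (\ref{eq:lecam}) followed by the Chernoff bound for a Poisson lower tail) gives, for $l\ge 2$,
$$\Pr\!\left[\mathrm{Bin}(n,\rho l+u)\le l-1\right]\le\left(\frac{(1+\gamma)l+nu}{l-1}\right)^{l-1}e^{(l-1)-(1+\gamma)l-nu},$$
with the case $l=1$ read directly as $(1-\rho-u)^n\le e^{-(1+\gamma)}e^{-nu}$. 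Writing $\mu=\frac{(1+\gamma)l+nu}{l-1}\ge 1+\gamma$ and using convexity of $\mu\mapsto\mu-1-\ln\mu$ (its tangent line at $\mu=1+\gamma$) together with $\mu-(1+\gamma)\ge\frac{nu}{l-1}$, I expect to obtain the clean estimate
$$\Pr[B_l>\rho l+u]\le C_\gamma\,e^{-c_\gamma(l-1)}\,e^{-c_\gamma' nu},\qquad c_\gamma=\gamma-\ln(1+\gamma)>0,\ \ c_\gamma'=\tfrac{\gamma}{1+\gamma}>0,$$
valid uniformly in $l\in\{1,\dots,n\}$ and $u\ge 0$ (indeed $C_\gamma=1$ suffices via the binomial Chernoff route).

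Plugging this into the layer‑cake integral and enlarging the range of $u$ to $[0,\infty)$,
$$\mathbf{E}\!\left[Y_l^a\right]\le aC_\gamma\,e^{-c_\gamma(l-1)}\int_0^\infty u^{a-1}e^{-c_\gamma' nu}\,du=\frac{C_\gamma\,\Gamma(a+1)}{(c_\gamma')^a}\,e^{-c_\gamma(l-1)}\,n^{-a},$$
which is $O(n^{-a})$ uniformly in $l$ since $e^{-c_\gamma(l-1)}\le 1$; this is (\ref{eq:estaax}). For (\ref{eq:estaay}), multiply by $n/l$ and sum:
$$\sum_{l=1}^{n}\frac{n}{l}\,\mathbf{E}\!\left[Y_l^a\right]\le\frac{C_\gamma\,\Gamma(a+1)}{(c_\gamma')^a}\,n^{1-a}\sum_{l=1}^{n}\frac{e^{-c_\gamma(l-1)}}{l}\le\frac{C_\gamma\,\Gamma(a+1)}{(c_\gamma')^a}\,n^{1-a}\sum_{l=1}^{\infty}e^{-c_\gamma(l-1)}=O(n^{1-a}),$$
the geometric series converging precisely because $c_\gamma>0$ is a constant independent of $n$.

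The main obstacle is the two‑parameter tail estimate: one must extract simultaneously an exponential decay in $l$ — this is what makes the weighted series in (\ref{eq:estaay}) summable, and is exactly where the hypothesis that $\gamma>0$ is bounded away from $0$ (independent of $n$) is used — and an exponential decay of rate $\Theta(n)$ in $u$, which is what produces the $n^{-a}$ scaling once $u^{a-1}$ is integrated. Minor care is also needed at the endpoints: small $l$ (treat $l=1$, and more generally $l=O(1)$, by the direct computation above) and $\rho l$ near or above $1$ (where $Y_l$ is either identically $0$ or bounded by $1-\rho l$, so its contribution is trivially controlled); and if one routes through Prohorov's inequality (\ref{eq:lecam}) rather than the binomial Chernoff bound, the spurious $\sqrt{n/m_1}$ prefactor has to be absorbed, which is harmless as long as $\rho l+u$ is kept away from $1$.
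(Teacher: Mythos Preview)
Your argument is correct and is a genuinely different, tidier route than the paper's. The paper never writes the layer--cake identity; instead it bounds the \emph{integer} moment $b=\lceil a\rceil$ by the crude inequality $\int_{\rho l}^1 (t-\rho l)^b f_{l,n}(t)\,dt\le \int_{\rho l}^1 t^b f_{l,n}(t)\,dt$, rewrites this via the incomplete beta--binomial relation as a partial binomial sum, pushes it through Prohorov's inequality (\ref{eq:lecam}) to a Poisson sum, and then uses Stirling's formula to extract the geometric factor $\bigl(n\rho e/e^{n\rho}\bigr)^l=\bigl((1+\gamma)e^{-\gamma}\bigr)^l$ in $l$; finally Jensen's inequality is invoked to descend from the integer exponent $b$ back to $a$, and a separate case $\rho l>1-\tfrac{2}{n+b-1}$ is treated by hand. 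Your approach compresses all of this: by recognising $\Pr[B_l>\rho l+u]=\Pr[\mathrm{Bin}(n,\rho l+u)\le l-1]$ and applying the one--line Chernoff bound, you get the two--parameter exponential decay $e^{-c_\gamma(l-1)}e^{-c_\gamma' nu}$ directly, which simultaneously yields the $n^{-a}$ scaling (via $\int_0^\infty u^{a-1}e^{-c_\gamma' nu}\,du=\Gamma(a)(c_\gamma' n)^{-a}$, valid for all real $a>0$) and the summability of the weighted series. Your decay rates $c_\gamma=\gamma-\ln(1+\gamma)$ and $c_\gamma'=\gamma/(1+\gamma)$ match exactly the ones implicit in the paper's computation, so nothing is lost; what you gain is that you avoid the detour through integer moments and Jensen, the Stirling manipulations, and the endpoint case split (your layer--cake integral already truncates at $(1-\rho l)^+$, making that case vacuous). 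The paper's approach, by contrast, keeps somewhat more explicit track of the polynomial prefactors in $l$, but this is not needed for the $O$--statements.
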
  
\begin{proof}
The proof is given in Appendix A. 
\end{proof}
\begin{Lemma}
 \label{lemma_f}
Let $a>0$ be a constant.
Fix $1>\delta>0$ independent on $n.$ Let $s=\frac{1-\delta}{n}.$
Assume that $\ell, n$ are positive integers and $\ell\le n.$
Then
\begin{equation}
\label{eq:estaaybn}
\sum_{\ell=1}^{n}\frac{n}{\ell} \mathbb{E}\left[\left(|s\ell-\mathrm{Beta}(\ell,n-\ell+1)|^{+}\right)^a\right]=O\left(n^{1-a}\right).
\end{equation}
\end{Lemma}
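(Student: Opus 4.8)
\textbf{Proof proposal for Lemma \ref{lemma_f}.}

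The plan is to reduce the sum in \eqref{eq:estaaybn} to a geometric-series type estimate by exploiting that $s=\frac{1-\delta}{n}$ sits \emph{below} the mean $\frac{l}{n+1}$ of $\mathrm{Beta}(l,n-l+1)$ by a constant factor, so the positive part $|sl-\mathrm{Beta}(l,n-l+1)|^{+}$ is nonzero only on a lower tail event whose probability decays exponentially in $l$. First I would fix $l$ and bound the single term $\mathbf{E}\left[\left(|sl-\mathrm{Beta}(l,n-l+1)|^{+}\right)^a\right]$. Since $sl-\mathrm{Beta}(l,n-l+1)\le sl\le s n=1-\delta<1$ always, the integrand is at most $1$, and more precisely $\left(|sl-\mathrm{Beta}(l,n-l+1)|^{+}\right)^a\le (sl)^a\mathbf{1}[\mathrm{Beta}(l,n-l+1)<sl]$. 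Thus the $l$-th term is at most $(sl)^a\Pr[\mathrm{Beta}(l,n-l+1)<sl]$.

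Next I would estimate $\Pr[\mathrm{Beta}(l,n-l+1)<sl]$ using the binomial representation \eqref{probal_eq}: with $c:=l$, $d:=n-l+1$, $z:=sl=\frac{(1-\delta)l}{n}$ we get
\begin{equation}
\label{eq:fplan1}
\Pr[\mathrm{Beta}(l,n-l+1)<sl]=1-I_{sl}(l,n-l+1)=\sum_{j=0}^{l-1}\binom{n}{j}(sl)^j(1-sl)^{n-j}.
\end{equation}
This is exactly the probability that a $\mathrm{Binomial}(n,sl)$ variable is at most $l-1$; since its mean is $n\cdot sl=(1-\delta)l$, this is a lower-deviation event. I would invoke the Prohorov bound \eqref{eq:lecam} to pass to a Poisson$(nsl)=$Poisson$((1-\delta)l)$ tail, picking up only a $\sqrt{n/m_1}=O(\sqrt{l})$ (in fact $O(1)$ uniformly once $l$ is bounded away from $n$, and $O(\sqrt{n})$ in the worst case) prefactor, and then use a standard Chernoff estimate for the left tail of a Poisson: $\Pr[\mathrm{Poisson}(\mu)\le \beta\mu]\le e^{-\mu(1-\beta+\beta\ln\beta)}$ with $\mu=(1-\delta)l$ and $\beta=\frac{1}{1-\delta}>1$ — wait, here the deviation is to the \emph{right} of the Poisson mean for $j$ up to $l-1<l=\frac{\mu}{1-\delta}$, so I instead want the upper-tail form, which still gives $e^{-c_\delta l}$ for a constant $c_\delta>0$ depending only on $\delta$. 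Either way the outcome is $\Pr[\mathrm{Beta}(l,n-l+1)<sl]\le C\sqrt{l}\,e^{-c_\delta l}$ for absolute constants $C,c_\delta>0$.

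Finally I would assemble the sum:
\begin{equation}
\label{eq:fplan2}
\sum_{l=1}^{n}\frac{n}{l}\mathbf{E}\left[\left(|sl-\mathrm{Beta}(l,n-l+1)|^{+}\right)^a\right]\le \sum_{l=1}^{n}\frac{n}{l}(sl)^a C\sqrt{l}\,e^{-c_\delta l}= C s^a n\sum_{l=1}^{n} l^{a-1+1/2}e^{-c_\delta l}.
\end{equation}
The series $\sum_{l\ge1} l^{a-1/2}e^{-c_\delta l}$ converges to a constant depending only on $a,\delta$, and $s^a n=\frac{(1-\delta)^a}{n^a}\,n=(1-\delta)^a n^{1-a}$, giving the claimed $O(n^{1-a})$. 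I would also handle the boundary index $l=n$ (where the Prohorov hypothesis $n(1-z)-1<m_1\le n(1-z)$ needs $z$ bounded away from $1$, which holds since $z=sl\le sn=1-\delta$) separately if needed, but its contribution $\frac{n}{n}(sn)^a\Pr[\cdots]$ is already $O(1)$ and absorbed.

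The main obstacle I anticipate is making the tail bound \emph{uniform in $l$}: for small $l$ the Poisson mean $(1-\delta)l$ is $O(1)$, so the exponential decay $e^{-c_\delta l}$ is weak, but there the factor $(sl)^a=O(n^{-a})$ is correspondingly tiny and $\frac{n}{l}\le n$, so the term is $O(n^{1-a})$ times a summable tail — one must check the constants line up so that the \emph{whole} sum, not just each term, is $O(n^{1-a})$. For $l$ close to $n$ the Prohorov prefactor $\sqrt{n/m_1}$ could be as large as $\sqrt{n}$, but this is killed many times over by $e^{-c_\delta l}=e^{-c_\delta n}$; the genuinely delicate regime is intermediate $l$ of order $\log n$, where one should verify that $l^{a+1/2}e^{-c_\delta l}$ is already past its maximum and decaying. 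This parallels exactly the role played by Lemma \ref{lemma_e}, and I expect the two proofs to share most of their structure, the only difference being that here $s<\frac{1}{n}$ forces an \emph{upper}-tail Poisson estimate rather than the lower-tail one used for $\rho>\frac{1}{n}$.
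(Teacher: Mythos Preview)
Your overall strategy---bound the $l$-th summand by $(sl)^a\Pr[\mathrm{Beta}(l,n-l+1)<sl]$, convert the Beta CDF to a binomial tail, pass to a Poisson tail via Prohorov's inequality \eqref{eq:lecam}, and sum a geometrically decaying series---is exactly the route the paper takes. Two points need fixing.

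First, your binomial identity has the wrong tail. The CDF of $\mathrm{Beta}(l,n-l+1)$ is $I_z(l,n-l+1)$, not $1-I_z$, so \eqref{probal_eq} together with \eqref{eq:binole} gives
\[
\Pr[\mathrm{Beta}(l,n-l+1)<sl]=I_{sl}(l,n-l+1)=\sum_{j=l}^{n}\binom{n}{j}(sl)^j(1-sl)^{n-j}=\Pr[\mathrm{Bin}(n,sl)\ge l],
\]
an \emph{upper}-tail event for a binomial with mean $(1-\delta)l<l$. The sum $\sum_{j=0}^{l-1}$ you wrote is $\Pr[\mathrm{Bin}(n,sl)\le l-1]$, which tends to $1$, not to $0$; your own ``wait'' remark senses the inconsistency, but the displayed formula is what a reader will use, and as written it does not give exponential decay.

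Second, the Prohorov prefactor $\sqrt{n/m_1}$ is uniformly bounded: since $sl\le sn=1-\delta$ for every $l\le n$, one has $m_1\ge n\delta-1$ and hence $\sqrt{n/m_1}\le\sqrt{2/\delta}$ once $n>2/\delta$. There is no $\sqrt{l}$ or $\sqrt{n}$ factor to track, and the final sum is simply $s^a n\sum_{l\ge1}l^{a-1}e^{-c_\delta l}=O(n^{1-a})$. The paper avoids the black-box Chernoff step and instead bounds $\sum_{j\ge l}\frac{(nsl)^j}{j!}$ by splitting at $j=\lceil le\rceil$ and applying Stirling \eqref{eq:stirlingform} directly, arriving at the geometric base $(nse/e^{ns})=(1-\delta)e^{\delta}<1$; your Chernoff shortcut gives the same decay and is equally valid once the correct tail is in place.
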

\begin{proof}
The proof is given in Appendix B. 
\end{proof}
The following lemma will simplify the upper bound estimations in Section \ref{sec:one} and Section \ref{sec:two}.
\begin{Lemma}
\label{lem:cruciana}
Fix $a> 0.$ Assume that the sensor movement $M$ is the finite sum of movements $M_i$ for $i=1,2,\dots, l,$ i.e.
$M=\sum_{i=1}^{\ell}M_i.$
Then 
$$\mathbb{E}[M^a]\le C_{a,\ell}\sum_{i=1}^{\ell}\mathbb{E}[M_i^a],$$
where $C_{a,\ell}$ is some constant which depend only on fixed $a$ and $\ell.$
\end{Lemma}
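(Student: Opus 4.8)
\textbf{Proof proposal for Lemma~\ref{lem:cruciana}.}

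The plan is to reduce the statement to the classical power-mean (or convexity) inequality on the real line. Since $M = \sum_{i=1}^{l} M_i$ is a finite sum of $l$ terms, the first step is to observe that for any real numbers $x_1,\dots,x_l$ one has
$$\left(\sum_{i=1}^{l} x_i\right)^a \le l^{\max\{a-1,0\}} \sum_{i=1}^{l} |x_i|^a,$$
which follows by splitting into two cases: for $a\ge 1$ the function $t\mapsto t^a$ is convex on $[0,\infty)$, so Jensen's inequality applied to the uniform average gives $\left(\frac1l\sum |x_i|\right)^a \le \frac1l\sum |x_i|^a$, hence $\left(\sum |x_i|\right)^a \le l^{a-1}\sum |x_i|^a$; for $0<a<1$ the function $t\mapsto t^a$ is subadditive on $[0,\infty)$, so $\left(\sum|x_i|\right)^a \le \sum |x_i|^a$. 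In either case the constant $C_{a,l}:=l^{\max\{a-1,0\}}$ depends only on $a$ and $l$.

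The second step is simply to apply this pointwise inequality to the (nonnegative) movements. Since each $M_i\ge 0$ and $M=\sum_i M_i$, we get the pointwise bound $M^a \le C_{a,l}\sum_{i=1}^l M_i^a$; taking expectations and using linearity of $\mathbf{E}[\cdot]$ and monotonicity yields $\mathbf{E}[M^a]\le C_{a,l}\sum_{i=1}^l \mathbf{E}[M_i^a]$, which is exactly the claim. One should note that $M_i$ are random variables (functions of the initial sensor positions), so the inequality is first established sample-path-wise and then integrated; no independence or distributional assumptions are needed.

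There is essentially no obstacle here: the only point requiring a little care is the split between the convex regime $a\ge 1$ and the subadditive regime $0<a<1$, so that a single constant $C_{a,l}$ can be named uniformly. A clean way to avoid case analysis entirely is to write $C_{a,l}=\max\{1,l^{a-1}\}$ and verify the bound directly, or to cite the standard inequality $(\sum x_i)^a \le C_{a,l}\sum x_i^a$ for nonnegative reals. The lemma is purely a convenience device to let later proofs bound the $a$-th power of a multi-stage displacement by the sum of the $a$-th powers of its stages, so a short self-contained argument as above suffices.
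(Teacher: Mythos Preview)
Your proof is correct and follows essentially the same approach as the paper: split into the convex case $a\ge 1$ and the subadditive case $0<a<1$, obtain the pointwise inequality, and pass to expectations. The paper states the two-term versions $(x+y)^a\le 2^{a-1}(x^a+y^a)$ and $(x+y)^a\le x^a+y^a$ and then iterates, whereas you handle all $l$ terms at once and even name the explicit constant $C_{a,l}=l^{\max\{a-1,0\}}$, but the underlying argument is the same.
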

\begin{proof} 
Firstly we recall two elementary inequalities.

Fix $a\ge 1.$ Let $x,y \ge 0.$ Then
\begin{equation}
\label{eq:duszal}
(x+y)^a\le 2^{a-1}(x^a+y^a).
\end{equation}
Notice that Inequality (\ref{eq:duszal}) is the consequence of the fact that $f(x)=x^a$ is convex over $\mathbf{R_+}$ for $a\ge 1.$

Fix $a\in(0,1).$ Let $x,y \ge 0.$ Then
\begin{equation}
\label{eq:duszalb}
(x+y)^a\le x^a+y^a.
\end{equation}
Combining together Inequality (\ref{eq:duszal}) and Inequality (\ref{eq:duszalb}) for the sum $\sum_{i=1}^{\ell}M_i$ and passing to the expectations we derive 
$$\mathbb{E}[M^a]\le C_{a,\ell}\sum_{i=1}^{\ell}\mathbb{E}[M_i^a].$$
This proves Lemma \ref{lem:cruciana}.
\end{proof}
\section{Coverage \& interference requirement when the sensing radius $r_1=\frac{1}{2n}$ and the interference distance $s=\frac{1}{n}$}
\label{sec:one1}
In this section, we recall \textbf{the known results} about the expected $a$-total displacement to fulfill the
$(r_1,s)-C\&I$ \textit{requirement} when $n$ mobile sensors with identical sensing radius $r_1=\frac{1}{2n}$
are distributed uniformly at random and independently
on the unit interval $[0,1].$ That is, the sum of sensing area of $n$ sensors is \textbf{equal} to the \textbf{length of the unit interval.} Observe that in the case when
the sensing radius $r_1=\frac{1}{2n}$ and the interference distance $s=\frac{1}{n}$ 
the only way to achieve  $\left(r_1,s\right)$-\textit{coverage \& interference} requirement on the unit interval $[0,1]$ is for the sensors  to occupy the equidistant anchor positions
$\frac{i}{n} - \frac{1}{2n}$, for $i = 1, 2, \ldots , n.$
The following exact asymptotic result was proved in \cite{kapelkokranakisIPL}. 
\begin{theorem}[\cite{kapelkokranakisIPL}]
\label{thm:spaa}
Let $a$ be an even positive natural number.
Assume that, $n$ mobile sensors are thrown uniformly and independently at random on the unit interval $[0,1].$ The expected $a$-total displacement
of all $n$ sensors, when the $i$-th sensor sorted in increasing order moves 
from its current random location to the equidistant anchor location $\frac{i}{n} - \frac{1}{2n}$, for $i = 1, 2, \ldots , n$,  respectively, 
is $\frac{\left(\frac{a}{2}\right)!}{2^{\frac{a}{2}}(1+a)}n^{1-\frac{a}{2}}+O\left(n^{-\frac{a}{2}}\right).$
\end{theorem}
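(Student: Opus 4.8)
The plan is to compute $\mathbf{E}\left[\sum_{i=1}^n \left(X_{(i)}-\frac{2i-1}{2n}\right)^a\right]$ by exploiting the fact that $a$ is an even integer, so that $\left|X_{(i)}-\frac{2i-1}{2n}\right|^a = \left(X_{(i)}-\frac{2i-1}{2n}\right)^a$ and no absolute values obstruct the algebra. First I would recall that $X_{(i)}\sim\BetaD{i}{n+1-i}$, so $\E{X_{(i)}}=\frac{i}{n+1}$, and more generally the moments $\E{X_{(i)}^k}$ have closed forms in terms of ratios of factorials (rising factorials of the Beta parameters). Expanding $\left(X_{(i)}-\frac{2i-1}{2n}\right)^a$ by the binomial theorem reduces the problem to evaluating $\sum_{i=1}^n \E{X_{(i)}^k}\left(\frac{2i-1}{2n}\right)^{a-k}$ for $k=0,1,\dots,a$, which are finite sums over $i$ of polynomial-in-$i$ times rational expressions; summing over $i$ and extracting the leading asymptotics in $n$ is then a bookkeeping exercise.

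A cleaner route, which I would actually pursue, is to center at the mean first. Write $X_{(i)}-\frac{2i-1}{2n} = \left(X_{(i)}-\frac{i}{n+1}\right) + \left(\frac{i}{n+1}-\frac{2i-1}{2n}\right)$, where the second, deterministic term is $\frac{-2i + n + 1}{2n(n+1)} = O(1/n)$ uniformly in $i$. The central moments $\E{(X_{(i)}-\E{X_{(i)}})^k}$ of a Beta random variable are classical: $\var{X_{(i)}} = \frac{i(n+1-i)}{(n+1)^2(n+2)}$, which is $\Theta(1/n)$ for $i$ in the bulk, and the $k$-th central moment scales like $n^{-\lceil k/2\rceil}$. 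Expanding $\left(X_{(i)}-\frac{2i-1}{2n}\right)^a$ around the mean, the dominant contribution as $n\to\infty$ comes from the term $\binom{a}{a/2}$ (appropriately) pairing $a/2$ factors of the $\Theta(1/n)$-scale deterministic shift with $a/2$ factors contributing one variance each — actually the single dominant term is the one with the largest power of $n^{-1}$ that survives summation; a careful count shows it is $\E{(X_{(i)}-\E{X_{(i)}})^a}$-type contributions of order $n^{-a/2}$, and summing $n$ of them gives $n^{1-a/2}$. The constant $\frac{(a/2)!}{2^{a/2}(1+a)}$ should emerge: the $\frac{1}{2^{a/2}}$ and $(a/2)!$ from the leading central-moment asymptotics of a Beta (a near-Gaussian with variance $\sim \frac{x(1-x)}{n}$, so its $a$-th moment is $(a-1)!!\left(\frac{x(1-x)}{n}\right)^{a/2}$, and $(a-1)!! \cdot (\text{Gaussian normalization}) $ collapses via $\int_0^1 (x(1-x))^{a/2}dx = \mathrm{B}(a/2+1,a/2+1)$), and the $\frac{1}{1+a}$ from replacing the sum over $i$ by an integral $\int_0^1$ of the appropriate power.

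So the key steps, in order, are: (1) reduce to $\sum_i \E{\left(X_{(i)}-\frac{2i-1}{2n}\right)^a}$ since the movement is exactly this sum and $a$ even removes absolute values; (2) split each deviation into a mean-centered random part plus an $O(1/n)$ deterministic part; (3) binomially expand and identify which mixed moments dominate after summing over $i$; (4) evaluate the leading central moments of $\BetaD{i}{n+1-i}$ asymptotically, writing $x_i = i/n$ and using $\var{X_{(i)}}\approx \frac{x_i(1-x_i)}{n}$; (5) replace $\frac{1}{n}\sum_i g(i/n)$ by $\int_0^1 g(x)\,dx$ with the error absorbed into the $O(n^{-a/2})$ term, and compute $\int_0^1 (x(1-x))^{a/2}\,dx = \mathrm{B}(a/2+1, a/2+1) = \frac{((a/2)!)^2}{(a+1)!}$; (6) assemble the constant: $(a-1)!! \cdot \frac{((a/2)!)^2}{(a+1)!} \cdot (\text{factor from }2^{a/2}) = \frac{(a/2)!}{2^{a/2}(1+a)}$ after simplifying $(a-1)!! = \frac{a!}{2^{a/2}(a/2)!}$.

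The main obstacle I anticipate is step (3)–(4): controlling the full binomial expansion uniformly over all $i\in\{1,\dots,n\}$, including the boundary indices $i$ near $1$ or $n$ where $x_i(1-x_i)$ is small and the Beta distribution is skewed rather than near-Gaussian, and verifying that every non-dominant term (there are $O(1)$ of them, but each is a sum over $i$ of products of central moments and powers of the $O(1/n)$ shift) contributes only $O(n^{-a/2})$ to the total rather than something larger. This requires uniform bounds on the central moments of $\BetaD{i}{n+1-i}$ — e.g. $\E{|X_{(i)}-\E{X_{(i)}}|^k} = O(n^{-\lceil k/2\rceil})$ with constants independent of $i$ — which can be obtained from known formulas for Beta moments or from concentration (the variance bound $\var{X_{(i)}}\le \frac{1}{4(n+2)}$ is uniform), combined with the elementary Lemma~\ref{lem:cruciana} to break products of deviations into manageable pieces; the precise leading constant then follows from a careful Euler–Maclaurin or Riemann-sum estimate in step (5). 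Since Theorem~\ref{thm:spaa} is quoted from \cite{kapelkokranakisIPL}, I would present this as the natural derivation and defer the most tedious uniform-boundary estimates to that reference.
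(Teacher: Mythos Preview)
The paper does not supply its own proof of Theorem~\ref{thm:spaa}: it is quoted verbatim from \cite{kapelkokranakisIPL} and used as a black box in the proof of Theorem~\ref{thm:exact_one}. So there is no ``paper's proof'' to compare against here; you have in effect written a proof where the paper gives none.

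That said, your sketch is sound. The constant check is correct: with $a$ even, the dominant term is $\sum_i (a-1)!!\bigl(\tfrac{x_i(1-x_i)}{n}\bigr)^{a/2}$, and using $(a-1)!!=\tfrac{a!}{2^{a/2}(a/2)!}$ together with $\int_0^1(x(1-x))^{a/2}\,dx=\tfrac{((a/2)!)^2}{(a+1)!}$ does collapse to $\tfrac{(a/2)!}{2^{a/2}(a+1)}n^{1-a/2}$. The only place requiring genuine care is exactly the one you flag: uniform control of the lower-order central moments and the boundary indices so that every cross term in the binomial expansion lands in $O(n^{-a/2})$ after summation. This is routine but not automatic; the cleanest way is to use the exact factorial expressions for $\E{X_{(i)}^k}$ rather than the Gaussian heuristic, which is essentially what your ``first route'' proposes.

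For context, the paper remarks that \cite{kapelkokranakisIPL} proceeds by a ``direct combinatorial method'' --- i.e.\ your first route, expanding $(X_{(i)}-\tfrac{2i-1}{2n})^a$ via the binomial theorem and plugging in the closed-form rising-factorial moments of the Beta distribution, then summing over $i$. Your preferred centering-plus-Gaussian-moment approach is a legitimate alternative that makes the origin of the constant more transparent, at the cost of having to justify the error terms probabilistically rather than algebraically.
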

In \cite{fuchs2020}, Theorem \ref{thm:spaa} was extended to all real valued exponents $a>0.$
\begin{theorem}[\cite{fuchs2020}]
\label{thm:exact_one}
Fix $a>0.$ 
Assume that, $n$ mobile sensors are thrown uniformly and independently at random on the unit interval $[0,1].$ The expected $a$-total displacement
of all $n$ sensors, when the $i$-th sensor sorted in increasing order moves 
from its current random location to the equidistant anchor location $\frac{i}{n} - \frac{1}{2n}$, for $i = 1, 2, \ldots , n$,  respectively, 
is 
\begin{equation}
\label{eq:firsteq}
\frac{\Gamma\left(\frac{a}{2}+1\right)}{2^{\frac{a}{2}}(1+a)}n^{1-\frac{a}{2}}+O\left(n^{-\frac{a}{2}}\right).
\end{equation}
\end{theorem}
The gamma function $\Gamma(a)$ is defined to be an extension of the factorial to real number arguments. It is related to the factorial 
by $\Gamma\left(\frac{a}{2}+1\right)=\left(\frac{a}{2}\right)!$ provided that $\frac{a}{2}\in{\mathbb N}.$
It is also worthwhile to mention that, the extension of direct combinatorial method from \cite{kapelkokranakisIPL}
leads to exact asymptotic result in Theorem \ref{thm:exact_one} only when $a$ is an odd natural number (see \cite[Theorem 2]{kapelko_dmaa}).

\section{Coverage \& interference requirement when the square sensing radius $r_2=\frac{1}{2\sqrt{n}}$ and the interference distance $s=\frac{1}{\sqrt{n}}$}
\label{sec:alpha}
In this section, we analyze the expected $a$-total displacement to achieve
$(r_2,s)-C\&I$ \textit{requiremnt} when $n$ mobile sensors with identical square sensing radius $r_2=\frac{1}{2\sqrt{n}}$
are thrown uniformly at random and independently on the unit square $[0,1]^2,$ provided that $n$ is the square of a natural number.
That is, the sum of sensing area of $n$ sensors is \textbf{equal} to the \textbf{area of unit square.}

Observe that to fulfill $\left(\frac{1}{\sqrt{n}},\frac{1}{2\sqrt{n}}\right)$-\textit{coverage \& interference} requirement the sensors have to occupy the following anchor positions
$\left(\frac{k}{\sqrt{n}}-\frac{1}{2\sqrt{n}}, \frac{l}{\sqrt{n}}-\frac{1}{2\sqrt{n}}\right),$
where $1\le k,l\le \sqrt{n}$ and $n$ must be \textit{the square of a natural number.} 

It is known that expected $1$-total displacement in this case is $\Theta\left(\sqrt{\ln(n)n}\right).$
Namely, the following theorem about \textit{the optimal transportation cost for random matching} \textbf{was obtained} in \cite{talagrand_2014} a book related to these problems which develops modern methods to bound stochastic processes.
\begin{theorem}[\cite{talagrand_2014}, Chapter 4.3]  
\label{thm:tal}
Let $n=q^2$ for some $q\in\mathbb{N}.$ Assume that $n$  mobile sensors $X_1,X_2,\dots,X_n$ are thrown uniformly and independently at random on the unit square $[0,1]^2.$ 
Consider the non-random points $(Z_i)_{i\le n}$ evenly distributed as follows:
$Z_i=\left(\frac{k}{\sqrt{n}}-\frac{1}{2\sqrt{n}}, \frac{l}{\sqrt{n}}-\frac{1}{2\sqrt{n}}\right),$ where $1\le k,l\le \sqrt{n},\,\, $   $i=k\sqrt{n}+l.$
Then
$$\mathbb{E}\left({\min_{\pi}\sum_{i=1}^{n}d\left(X_i,Z_{\pi(i)}\right)}\right)=\Theta\left(\sqrt{\ln(n) n}\right), $$
where the infimum is over all permutations of $\{1,2,\dots,n\}$ and where $d$ is the Euclidean distance.
\end{theorem}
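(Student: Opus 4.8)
The plan is to treat this as the planar Ajtai--Koml\'os--Tusn\'ady bipartite matching theorem and to organise the argument around the optimal--transport reformulation of the quantity in question. First I would observe that $\inf_{\pi}\sum_{i\le n}d(X_i,Z_{\pi(i)})$ is the value of a minimum--weight bipartite perfect matching, so by the Birkhoff--von Neumann theorem (the vertices of the Birkhoff polytope are the permutation matrices) it equals $n\,W_1(\mu_n,\nu_n)$, where $\mu_n=\tfrac1n\sum_{i\le n}\delta_{X_i}$, $\nu_n=\tfrac1n\sum_{i\le n}\delta_{Z_i}$, and $W_1$ is the Kantorovich distance for the Euclidean cost. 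It then suffices to prove $\mathbf{E}\left[n\,W_1(\mu_n,\nu_n)\right]=\Theta(\sqrt{n\ln n})$. Throughout, for $j\ge 0$ and a dyadic subsquare $Q$ of $[0,1]^2$ of side $2^{-j}$ set $D_{j,Q}=\#\{i:X_i\in Q\}-\#\{i:Z_i\in Q\}$. Since the grid counts are deterministic and $\#\{i:X_i\in Q\}$ is $\mathrm{Binomial}(n,4^{-j})$, one has $\mathbf{E}\,D_{j,Q}^2=O(n4^{-j})$ whenever $n4^{-j}\ge 1$, and conversely $\mathbf{E}\,|D_{j,Q}|=\Omega(\sqrt{n4^{-j}})$ in the same regime by the central limit theorem (the deterministic shift between the two counts is itself $O(\sqrt{n4^{-j}})$ and cannot kill the binomial fluctuation, since $s\mapsto\mathbf{E}\,|W+s|$ is minimised near $s=0$ for the centred binomial deviation $W$).

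For the upper bound I would exhibit an explicit transport plan by a greedy dyadic scheme: move mass between sibling squares so that, after scale $j$ has been processed, $\mu_n$ and $\nu_n$ agree on every dyadic square of side $2^{-j}$; balancing the four children of one parent at scale $j$ costs at most $2^{-j+1}$ per unit of mass moved and moves at most $\tfrac1n\sum_{Q\subset\text{parent}}|D_{j+1,Q}|$ units. Stopping at $J=\lceil\tfrac12\log_2 n\rceil$, below which the residual transport inside each $2^{-J}$-cell costs $O(2^{-J})$ in $W_1$, i.e. $O(\sqrt n)$ in $n\,W_1$, gives
\[
n\,W_1(\mu_n,\nu_n)\le C\sqrt n+\sum_{j=0}^{J}2^{-j}\sum_{\mathrm{side}(Q)=2^{-j}}|D_{j,Q}|.
\]
Taking expectations and using $\mathbf{E}\sum_{\mathrm{side}(Q)=2^{-j}}|D_{j,Q}|\le 4^{j}\sqrt{n4^{-j}}=2^{j}\sqrt n$ yields $\sum_{j\le J}2^{-j}\cdot 2^{j}\sqrt n=(J+1)\sqrt n=\Theta(\sqrt{n\ln n})$, hence $\mathbf{E}\left[n\,W_1\right]=O(\sqrt{n\ln n})$.

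For the lower bound I would pass to the Kantorovich dual $W_1(\mu_n,\nu_n)=\sup\{\int f\,d(\mu_n-\nu_n):f\text{ is }1\text{-Lipschitz}\}$ and test it against a multiscale, data--dependent competitor $f=c\sum_{j\le J}\sum_{Q}\varepsilon_{j,Q}\,\varphi_{j,Q}$, where $\varphi_{j,Q}$ is a fixed bump of amplitude $\asymp 2^{-j}$ supported in the dyadic square $Q$ of side $2^{-j}$ and $\varepsilon_{j,Q}=\mathrm{sign}\!\left(\int\varphi_{j,Q}\,d(\mu_n-\nu_n)\right)$ is chosen to align with the local discrepancy (which is of order $2^{-j}|D_{j,Q}|/n$). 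Bumps at a common scale have disjoint supports and are $O(1)$-Lipschitz, and $\int f\,d(\mu_n-\nu_n)=c\sum_{j\le J}\sum_{Q}\bigl|\int\varphi_{j,Q}\,d(\mu_n-\nu_n)\bigr|$, whose expectation is of order $\tfrac1n\sum_{j\le J}2^{-j}\cdot 4^{j}\sqrt{n4^{-j}}=(J+1)/\sqrt n\asymp \ln n/\sqrt n$. The main obstacle --- and this is precisely the delicate heart of the AKT theorem --- is to control the Lipschitz constant of the sum \emph{over all $\Theta(\ln n)$ scales}: the supports of bumps at different scales are nested, so a priori $|\nabla f|$ can reach $\Theta(\ln n)$ on a set that is not negligible (the nested signs $\mathrm{sign}(D_{j,Q_j(x)})$ behave like the signs of a random walk and, by the arcsine phenomenon, can align over long runs of scales), which would only reproduce the trivial bound $\Omega(\sqrt n)$. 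Overcoming this requires showing that, after an appropriate smoothing and renormalisation, the effective Lipschitz constant is $O(\sqrt{\ln n})$ --- equivalently, running Talagrand's generic--chaining bound for the matching functional --- after which $W_1\gtrsim (\ln n/\sqrt n)/\sqrt{\ln n}=\sqrt{\ln n/n}$ and thus $n\,W_1=\Omega(\sqrt{n\ln n})$. I would therefore carry out this last step exactly as in \cite[Chapter 4.3]{talagrand_2014}. As a check on the exponent one may also invoke the $L^2$ linearisation $W_2^2\asymp\int|\nabla g|^2$ with $-\Delta g=\mu_n-\nu_n$ (Neumann boundary conditions), for which $\mathbf{E}\int|\nabla g|^2\asymp \ln n/n$ by the logarithmic Green's function of the square together with a $\tfrac1n\ln(\text{point spacing})$ diagonal term; this re-derives the upper bound through $W_1\le W_2$ and pins down the conjectural leading constant.
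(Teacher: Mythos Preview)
The paper does not give a proof of this theorem: it is stated as a citation of a known result from Talagrand's book (Chapter~4.3) and then used as a black box to obtain the $a>1$ extension that follows. Your sketch is a faithful high-level outline of precisely the argument in that cited source --- the dyadic greedy transport for the upper bound, and the Kantorovich-dual multiscale Lipschitz test function with the crucial $\sqrt{\ln n}$ gradient control (a step you yourself explicitly defer to Talagrand) for the lower bound --- so there is nothing in the paper itself to compare against; you have essentially reconstructed the reference the paper invokes rather than offered an alternative route.
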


We are now ready \textbf{to extend} Theorem \ref{thm:tal} to the displacement to the power $a$ provided that $a>1.$
\begin{theorem}  
\label{thm:tal_a}
Fix $a> 1.$ 
Let $n=q^2$ for some $q\in\mathbb{N}.$ Assume that $n$  mobile sensors $X_1,X_2,\dots,X_n$ are thrown uniformly and independently at random on the unit square $[0,1]^2.$ 
Consider the non-random points $(Z_i)_{i\le n}$ evenly distributed as follows:\\
$Z_i=\left(\frac{k}{\sqrt{n}}-\frac{1}{2\sqrt{n}}, \frac{l}{\sqrt{n}}-\frac{1}{2\sqrt{n}}\right),$ where $1\le k,l\le \sqrt{n},\,\, $   $i=k\sqrt{n}+l.$
Then
$$
\mathbb{E}\left({\min_{\pi}\sum_{i=1}^{n}d^a\left(X_i,Z_{\pi(i)}\right)}\right)=\Omega\left((\ln(n))^{\frac{a}{2}}n^{1-\frac{a}{2}}\right),
$$
where the infimum is over all permutations of $\{1,2,\dots,n\}$ and where $d$ is the Euclidean distance.
\end{theorem}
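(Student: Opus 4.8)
The plan is to obtain the bound as a soft consequence of Theorem \ref{thm:tal}, with no new geometric input: the whole argument is a comparison between the $a$-th power optimal matching cost and the first power one, carried out via two applications of convexity (the power-mean inequality inside the sum, and Jensen's inequality for the expectation).

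First I would fix a realization of the sensors $X_1,\dots,X_n$ and an arbitrary permutation $\pi$ of $\{1,\dots,n\}$. Since $a>1$, the power-mean inequality applied to the nonnegative reals $d\left(X_i,Z_{\pi(i)}\right)$ gives
$$\sum_{i=1}^{n} d^a\!\left(X_i,Z_{\pi(i)}\right)\ \ge\ n^{1-a}\left(\sum_{i=1}^{n} d\!\left(X_i,Z_{\pi(i)}\right)\right)^{a}.$$
Because $t\mapsto t^a$ is increasing on $[0,\infty)$, the right-hand side is minimized over $\pi$ by exactly the permutation that minimizes $\sum_{i}d\left(X_i,Z_{\pi(i)}\right)$, so taking the infimum over permutations on both sides yields
$$\inf_{\pi}\sum_{i=1}^{n} d^a\!\left(X_i,Z_{\pi(i)}\right)\ \ge\ n^{1-a}\left(\inf_{\pi}\sum_{i=1}^{n} d\!\left(X_i,Z_{\pi(i)}\right)\right)^{a}.$$

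Next I would take expectations and use Jensen's inequality for the convex function $t\mapsto t^a$ (valid since $a\ge1$ and the random variable $\inf_{\pi}\sum_{i}d\left(X_i,Z_{\pi(i)}\right)$ is nonnegative and bounded by $\sqrt{2}n$, so integrability is immediate):
$$\mathbf{E}\!\left[\inf_{\pi}\sum_{i=1}^{n} d^a\!\left(X_i,Z_{\pi(i)}\right)\right]\ \ge\ n^{1-a}\left(\mathbf{E}\!\left[\inf_{\pi}\sum_{i=1}^{n} d\!\left(X_i,Z_{\pi(i)}\right)\right]\right)^{a}.$$
Finally, substituting the lower bound $\mathbf{E}\!\left[\inf_{\pi}\sum_{i}d\left(X_i,Z_{\pi(i)}\right)\right]=\Omega\!\left(\sqrt{\ln(n)\,n}\right)$ from Theorem \ref{thm:tal} and simplifying the exponents,
$$\mathbf{E}\!\left[\inf_{\pi}\sum_{i=1}^{n} d^a\!\left(X_i,Z_{\pi(i)}\right)\right]=\Omega\!\left(n^{1-a}\left(\sqrt{\ln(n)\,n}\right)^{a}\right)=\Omega\!\left((\ln n)^{\frac{a}{2}}\,n^{1-\frac{a}{2}}\right),$$
which is the claimed estimate.

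I do not expect a genuine obstacle here. The two points requiring a line of care are: (i) that the infimum of $\big(\sum_i d(X_i,Z_{\pi(i)})\big)^a$ over $\pi$ equals the $a$-th power of the first-power optimum — immediate from monotonicity of $t\mapsto t^a$ on $[0,\infty)$; and (ii) the applicability of Jensen's inequality, which holds because the matching cost is a bounded nonnegative random variable and $t\mapsto t^a$ is convex for $a\ge1$. Note also that, unlike Theorem \ref{thm:exact_one}, the argument does not distinguish even from odd or non-integer $a$; it uses only $a>1$.
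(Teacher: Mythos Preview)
Your proposal is correct and follows essentially the same route as the paper's own proof: both combine a convexity inequality comparing $\sum d^a$ with $\bigl(\sum d\bigr)^a$, pass to the infimum over permutations, apply Jensen's inequality for expectations, and then invoke Theorem~\ref{thm:tal}. The only cosmetic difference is that the paper fixes the optimal permutation $\pi^\star$ for $T^{(a)}$ and applies the discrete H\"older inequality, whereas you apply the equivalent power-mean inequality for an arbitrary $\pi$ and then take the infimum; these are the same computation in slightly different packaging.
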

\begin{proof} (Theorem \ref{thm:tal_a})
Let $\pi^{\star}\in S_n$ be a permutation with
$$
T^{(b)}=\sum_{i=1}^{n}d^b\left(X_i,Z_{\pi^{\star}(i)}\right)=\inf_{\pi\in S_n}\sum_{i=1}^{n}d^b\left(X_i,Z_{\pi(i)}\right),\,\,\,1\le b <\infty
$$
where $S_n$ is the set of all permutations of the numbers $1,2,\dots, n.$\\
Fix $a>1.$ Applying discrete H\"older inequality we get
$$
\sum_{i=1}^nd\left(X_i, Z_{\pi^{\star}(i)}\right)\le\left(\sum_{i=1}^n d^a\left(X_i, Z_{\pi^{\star}(i)}\right)\right)^{\frac{1}{a}}\left(\sum_{i=1}^n 1\right)^{\frac{a-1}{a}}.
$$
Hence
$$
\left(T^{(1)}\right)^a\le T^{(a)}n^{a-1}.
$$
Passing to the expectations and using  Jensen inequality for $X:=T^{(1)}$ and $f(x)=x^a$ 
we get the following estimation
\begin{equation}
\label{eq:lastese}
\left(\mathbb{E}\left({T^{(1)}}\right)\right)^a\le\mathbb{E}\left({T^{(a)}}n^{a-1}\right).
\end{equation}
Putting together Theorem \ref{thm:tal} and inequality (\ref{eq:lastese}) we obtain
$$
\mathbb{E}\left({T^{(a)}}\right)\ge n^{1-a}\left(\Theta\left(\sqrt{\ln(n) n}\right)\right)^a=\Theta\left((\ln(n))^{\frac{a}{2}}n^{1-\frac{a}{2}}\right).
$$
Therefore 
$$
\mathbb{E}\left({\inf_{\pi}\sum_{i=1}^{n}d^a\left(X_i,Z_{\pi(i)}\right)}\right)=\Omega\left((\ln(n))^{\frac{a}{2}}n^{1-\frac{a}{2}}\right).
$$
This completes the proof of Theorem \ref{thm:tal_a}.
\end{proof}
%
\section{Coverage \& interference requirement when the sensing radius $r_1>\frac{1}{2n}$ and the interference distance $s<\frac{1}{n}.$}
\label{sec:one}
In this section, we analyze the expected $a$-total displacement to fulfill
$(r_1,s)-C\&I$ \textit{requirement} when $n$ mobile sensors with identical sensing radius $r_1>\frac{1}{2n}$
are distributed uniformly at random and independently
on the unit interval $[0,1].$ That is, the sum of sensing area of $n$ sensors is \textbf{greater} than the \textbf{length of the unit interval.}
\subsection{Analysis of Algorithm \ref{alg_left}}
\begin{algorithm}
\caption{$MV(n,\rho, s)$ Moving sensors on $[0,1].$  } 
\label{alg_left}
\begin{algorithmic}[1]
 \REQUIRE The initial locations of $n$ mobile sensors, placed uniformly and independently at random on the unit interval $[0,1].$
 \ENSURE  The final positions of the sensors such that: 
 \begin{enumerate}
 \item[(i)] The distance between consecutive sensors 
is greater than or equal  to $s$ and less than or equal to $\rho.$ 
\item[(ii)]
The leftmost sensor is at a distance less than or equal to $\frac{\rho}{2}$ from the origin.
\end{enumerate}
\INITIALIZATION Sort the initial locations of $n$  sensors with respect to the origin of the interval, the location and sensors after sorting $X_{(1)}\le X_{(2)}\le\dots\le X_{(n)};$
  \STATE{Let $X_0=0;$}
 \FOR{$i=1$  \TO $n$ } 
 \IF{$X_{(i)}-X_{(i-1)}<s$}
 \STATE{move left to right the sensor $X_{(i)}$ to the new position $\min\left(s+X_{(i-1)},1\right);$}
 \ELSIF{$X_{(i)}-X_{(i-1)}>\rho$}
 \STATE{move right to left the sensor $X_{(i)}$ to the new position $\rho+X_{(i-1)};$}
 \ELSE
 \STATE{do nothing;}
 \ENDIF
 \ENDFOR
 \IF{$X_{(1)}>\frac{1}{2}\rho$}
 \STATE {$z:=X_{(1)}-\frac{1}{2}\rho;$}
 \FOR{$i=1$  \TO $n$ } 
 \STATE{move right to left the sensor $X_{(i)}$ to the new position $X_{(i)}-z;$}
 \ENDFOR
 \ENDIF
\end{algorithmic}
\end{algorithm}
Fix $a>0.$ Let $\gamma>0$, $1>\delta>0$ be arbitrary small constants independent on the number of sensors $n$ and let  $\rho= \frac{1+\gamma}{n},\,\,$ $s=\frac{1-\delta}{n}.$

This subsection is concerned with reallocating of the $n$ random sensors within the unit interval to achieve only the following property:
\begin{itemize}
\item The distance between consecutive sensors is greater than or equal  to $s$ and less than or equal to $\rho.$ 
\item The first leftmost sensor is at a distance less than or equal to $\frac{\rho}{2}$ from the origin.
\end{itemize}
We present basic and energy efficient algorithm $MV(n,\rho, s)$ (see Algorithm \ref{alg_left}). 
Theorem \ref{thm:covere} states that the expected $a$-total displacement
of algorithm $MV(n,\rho,s)$ 
is in
$
O\left(n^{1-a}\right)$
when
$\rho=\frac{1+\gamma}{n}$ and $s= \frac{1-\delta}{n}.$
Algorithm \ref{alg_left} is very simple but the asymptotic analysis is not totally trivial. We note that asymptotic analysis of Algorithm \ref{alg_left} is crucial
in deriving the threshold phenomena.

In the proof of Theorem \ref{thm:covere} we combine combinatorial techniques with properties of the Beta distribution 
(see Equation (\ref{eq:estaay}) in Lemma \ref{lemma_e} and Equation (\ref{eq:estaaybn}) in Lemma \ref{lemma_f}). 
The estimations for Beta distribution with special positive integers
parameters in Lemma \ref{lemma_e} and Lemma \ref{lemma_f} are new to the best of the author's knowledge.

Before starting the proof of Theorem \ref{thm:covere}, we briefly discuss one technical issue in the steps $(3)$-$(4)$ of Algorithm \ref{alg_left}. 
It may happen that for some initial random location of $n$ sensors $X_{(1)}\le X_{(2)}\le\dots \le X_{(n)}$ Algorithm \ref{alg_left} moves
some sensors to the right endpoint of the interval $[0,1].$
Namely, there exists $l_0\in\mathbb{N_{+}}$ with the following property
$X_{(i)}$ moves to some point in $[0,1)$  for all $i=1,2,\dots,l_0$ and $X_{(i)}$ moves to the right endpoint of the interval $[0,1]$ for all $i=l_0+1,l_0+2,\dots,n.$
 Let $Y_1,Y_2,\dots ,Y_n$ be the location of $n$ sensors $X_{(1)}\le X_{(2)}\le \dots\le X_{(n)}$ after Algorithm \ref{alg_left}.
Then to avoid interference to achieve the property that \textit{the distance between consecutive sensors is greater than or equal to $s$}, we have to deactivate some sensors.
Namely, 
\begin{itemize}
\item if $1-Y_{l_0}< s$ then  for all $i=l_0+1,l_0+2,\dots,n$ the sensors $X_{(i)}$ will not sense any longer,
 \item if $1-Y_{l_0}\ge s$ then for all $i=l_0+2,l_0+3,\dots,n$ the sensors $X_{(i)}$ will not sense any longer.
\end{itemize}

We are now ready to give the proof of Theorem \ref{thm:covere}.

\begin{theorem}
 \label{thm:covere}
Let $a> 0$ be a constant.  Fix $\gamma>0$, $1>\delta>0$ independent on the number of sensors $n.$ 
Assume that $n$ mobile sensors are thrown uniformly and independently at random on the unit interval $[0,1].$
Then Algorithm \ref{alg_left} for $\rho=\frac{1+\gamma}{n}$ and $s=\frac{1-\delta}{n}$ 
reallocates the random sensors within the unit interval so that:
\begin{itemize}
\item[(i)] The distance between consecutive sensors 
is greater than or equal  to $s$ and less than or equal to $\rho.$  
\item[(ii)] The leftmost sensor is at a distance less than or equal to $\frac{\rho}{2}$ from the origin.
\item[(iii)] The expected $a$-total displacement is $O\left(n^{1-a}\right).$ 
\end{itemize}
\end{theorem}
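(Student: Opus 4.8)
The plan is to analyze the three phases of Algorithm~\ref{alg_left} separately and then combine them via Lemma~\ref{lem:cruciana}. Correctness of properties (i) and (ii) is essentially a loop invariant: after the main \texttt{for} loop, an easy induction on $i$ shows $s\le X_{(i)}-X_{(i-1)}\le\rho$ (the three branches handle the $<s$, $>\rho$, and in-between cases, and the $\min(\cdot,1)$ clamp only matters in a negligible event handled below), and the final \texttt{if} block shifts everything left by $z=X_{(1)}-\rho/2$ so that the leftmost sensor lands exactly at $\rho/2\le\rho/2$ while preserving all consecutive gaps. One must check the shift does not push any sensor below $0$: since $X_{(1)}>\rho/2$ forces $z>0$, and after the loop $X_{(n)}\le X_{(1)}+(n-1)\rho$, but we actually need $z\le X_{(1)}$, which is immediate as $z=X_{(1)}-\rho/2<X_{(1)}$.

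For the cost bound, write the total displacement $M$ as $M_{\mathrm{loop}}+M_{\mathrm{shift}}$ and, within the loop, bound the displacement of the $i$-th sensor. The key observation is that the cumulative displacement incurred processing sensor $i$ in the main loop is controlled by how far $X_{(i)}$ is from the ``anchor window'' $[si,\rho i]$: if $X_{(i)}$ already satisfies both gap constraints relative to the (possibly moved) predecessor it does not move, and otherwise it is pushed to the boundary of that window, so its displacement is at most $\bigl(|si-X_{(i)}|^{+}\bigr)+\bigl(|X_{(i)}-\rho i|^{+}\bigr)$ plus accumulated earlier pushes. Telescoping the at-most-$n$ pushes and using Lemma~\ref{lem:cruciana} to split the $a$-th power of a sum of $l$ terms, the expected loop cost is bounded by a constant times
$$
\sum_{l=1}^{n}\mathbf{E}\!\left[\bigl(|\mathrm{Beta}(l,n-l+1)-\rho l|^{+}\bigr)^a\right]
+\sum_{l=1}^{n}\mathbf{E}\!\left[\bigl(|sl-\mathrm{Beta}(l,n-l+1)|^{+}\bigr)^a\right],
$$
since $X_{(l)}$ is distributed as $\mathrm{Beta}(l,n-l+1)$. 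Each summand is $O(n^{-a})$ termwise (the $\frac{n}{l}$-weighted versions in Lemma~\ref{lemma_e}, Eq.~(\ref{eq:estaay}), and Lemma~\ref{lemma_f}, Eq.~(\ref{eq:estaaybn}), are even stronger than what is needed here, giving $O(n^{1-a})$ for the weighted sums), so the unweighted sums are $O(n\cdot n^{-a})=O(n^{1-a})$.

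For $M_{\mathrm{shift}}$: every sensor moves by the same $z=|X_{(1)}-\rho/2|^{+}$, so $M_{\mathrm{shift}}^{a}=n\,z^{a}\le n\bigl(|X_{(1)}-\rho l|^{+}\big|_{l=1}+\rho/2\bigr)^{a}$; expanding by Lemma~\ref{lem:cruciana} and using $\mathbf{E}[(|\mathrm{Beta}(1,n)-\rho|^{+})^a]=O(n^{-a})$ from Eq.~(\ref{eq:estaax}) together with $\rho=\Theta(1/n)$ gives $\mathbf{E}[M_{\mathrm{shift}}^a]=O(n\cdot n^{-a})=O(n^{1-a})$. Finally, the clamp $\min(s+X_{(i-1)},1)$ only activates when the sensors would overflow past $1$, i.e. on the event $\{X_{(n)}<1-\text{(room)}\}$-type complement; one shows $\sum_i (\rho i - X_{(i)})$ stays below the available slack except on an event of probability $O(e^{-\Omega(n^{c})})$ (handled by Lemma~\ref{lemma_first}-style tail bounds, or simply by noting the overflow contributes at most $n\cdot 1^a=n$ times an exponentially small probability, which is $o(n^{1-a})$). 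Combining the three bounds via Lemma~\ref{lem:cruciana} yields $\mathbf{E}[M^a]=O(n^{1-a})$, proving (iii). The main obstacle is the bookkeeping in the telescoping argument: the predecessor $X_{(i-1)}$ in the algorithm is the \emph{already-moved} position, not the order statistic, so one must argue that the moved position stays sandwiched between $s(i-1)$ and $\rho(i-1)$-type bounds (again a loop invariant) in order to replace it cleanly by deterministic anchors and reduce everything to the Beta-moment estimates of Lemmas~\ref{lemma_e} and~\ref{lemma_f}.
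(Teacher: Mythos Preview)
Your overall structure (split into loop phase and shift phase, then combine via Lemma~\ref{lem:cruciana}) matches the paper, and your treatment of the shift phase is fine, though the paper is more direct: after the loop the first sensor sits in $[0,\rho]$, so each sensor moves at most $\rho/2$ in the shift, giving $n(\rho/2)^a=O(n^{1-a})$ immediately with no appeal to Beta moments.

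The genuine gap is in your loop analysis. The claim that the $i$-th displacement is at most $\bigl(|si-X_{(i)}|^{+}\bigr)+\bigl(|X_{(i)}-\rho i|^{+}\bigr)$ (plus an unspecified ``accumulated pushes'' term that then silently disappears from your sum) is false. Concretely, with $\gamma=\delta=0.1$ take $X_{(1)}=1.0/n$ (so sensor~1 stays put) and $X_{(2)}=1.05/n$; then sensor~2 is pushed right to $Y_2=Y_1+s=1.9/n$, a displacement of $0.85/n$, whereas your bound gives $|2s-X_{(2)}|^{+}+|X_{(2)}-2\rho|^{+}=0.75/n$. The reason is that the predecessor in the algorithm is the \emph{already-moved} $Y_{i-1}$, not the anchor $s(i-1)$; knowing only $Y_{i-1}\in[s(i-1),\rho(i-1)]$ does not pin $Y_i$ near $X_{(i)}$ when $X_{(i)}\in[si,\rho i]$. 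Your ``telescoping'' remark does not repair this: you never write down what telescopes, and the sum you actually bound contains no term accounting for the accumulated push.

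The paper's argument is structurally different and this difference is essential. It observes that the loop partitions $\{1,\dots,n\}$ into maximal \emph{runs} of sensors moving in the same direction, separated by sensors that either stay or switch direction. If the run containing $X_{(j)}$ has length $l$ and starts just after a sensor at position $X_{(j-l)}$ that did not move (or moved the opposite way), then the displacement of $X_{(j)}$ is bounded by $|X_{(j)}-X_{(j-l)}-\rho l|^{+}$ (right-to-left run) or $|sl-(X_{(j)}-X_{(j-l)})|^{+}$ (left-to-right run). Since the \emph{spacing} $X_{(j)}-X_{(j-l)}$ is again $\mathrm{Beta}(l,n-l+1)$, the Beta-moment lemmas apply, but now with the run length $l$ as the parameter rather than the absolute index $i$. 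The double sum over runs and positions within runs is then controlled by the counting observation that a given run length $l$ can occur at most $n/l$ times (because the run lengths sum to at most $n$), which is precisely why Lemmas~\ref{lemma_e} and~\ref{lemma_f} are stated with the weight $n/l$. Your proposal anchors each sensor at the absolute targets $si,\rho i$ instead of at the last unmoved predecessor, and this is where the inequality breaks.
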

Notice that Theorem \ref{thm:covere} is valid regardless of the \textit{sensing radius}, it depends only on the fact that the relocated sensors are not too far.
\begin{proof}
Let  $\rho=\frac{1+\gamma}{n}$ and $s=\frac{1-\delta}{n},$ provided that $\gamma>0$, $1>\delta>0$ are arbitrary small constants independent on the number of sensors $n.$
Notice that Algorithm (\ref{alg_left}) is in two phases. During the first phase (see steps ($(1)$-$(10)$)) we reallocate the sensors so that the 
distance between consecutive sensors is greater than or equal  to $s$ and less than or equal to $\rho.$  In the second phase (see steps ($(11)$-$(16)$))
we reallocate the sensors to achieve the additional property the first leftmost sensor is in the distance less than or equal to $\frac{\rho}{2}$ from the origin.

Hence the properties (i) and (ii) hold and thus Algorithm 1 is \textbf{correct.}

We now estimate the expected $a$-total displacement of the algorithm.
 \\ 
 
\noindent
\textbf{First Phase} $\,\,\,\,$ The steps $(1)$-$(10)$ of Algorithm \ref{alg_left}\\
The main idea of the proof is simple.
Algorithm  \ref{alg_left} produces a sequence of moves for $X_{(i)}$ which consists of left moves (say $L$), right moves (say $R$) or no move at all (say $U$). 
Now, the idea of the proof is to chop the resulting set of moves into a run of $L$ followed by a run of $R$ followed by a run of $U$, etc. (Here runs might be empty as well.) Using this, 
we give an upper bound on the total displacement (namely the bound (\ref{eq:dusas})) whose expectation is then bounded. 
\\ 
Notice that exist $i\in\{1,2,\dots, n\}$ such that Algorithm \ref{alg_left} leaves the sensors $X_{(1)},
X_{(2)},\dots$ $X_{(i-1)}$  at the same positions.
(Here for $i=1$ Algorithm moves the sensor $X_{1}$)
Then the steps ($(1)$-$(10)$) of Algorithm \ref{alg_left} are the sequence of the two phases: $A$ and $B.$ During phase $A,$ Algorithm \ref{alg_left}
moves the sensors $X_{(i+1)},X_{(i+2)},$ $\dots X_{(i+p)}$ at the new positions. 
$k\in\{1,2,\dots, n\}.$ 
Then in phase $B,$ Algorithm \ref{alg_left}  leaves the sensors $X_{(i+p+1)},X_{(i+p+2)},$ $\dots X_{(i+p+k)}$
at the same positions. (Here phase $B$ might not exist and  Algorithm \ref{alg_left} moves the sensors $X_{(i+1)},X_{(i+2)},$ $\dots X_{(n)}$).

To better illustrate analysis, let us consider the following example.
Consider the phase $A$ as specified above. Let $p=p_1+p_2$ for some $p_1, p_2\in\mathbb{N_+}.$
\begin{enumerate}
 \item The sensors $X_{(i+1)}, X_{(i+2)},\dots X_{(i+p_1)}$ move right to left. 
 Observe that the sensors $X_{(i+1)},X_{(i+2)},\dots X_{(i+p_1)}$ have to move cumulatively, namely
 for $\ell=1,2,\dots, p_1$ the sensor $X_{(i+\ell)}$ moves right to left to the position $X_{(i)}+\rho \ell.$
 The  displacement to the power $a$ is 
 $$T^a_1=\sum_{\ell=1}^{p_1}\left(\left|X_{(i+\ell)}-X_{(i)}-\rho \ell\right|^{+}\right)^a.$$
 \item  The sensors $X_{(i+p_1+1)}, X_{(i+p_1+2)},\dots X_{(i+p_1+p_2)}$ move left to right. 
 Notice that the sensors $X_{(i+p_1+1)}, X_{(i+p_1+2)},\dots X_{(i+p_1+p_2)}$ have to move cumulatively, namely
 for $\ell=1,2,\dots, p_2$ the sensors $X_{(i+p_1+\ell)}$ move left to right to  the position $X_{(i)}+\rho p_1+s\ell.$
 The displacement to the power $a$ is\\ $T^a_2=\sum_{\ell=1}^{p_2}\left(\left|X_{(i)}+\rho p_1+s\ell-X_{(i+p_1+\ell)}\right|^{+}\right)^a.$
 Since $X_{(i)}+\rho p_1<X_{(i+p_1)}$ (see Figure \ref{fig:aser}) 
 we upper bound the displacement to the power $a$ as follows:
 $$T^a_2\le \sum_{\ell=1}^{p_2}\left(\left|X_{(i+p_1)}+s\ell-X_{(i+p_1+\ell)}\right|^{+}\right)^a.$$
\end{enumerate}
\begin{figure}[H]
\setlength{\unitlength}{0.75mm}
\centering
\begin{picture}(60,20)
\setlength{\unitlength}{0.75mm}
\put(25,20){\vector(1,0){33}}
\put(-28,10){\line(1,0){115}}
\put(-24,10){\circle*{2}}
\put(-28.5,5){$X_{(i)}+\rho p_1$}
\put(7,10){\circle*{2}}
\put(2.5,5){$X_{(i+p_1)}$}
\put(26,10){\circle*{2}}
\put(21.5,5){$X_{(i+p_1+\ell)}$}
\put(58,10){\circle*{2}}
\put(53.5,5){$X_{(i)}+\rho p_1+sl$}
\put(7,20){\vector(-1,0){32.5}}
\end{picture}
\caption{The movement of mobile sensors $X_{(i+p_1)},$  $X_{(i+p_1+\ell)}$ specified by  $1.$ and $2.$ in the phase $A$ of Algorithm \ref{alg_left}.}
\label{fig:aser}
\end{figure}

We are now ready to
estimate 
the movement of sensors in the 
phase $A$ in Algorithm \ref{alg_left}.
Let $p=p_1+p_2+\dots p_m$ for some $p_1, p_2,\dots p_m\in\mathbb{N_+}$ and $p_0=0.$ We assume that phase $A$ is divided into $m$ subphases as follows.
Algorithm \ref{alg_left} moves cumulatively the sensors $X_{(i+p_1+p_2+\dots p_{j-1}+1)},$ $X_{(i+p_1+p_2+\dots p_{j-1}+2)},\dots,$\\ $X_{(i+p_1+p_2+\dots p_{j-1}+p_j)}$
into one chosen direction left to right or right to left.
The movement direction of the sensors $X_{(i+p_1+p_2+\dots p_{j-1}+1)},$ $X_{(i+p_1+p_2+\dots p_{j-1}+2)},\dots,$ $X_{(i+p_1+p_2+\dots p_{j-1}+p_j)}$
is opposite to the movement direction of the sensors\\ 
$X_{(i+p_1+p_2+\dots p_{j}+1)},$ $X_{(i+p_1+p_2+\dots p_{j}+2)},\dots,$ $X_{(i+p_1+p_2+\dots p_{j}+p_{j+1})},$ 
provided that $j=1,2,\dots,m-1.$

Let $T^{a}_p$ be the displacement to the power $a$ in the considered phase $A$ of Algorithm \ref{alg_left} and let $p_0=0.$ Observe that
\begin{align}
\nonumber T^{a}_{p}&\le  \max_{0< p_1+\dots+ p_m\le p}\sum_{j=1}^{m}\sum_{\ell=1}^{p_j}\left( |X_{(i+p_1+\dots+p_{j-1}+\ell)}-X_{(i+p_1+\dots+p_{j-1})}-\rho\ell|^{+}\right)^a\\
\label{eq:tab}&+ \max_{0< p_1+\dots+ p_m\le p}\sum_{j=1}^{m}\sum_{\ell=1}^{p_j}\left( |X_{(i+p_1+\dots+p_{j-1})}+s\ell-X_{(i+p_1+\dots+p_{j-1}+\ell)}|^{+}\right)^a.
\end{align}
Let $T^a$  be the displacement to the power $a$ of Algorithm \ref{alg_left} in steps $(1)$-$(10)$ . Using  (\ref{eq:tab}), as well as the observation that Algorithm \ref{alg_left}
is the sequence of the two phases $A$ and $B$
we get the following upper bound
\begin{align}
&\nonumber T^{a}_{p}\le\\
\nonumber&  \max_{0\le p_1+\dots+ p_m\le p,\,\,\, 1\le p\le n}\sum_{j=1}^{m}\sum_{\ell=1}^{p_j}\left( |X_{(i+p_1+\dots+p_{j-1}+\ell)}-X_{(i+p_1+\dots+p_{j-1})}-\rho\ell|^{+}\right)^a\\
\label{eq:dusas}&+ \max_{0\le p_1+\dots+ p_m\le p,\,\,\, 1\le p\le n}\sum_{j=1}^{m}\sum_{\ell=1}^{p_j}\left( |X_{(i+p_1+\dots+p_{j-1})}+s\ell-X_{(i+p_1+\dots+p_{j-1}+\ell)}|^{+}\right)^a.
\end{align}
Let $b_{1,\ell}, b_{2,\ell}$ be some integers such that 
$b_{1,\ell}, b_{2,\ell}\in\{1,2,\dots, n\}$ and  $b_{1,l}-b_{2,l}=\ell.$ 
Observe that the following costs $\left(|X_{(b_{1,\ell})}-X_{(b_{2,\ell})}-\rho \ell|^{+}\right)^a$\\ and $\left( |X_{(b_{2,\ell})}+sl-X_{(b_{1,\ell})}|^{+}\right)^a$
can appear in the double sums (\ref{eq:dusas}) at most $\frac{n}{\ell}$ times. 
Hence
\begin{equation}
T^a\le \sum_{\ell=1}^{n}\frac{n}{\ell} \left(|X_{(b_{1,\ell})}-X_{(b_{2,\ell})}-\rho \ell|^{+}\right)^a
\label{eq:dusza}+\sum_{\ell=1}^{n}\frac{n}{\ell} \left(|s\ell-X_{(b_{1,l})}-X_{(b_{2,l})}|^{+}\right)^a.
\end{equation}
Let as recall the following claim.
\begin{Claim}
The random variable 
\begin{equation}
\label{eq:betabeta100}
X_{(j+\ell)}-X_{(j)}\,\,\,  \text{has the}\,\,\, \mathrm{Beta}(\ell,n-\ell+1)\,\,\, \text{distribution} 
\end{equation}
(see \cite[Formula 2.5.21, page 33]{arnold2008}).
\end{Claim}
Combining (\ref{eq:dusza}), (\ref{eq:betabeta100})  we have for the expectation value
\begin{align*}
\mathbb{E}\left(T^a\right)&\le \sum_{l=1}^{n}\frac{n}{l} \mathbb{E}\left(|\mathrm{Beta}(\ell,n-\ell+1)-\rho l|^{+}\right)^a\\
&+\sum_{l=1}^{n}\frac{n}{l} \mathbb{E}\left(|sl-\mathrm{Beta}(\ell,n-\ell+1)|^{+}\right)^a.
\end{align*}
Combining  Equation (\ref{eq:estaay}) in Lemma \ref{lemma_e} 
and
Equation (\ref{eq:estaaybn}) in Lemma \ref{lemma_f}  
lead to
$\mathbb{E}\left(T^a\right)=O\left(n^{1-a}\right).$ 
This is enough to prove the desired upper bound in the First Phase.
 \\ 
 
\noindent 
\textbf{Second Phase} $\,\,\,\,$ The steps $(11)$-$(16)$ of Algorithm \ref{alg_left}
\\

Observe that, after the steps $(1)$-$(10)$ the sensor $X_{(1)}$ has to be at position $P_1$ such that $0\le P_1\le \rho=\frac{1+\gamma}{n}.$
Hence for each sensor we upper bound the movement to the power $a$ by $\left(\frac{\rho}{2}\right)^a.$
Therefore, the expected $a$-total displacement of Algorithm \ref{alg_left} is less than
$$
\sum_{i=1}^{n}\left(\frac{\rho}{2}\right)^a=\frac{(1+\gamma)^a}{2^a}n^{1-a}=O\left(n^{1-a}\right).
$$
This is enough to prove the desired upper bound in the second case.
\\
Finally, combining together the estimation from \textbf{both phases} and Lemma \ref{lem:cruciana}
completes the proof of Theorem \ref{thm:covere}.
\end{proof}
Finally, the following lemma will be helpful in the proof of the main results  in Subsection \ref{subsec:astra} for the sensors on the unit interval.
In the proof of Lemma \ref{lem_claim} we combine probabilistic techniques together with Estimation (\ref{eq:estaax}) in Lemma \ref{lemma_e} for Beta distribution from 
Section \ref{sec:preliminary}.
\begin{Lemma}
\label{lem_claim}
Let $a>0$ be a constant. Fix $\gamma>0$, $1>\delta>0$ independent on the number of sensors $n.$ Let  $\rho=\frac{1+\gamma}{n}$ and $s=\frac{1-\delta}{n}.$
Let $Y_n$ be the location of $n$-th sensor after algorithm $MV(n,\rho, s).$ Then
$$\Pr\left[Y_n< 1-\frac{2}{n^{\frac{a}{a+1}}}\right]=O\left(\frac{1}{n^{\frac{a}{2}}}\right).$$
\end{Lemma}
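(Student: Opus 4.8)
\textbf{Proof plan for Lemma \ref{lem_claim}.}

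The plan is to control the deviation of $Y_n$ (the location of the last sensor after running $MV(n,\rho,s)$) from $1$ by comparing it both with the initial last order statistic $X_{(n)}$ and with the $\rho$-spacing bound imposed by the algorithm. First I would observe that the algorithm never moves a sensor further than $1$, and that after the first phase consecutive sensors are at distance at most $\rho=\frac{1+\gamma}{n}$; combined with the second (leftward shift) phase, which only decreases positions, we get the deterministic upper bound $Y_n \le X_{(n)} + (\text{total displacement of } X_{(n)})$ and, more usefully, the lower bound $Y_n \ge X_{(n)} - |X_{(n)}-\rho n|^{+}$ type estimates coming from the per-sensor movement analysis in the proof of Theorem \ref{thm:covere}. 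The cleaner route is: the event $\{Y_n < 1 - \frac{2}{n^{a/(a+1)}}\}$ forces either $X_{(n)}$ itself to be far from $1$, or the cumulative leftward correction applied to $X_{(n)}$ to be large; both are rare.

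Concretely, I would split $\Pr[Y_n < 1 - \frac{2}{n^{a/(a+1)}}]$ according to whether $X_{(n)} < 1 - \frac{1}{n^{a/(a+1)}}$ or not. On the first event, since $X_{(n)}$ is distributed as $\mathrm{Beta}(n,1)$, Lemma \ref{lemma_first} gives $\Pr[X_{(n)} < 1 - \frac{1}{n^{a/(a+1)}}] < e^{-n^{1/(a+1)}}$, which is $O(n^{-a/2})$ (in fact far smaller). On the complementary event $X_{(n)} \ge 1 - \frac{1}{n^{a/(a+1)}}$, for $Y_n$ to drop below $1 - \frac{2}{n^{a/(a+1)}}$ the total displacement suffered by the $n$-th sensor must exceed $\frac{1}{n^{a/(a+1)}}$. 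By the case analysis in the proof of Theorem \ref{thm:covere}, the movement $M_n$ of the last sensor is bounded (up to the factor from Lemma \ref{lem:cruciana}) by a term of the form $|\mathrm{Beta}(l,n-l+1) - \rho l|^{+}$ for some $l$, plus the uniform $\frac{\rho}{2}$ second-phase term which is $O(1/n)$ and hence negligible against $\frac{1}{n^{a/(a+1)}}$. So I would apply Markov's inequality: $\Pr[M_n > \frac{1}{n^{a/(a+1)}}] \le n^{\frac{a^2}{a+1}}\,\mathbf{E}[M_n^a] = n^{\frac{a^2}{a+1}}\cdot O(n^{-a})$ by Estimation (\ref{eq:estaax}) in Lemma \ref{lemma_e}, and $n^{\frac{a^2}{a+1}-a} = n^{-\frac{a}{a+1}}$; one then checks $\frac{a}{a+1} \ge \frac{a}{2}$ for $a \le 1$, while for $a>1$ one instead uses that $\frac{2}{n^{a/(a+1)}}$ shrinks fast enough, or runs the Markov bound with a slightly different exponent to still land at $O(n^{-a/2})$.

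The main obstacle I expect is the bookkeeping in the middle step: making precise that the \emph{cumulative} movement of the last sensor $X_{(n)}$ — which in the algorithm may be part of a block of consecutively-moved sensors and then additionally shifted in the second phase — is genuinely dominated by a single $|\mathrm{Beta}(l,n-l+1)-\rho l|^{+}$ random variable whose $a$-th moment is $O(n^{-a})$ uniformly in $l$. This requires re-reading the four-case decomposition of Theorem \ref{thm:covere} and confirming that the relevant bound for sensor $n$ is of the ``$\mathrm{Beta} - \rho l$'' (rightward, leftward-corrected) type rather than the ``$sl - \mathrm{Beta}$'' type, and that the $\le \frac{n}{l}$ multiplicity argument used for the \emph{sum} is not needed here since we only want one sensor. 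Once that reduction is nailed down, the remaining work is a one-line Markov estimate plus plugging in Lemma \ref{lemma_e} and Lemma \ref{lemma_first}, and comparing the two exponents $\frac{a}{a+1}$ and $\frac{a}{2}$ to conclude $\Pr[Y_n < 1 - \frac{2}{n^{a/(a+1)}}] = O(n^{-a/2})$.
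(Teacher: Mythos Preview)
Your overall strategy matches the paper's proof: split on whether $X_{(n)}$ is close to $1$, kill the ``far'' branch with Lemma~\ref{lemma_first}, and on the ``near'' branch bound the leftward movement $M_n$ of the last sensor via Markov's inequality together with the moment bound from Lemma~\ref{lemma_e}. The bookkeeping you worry about in your last paragraph is not the real issue; the paper handles it exactly as you sketch, using only Cases~1 and~2 of Theorem~\ref{thm:covere} (the $|\mathrm{Beta}(l,n-l+1)-\rho l|^{+}$ bound), since only right-to-left motion can push $Y_n$ below $X_{(n)}$, and the second-phase shift is at most $\rho/2=O(1/n)$.

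The actual gap is the step you gloss over. With Markov at exponent $a$ you get
\[
\Pr\!\left[M_n>\tfrac{1}{n^{a/(a+1)}}\right]\le n^{\frac{a^2}{a+1}}\cdot O(n^{-a})=O\!\left(n^{-\frac{a}{a+1}}\right),
\]
and $\frac{a}{a+1}\ge \frac{a}{2}$ holds only for $a\le 1$. For $a>1$ this is strictly weaker than the required $O(n^{-a/2})$, and your alternative suggestion (``$\frac{2}{n^{a/(a+1)}}$ shrinks fast enough'') does not help, since that threshold is fixed by the statement. The fix --- and this is precisely what the paper does --- is to apply Markov with the exponent $p=\frac{a(a+1)}{2}$ rather than $a$: Lemma~\ref{lemma_e} (equation~\eqref{eq:estaax}) is valid for \emph{any} fixed positive exponent, so $\mathbf{E}[M_n^{p}]=O(n^{-p})$, and then
\[
\Pr\!\left[M_n>\tfrac{1}{n^{a/(a+1)}}\right]\le n^{\frac{a}{a+1}\cdot p}\cdot O(n^{-p})=n^{a^2/2}\cdot O\!\left(n^{-a(a+1)/2}\right)=O\!\left(n^{-a/2}\right)
\]
for all $a>0$. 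Once you plug in this exponent, your argument is complete and coincides with the paper's.
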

\begin{proof}
Let $M_{n}(1-10)$ be the movement of sensor $X_{(n)}$ right to left in Algorithm \ref{alg_left} at steps $(1)$-$(10).$
The analysis of $M_{n}(1-10)$ is analogous to that in the proof of
Theorem \ref{thm:covere}. 
Using Equation (\ref{eq:estaax}) in Lemma \ref{lemma_e} for ${\frac{(a+1)a}{2}}$ we get
\begin{equation}
\label{eq:markkova}
\mathbb{E}\left[\left(M_n(1-10)\right)^{\frac{(a+1)a}{2}}\right]=O\left(\frac{1}{n^{\frac{(a+1)a}{2}}}\right).
\end{equation}
Let $M_n(11-16)$ be the movement of sensor $X_{(n)}$ right to left in Algorithm \ref{alg_left} at the steps $(11)$-$(16).$
Observe that $M_n(11-16)\le\frac{\rho}{2}=\frac{1}{2}\frac{1+\gamma}{n}.$ Therefore
\begin{equation}
\label{eq:markkovb}
\mathbb{E}\left[\left(M_n(11-16)\right)^{\frac{(a+1)a}{2}}\right]=O\left(\frac{1}{n^{\frac{(a+1)a}{2}}}\right).
\end{equation}
Let $M_n$ be the movement of sensor $X_{(n)}$ right to left in Algorithm \ref{alg_left}. Putting together the equality
$M_n=M_n(1-10)+M_n(11-16),$ Estimations (\ref{eq:markkova}-\ref{eq:markkovb}), as well as Lemma \ref{lem:cruciana} we have
\begin{equation}
\label{eq:markkov}
\mathbb{E}\left[\left(M_n\right)^{\frac{(a+1)a}{2}}\right]=O\left(\frac{1}{n^{\frac{(a+1)a}{2}}}\right).
\end{equation}
Applying Markov inequality applied for random variable $M_n^{\frac{(a+1)a}{2}}$
and Estimation (\ref{eq:markkov}) we deduce that 
\begin{equation}
\label{eq:markkoves}
\Pr\left[M_n>\frac{1}{n^{\frac{a}{1+a}}} \right]=\Pr\left[\left(M_n\right)^{\frac{(a+1)a}{2}}>\frac{1}{n^{\frac{a^2}{2}}} \right]= O\left(\frac{n^{\frac{a^2}{2}}}{n^{\frac{(a+1)a}{2}}}\right)
=O\left( \frac{1}{n^{\frac{a}{2}}}\right).
\end{equation}
Consider the following three events:
$$
E_1: Y_n<1-2n^{-\frac{a}{a+1}}\,\, |\,\, X_{(n)}\ge 1-n^{-\frac{a}{a+1}},
$$
$$
E_2: Y_n<1-2n^{-\frac{a}{a+1}}\,\, |\,\, X_{(n)}< 1-n^{-\frac{a}{a+1}},
$$
$$
E_3: X_{(n)}< 1-n^{-\frac{a}{a+1}}.
$$
Applying Equation (\ref{eq:markkoves}) yields
$$
\Pr\left[E_1\right]\left(1-\Pr\left[E_3\right]\right)\le \Pr\left[E_1\right]\le\Pr\left[M_n>\frac{1}{n^{\frac{a}{a+1}}} \right]=O\left( \frac{1}{n^{\frac{a}{2}}}\right).
$$
From Lemma \ref{lemma_first}, as well as the fact that random $X_{(n)}$ obeys $\mathrm{Beta}(n,1)$ we have
$$
\Pr\left[E_2\right]\Pr\left[E_3\right]\le\Pr\left[E_3\right]<\frac{1}{e^{n^{\frac{1}{1+a}}}}\,\,\,\,\,\,\text{is exponentially small.}
$$
Putting all together we deduce that
$$
\Pr\left[Y_n<1-\frac{2}{n^{\frac{a}{a+1}}} \right]=\Pr\left[E_1\right]\left(1-\Pr\left[E_3\right]\right)+\Pr\left[E_2\right]\Pr\left[E_3\right]
=O\left( \frac{1}{n^{\frac{a}{2}}}\right).
$$
This finishes the proof of Lemma  \ref{lem_claim}. 
\end{proof}
\subsection{Analysis of Algorithm \ref{alg_coverage}}
\label{subsec:astra}
Let us recall that $a>0$ is fixed and $\epsilon>0$, $1>\delta>0$ are arbitrary small constants independent on the number of sensors $n.$ 
In this subsection we present algorithm
$CV_1(n,r_1,s)$ (see Algorithm \ref{alg_coverage}) 
for the $(r_1,s)-C\&I$ requirement.
We prove that the expected $a$-total
displacement of algorithm $CV_1(n,r,s)$ 
is in
$O\left(n^{1-a}\right)$
when
$r_1= \frac{1+\epsilon}{2n}$
and
$s=\frac{1-\delta}{n}.$
Notice that our Algorithm \ref{alg_coverage} consists of  two phases. During the first phase (see Initialization) we apply Algorithm \ref{alg_left}.
Then in the second phase (see Case \textbf{B} and Case \textbf{C}) we add the additional sensors movement. 
Let $Y_n$ be the location of sensors $X_{(n)}$ after Algorithm \ref{alg_coverage}.
The additional movement depends on the position of sensor $Y_{n}$ in the interval $[0,1].$
\begin{algorithm}[H]
\caption{$CV_1(n,r_1,s)$ for $(r_1,s)$-\textit{coverage \& interference requirement} on $[0,1]$ when  $r_1=\frac{1+\epsilon}{2n},\,\,\,$ $s=\frac{1-\delta}{n}$ provided that $\epsilon>0$, $1>\delta>0$ are fixed 
and independent on $n.$}
\label{alg_coverage}
\begin{algorithmic}[1]
 \REQUIRE The initial locations of $n$ mobile sensors with identical sensing radius $r_1=\frac{1+\epsilon}{2n}$, placed uniformly and independently at random on the unit interval $[0,1].$
 \ENSURE  The final positions of sensors to satisfy $(r_1,s)$-\textit{coverage \& interference  requirement} on the interval $[0,1].$
 
 \INITIALIZATION Apply Algorithm $MV(n,\rho,s)$ for  $\rho:=\frac{1+\frac{\epsilon}{2}}{n},\,\,\,$ $s:=\frac{1-\delta}{n}$ and the random sensors $X_1,X_2,\dots,X_n.$
 Let $Y_1,Y_2,\dots ,Y_n$ be the location and sensors of $n$ sensors $X_{(1)}\le X_{(2)}\le \dots\le X_{(n)}$ after Algorithm $MV(n,\rho, s);$
 \SWITCH {}
  \CASE {\textbf{A} $\left(Y_n\ge 1-r_1\right)$}
    \STATE {do nothing;}
  \ENDCASE
  \CASE {\textbf{B} $\left(Y_n\le 1-\frac{2}{n^{\frac{a}{a+1}}}\right)$}
   \FOR{$i=1$  \TO $n$ } 
 \STATE{move the sensor $Y_i$ to the position $\left(\frac{i}{n}-\frac{1}{2n}\right);$}
 \ENDFOR
  \ENDCASE
  \CASE {\textbf{C} $\left( Y_n\in\left(1-\frac{2}{n^{\frac{a}{a+1}}}, 1-r_1\right)\right)$}
  \STATE {move the sensor $Y_n$ to the new position $1-r_1,\,\,\,$ $i:=n-1;$}
   \WHILE{$Y_{i+1}-Y_{i}>2r_1 $}
    \STATE {move the sensor $Y_{i}$ to the new position $1-r_1-(n-i)2r_1,\,\,\,$ $i:=i-1$;}
  \ENDWHILE
  \ENDCASE
\ENDSWITCH
\end{algorithmic}
\end{algorithm}
We now briefly explain \textbf{the ideas} behind the proof of  Theorem \ref{thm:const} 
and \textbf{correctness} of Algorithm \ref{alg_coverage}. 
\begin{itemize}
\item[(i)] We have initially $n$ random sensors on the unit interval with identical sensing radius $r_1=\frac{1+\epsilon}{2n}.$
Firstly, we apply Algorithm \ref{alg_left} for  $\rho=\frac{1+\frac{\epsilon}{2}}{n}$ and $s=\frac{1-\delta}{n}$ to achieve only the following property:
\begin{itemize}
\item The distance between consecutive sensors is greater than or equal  to $\frac{1-\delta}{n}$ and less than or equal to $\frac{1+\frac{\epsilon}{2}}{n}.$ 
\item The first leftmost sensor is at a  distance less than or equal to $\frac{1+\frac{\epsilon}{2}}{2n}$ from the origin.
\end{itemize}
Applying Theorem \ref{thm:covere} we deduce that the 
expected $a$-total displacement in \textbf{Initialization} of  Algorithm  \ref{alg_coverage} is $O\left(n^{1-a}\right).$
\item[(ii)] In case B we move the sensors to equidistant anchor locations in $\Theta\left(n^{1-\frac{a}{2}}\right)$ expected $a$-total displacement.
However, we can upper bound the probability with which case B occurs (see
Lemma \ref{lem_claim}) to achieve the desired
$O\left(n^{1-a}\right)$ expected $a$-total displacement.
\item[(iii)] Since the sensors have sensing radius $r_1=\frac{1+\epsilon}{2n}$ and the distance between consecutive sensors is less than or equal to 
$\frac{1+\frac{\epsilon}{2}}{n}=2r_1-\frac{\frac{\epsilon}{2}}{n},$
$(r_1,s)$-\textit{coverage \& interference requirement} is solved
in $O\left(n^{1-a}\right)$ expected $a$-total displacement in case C of Algorithm \ref{alg_coverage}. In this case only fraction
$\Theta\left(n^{\frac{1}{a+1}}\right)$ of rightmost sensors can move. We upper bound the movement to the power $a$ of each these sensors by
$\frac{2^a}{n^{\frac{a^2}{a+1}}}$ (see Case 3 in the proof of Theorem \ref{thm:const}).
\end{itemize}

We are now ready to prove the main theorem for the sensors on the unit interval.
\begin{theorem}
\label{thm:const}
Let $a>0$ be a constant. Fix $\epsilon>0$, $1>\delta>0$ independent on the number of sensors $n.$  Let $s=\frac{1-\delta}{n}.$
Assume that $n$ mobile sensors with identical sensing radius $r_1=\frac{1+\epsilon}{2n}$ are thrown uniformly and independently at random on the unit interval $[0,1].$
Then Algorithm \ref{alg_coverage} 
solves $(r_1,s)$-\textit{coverage \& interference requirement} and has expected $a$-total displacement  
$O\left(n^{1-a}\right).$
\end{theorem}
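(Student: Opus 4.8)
The plan is to verify the three bullet points of Algorithm \ref{alg_coverage} separately — correctness (that the final configuration satisfies the $(r_1,s)$-C\&I requirement) and then the expected $a$-total cost — by conditioning on which of the three mutually exclusive cases \textbf{A}, \textbf{B}, \textbf{C} is triggered. By Theorem \ref{thm:covere} applied with $\rho=\frac{1+\epsilon/2}{n}$ and $s=\frac{1-\delta}{n}$, after step (1) the sensors $Y_1\le\dots\le Y_n$ have consecutive gaps in $[\tfrac{1-\delta}{n},\tfrac{1+\epsilon/2}{n}]$, with $Y_1\le \tfrac{\rho}{2}$, at cost $O(n^{1-a})$ in expectation. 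Since $2r_1 = \tfrac{1+\epsilon}{n} > \rho$, the gaps are already $\le 2r_1$, and $Y_1 \le \tfrac{\rho}{2} < r_1$, so the interval $[0,Y_n+r_1]$ is already covered while all gaps are $\ge s$; the only thing that can fail is that the segment $(Y_n+r_1, 1]$ is uncovered. So correctness reduces to checking that in each case the algorithm pushes the rightmost sensor(s) so that $Y_n$ ends at $\ge 1-r_1$ without ever creating a gap below $s$ or above $2r_1$; case \textbf{A} is trivial, case \textbf{B} places sensors at the equidistant anchors $\tfrac{i}{n}-\tfrac{1}{2n}$ whose gaps are exactly $\tfrac{1}{n}\in[s,2r_1]$ and whose rightmost point is $1-\tfrac{1}{2n}\ge 1-r_1$, and case \textbf{C} shifts a suffix of sensors rightward by multiples of steps that keep each gap equal to $2r_1$ until a gap $\le 2r_1$ is encountered, which preserves both constraints.

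Next I would bound the expected cost case by case, using Lemma \ref{lem:cruciana} to combine the cost of step (1) with the cost of the switch. Case \textbf{A} adds nothing. For case \textbf{C}, the key observation (already sketched in the ``ideas'' paragraph) is that only the sensors whose index $i$ satisfies $Y_n - Y_i \le (n-i)\cdot 2r_1$ can move, and since after step (1) we have $Y_n - Y_i \ge (n-i)s = (n-i)\tfrac{1-\delta}{n}$ while $Y_n > 1-\tfrac{2}{n^{a/(a+1)}}$, only a suffix of length $\Theta(n^{1/(a+1)})$ can move; each such sensor moves a distance at most $\tfrac{2}{n^{a/(a+1)}}$ (the total ``slack'' $1-r_1-Y_n$), so its contribution to the $a$-th power cost is $O(n^{-a^2/(a+1)})$, and summing over $\Theta(n^{1/(a+1)})$ sensors gives $O(n^{(1-a^2)/(a+1)}) = O(n^{1-a})$. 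For case \textbf{B}, the cost is genuinely larger — moving to the equidistant anchors costs $\Theta(n^{1-a/2})$ by Theorem \ref{thm:exact_one} — but case \textbf{B} occurs only when $Y_n \le 1-\tfrac{2}{n^{a/(a+1)}}$, which by Lemma \ref{lem_claim} has probability $O(n^{-a/2})$. Here I must be a little careful: the cost of case \textbf{B} is a random variable, not a deterministic $\Theta(n^{1-a/2})$, so rather than multiplying ``cost $\times$ probability'' I would bound the case-\textbf{B} displacement of every sensor deterministically by $1$, so the $a$-total displacement in case \textbf{B} is at most $n\cdot 1^a = n$, and then $\mathbf{E}[\text{cost}\cdot\mathbf{1}_{\textbf{B}}] \le n\cdot\Pr[\textbf{B}] = n\cdot O(n^{-a/2}) = O(n^{1-a/2})$ — which is not good enough. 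So the sharper move is needed: bound each sensor's displacement in case \textbf{B} by $O(1)$ and note that only sensors that are already far from their anchor contribute, or more simply observe that the $a$-total cost in case \textbf{B} is $O(n^{1-a/2})$ deterministically (since each displacement is $O(1)$ and the relevant combinatorial sum telescopes as in Theorem \ref{thm:exact_one}'s worst case), giving $\mathbf{E}[\text{cost}\cdot \mathbf{1}_{\textbf{B}}] = O(n^{1-a/2})\cdot O(n^{-a/2}) = O(n^{1-a})$.

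Putting the three pieces together via Lemma \ref{lem:cruciana} — total cost $\le C_{a,2}(\text{cost of step 1} + \text{cost of switch})$, with $\mathbf{E}[\text{cost of step 1}] = O(n^{1-a})$ from Theorem \ref{thm:covere} and $\mathbf{E}[\text{cost of switch}] = O(n^{1-a})$ from the case analysis — yields the claimed $O(n^{1-a})$ bound. I expect the main obstacle to be the case-\textbf{B} estimate: one has to combine a crude (deterministic, worst-case $O(n^{1-a/2})$) bound on the $a$-total displacement there with the polynomially small probability $O(n^{-a/2})$ from Lemma \ref{lem_claim}, and make sure these two exponents genuinely add to $\le 1-a$; it is precisely the exponent $\tfrac{a}{a+1}$ in the threshold $1-\tfrac{2}{n^{a/(a+1)}}$ defining case \textbf{C} that is tuned to make the case-\textbf{C} sum $O(n^{1-a})$ while keeping $\Pr[\textbf{B}]$ small enough, so I would double-check that these two constraints are simultaneously satisfiable and that the boundary case \textbf{C} interval is non-empty. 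A secondary, more bookkeeping obstacle is verifying that the \textbf{while}-loop in case \textbf{C} terminates correctly and never produces a gap outside $[s,2r_1]$, which requires noting that before step (1)'s guarantees the untouched left part already has all gaps in range.
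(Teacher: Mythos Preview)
Your overall decomposition into Initialization plus the three cases, and your handling of Cases~\textbf{A} and~\textbf{C}, match the paper's proof. One small correction in Case~\textbf{C}: to bound the number of moving sensors you need the \emph{upper} gap bound $Y_n-Y_i\le (n-i)\rho$, not the lower bound $(n-i)s$; it is precisely the slack $2r_1-\rho=\tfrac{\epsilon/2}{n}>0$ (not $2r_1-s$) that forces the while-loop to terminate after $\Theta(n^{1/(a+1)})$ steps, and this is why the Initialization is run with $\rho=\tfrac{1+\epsilon/2}{n}$ rather than $\rho=2r_1$.

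The genuine gap is in Case~\textbf{B}. Your claim that the $a$-total cost of moving all $Y_i$ to the anchors is $O(n^{1-a/2})$ \emph{deterministically} is false: nothing prevents a realization with all sensors in $[0,\tfrac12]$, giving cost $\Omega(n)$, so this route cannot give $O(n^{1-a})$. The paper's Case-\textbf{B} argument supplies the idea you are missing: it factors the move $Y_i\to\tfrac{i}{n}-\tfrac{1}{2n}$ through the original position $X_{(i)}$ via Lemma~\ref{lem:cruciana}, so the cost is controlled by two pieces whose \emph{unconditional} expectations are already known --- $\sum_i|Y_i-X_{(i)}|^a$ is $O(n^{1-a})$ by Theorem~\ref{thm:covere}, and $\sum_i\bigl|X_{(i)}-(\tfrac{i}{n}-\tfrac{1}{2n})\bigr|^a$ is $\Theta(n^{1-a/2})$ by Theorem~\ref{thm:exact_one} --- giving expected Case-\textbf{B} cost $O(n^{1-a/2})$, which is then multiplied by $\Pr[\mathbf{B}]=O(n^{-a/2})$ from Lemma~\ref{lem_claim}. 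Your instinct that this last ``expectation times probability'' step needs care is correct (the cost and the event $\mathbf{B}$ are not independent, so $\mathbf{E}[C\,\mathbf{1}_{\mathbf{B}}]\le\mathbf{E}[C]\Pr[\mathbf{B}]$ is not automatic); the paper does not justify it either, but it can be made rigorous, e.g.\ by H\"older's inequality combined with the fact that Theorem~\ref{thm:exact_one} and Lemma~\ref{lem_claim} hold for every exponent.
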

\begin{proof}
There are three cases to consider.
\\

{\textbf{Case $1$:}} The algorithm terminates after Step $3$. This case adds nothing to the expected $a$-total displacement.  
\\

{\textbf{Case $2$:}} The algorithm terminates after Step $7$. Then  $Y_n\le 1-\frac{2}{n^{\frac{a}{a+1}}}.$

In this case we upper bound the expected $a$-total displacement in steps $(5)$-$(7)$ of algorithm $CV_1(n,r_1,s)$ as follows:
\begin{itemize}
\item[(a)] rewind the $i$-th sensor from the location $Y_i$ to the location $X_{(i)}$ for $i=1,2,\dots ,n.$ From Theorem \ref{thm:covere} we get back
the expected $a$-total displacement is $O\left(n^{1-a}\right).$
\item[(b)] Move the $i$-th sensor from the location $X_i$ to the position $\left(\frac{i}{n}-\frac{1}{2n}\right)$ for $i=1,2,\dots , n.$
According to Theorem \ref{thm:exact_one} the expected $a$-total displacement is\\ $\Theta\left(n^{1-\frac{a}{2}}\right).$
\end{itemize}
Putting together (a), (b), as well as Lemma \ref{lem:cruciana} we have the expected $a$-total displacement at the steps $(5)$-$(7)$ of algorithm $CV_1(n,r_1,s)$ is $O\left(n^{1-\frac{a}{2}}\right).$
Then by Lemma \ref{lem_claim} the probability that this case can occur is $O\left(\frac{1}{n^{\frac{a}{2}}}\right)$
and this adds to the expected $a$-total displacement at most
$$O\left(n^{1-\frac{a}{2}}\right)O\left(\frac{1}{n^{\frac{a}{2}}}\right)=O\left(n^{1-a}\right).$$
\\

{\textbf{Case $3$:}} The algorithm terminates after Step $12$. Then $Y_n\in\left(1-\frac{2}{n^{\frac{a}{a+1}}}, 1-r\right).$

Let us recall that $r_1=\frac{1+\epsilon}{2n},\,\,\,\rho=\frac{1+\frac{\epsilon}{2}}{n}$
and the distance between consecutive sensors is less than or equal to $\rho.$
Hence, we upper bound the movement to the power $a$ of the $(n-i)$-th sensor for $i\ge 1$ as follows:
\begin{align*}
&\left(\left|1-r_1-(n-i)2r_1-\left(1-\frac{2}{n^{\frac{a}{a+1}}}-\rho(n- i)\right)\right|^{+}\right)^a\\
&=\left(\left|\frac{2}{n^{\frac{a}{a+1}}}-\frac{\epsilon(n-i)+1+\epsilon}{2n}\right|^{+}\right)^a\le \frac{2^a}{n^{\frac{a^2}{a+1}}}.
\end{align*}
Observe that the movement of $(n-i)$-th sensor is positive only when
$$n-i\le \frac{4n^{{\frac{1}{a+1}}}}{\epsilon}-\frac{1}{\epsilon}=\Theta(n^{{\frac{1}{a+1}}}).$$
From this, we see that only  $\Theta\left(n^{{\frac{1}{a+1}}}\right)$
sensors can move. 

Observe that the movement to the power $a$ of the $n$-th sensor is also less then $\frac{2^a}{n^{\frac{a^2}{a+1}}}.$

Hence, this adds to the $a$-total displacement 
$$\frac{2^a}{n^{\frac{a^2}{a+1}}}\left(\Theta\left(n^{{\frac{1}{a+1}}}\right)+1\right)
=O\left(n^{1-a}\right).$$

Finally, combining together the estimation from Initialization (see Theorem \ref{thm:covere}), Case $1$, Case $2$, Case $3$, as well as Lemma \ref{lem:cruciana}
we conclude that the expected $a$-total displacement of algorithm $CV_1(n,s,r)$ is at most
$O\left(n^{1-a}\right).$
This is enough to prove Theorem \ref{thm:const}. 
\end{proof}
\section{Coverage \& interference requirement for square sensing radius $r_2>\frac{1}{2\sqrt{n}}$ and interference distance $s<\frac{1}{\sqrt{n}}$}
\label{sec:two}
In this section, we analyze the expected $a$-total displacement to achieve
$(r_2,s)-C\&I$ \textit{requirement} when $n$ mobile sensors with identical square sensing radius $r_2>\frac{1}{2\sqrt{n}}$
are thrown uniformly at random and independently on the unit square $[0,1]^2,$ 
That is, the sum of sensing area of $n$ sensors is \textbf{greater}  than the \textbf{area of unit square.}

Let us recall that $a>0$ is constant and $\epsilon, \delta>0$ are fixed arbitrary small constant independent on the number of sensors $n.$

We prove that the expected $a$-total expected displacement of
algorithm\\ $CV_2(n,r_2, s)$ (see Algorithm \ref{alg_coverage2}) 
is in
$O\left(n^{1-\frac{a}{2}}\right)$ when $r_2= \frac{1+\epsilon}{2\lfloor \sqrt{n}\rfloor}$ and $s=\frac{1-\delta}{\lfloor\sqrt{n}\rfloor}.$
\begin{algorithm}
\caption{$CV_2(n,r_2,s)$ for $(r_2,s)$-\textit{coverage \& interference requirement} on the $[0,1]^2$ when $r_2=\frac{1+\epsilon}{2\lfloor \sqrt{n}\rfloor}$ and $s=\frac{1-\delta}{\lfloor \sqrt{n}\rfloor}$ 
provided that $\epsilon>0$, $1>\delta>0$ are fixed 
and independent on $n.$}
\label{alg_coverage2}
\begin{algorithmic}[1]
 \REQUIRE The initial locations of $n$ mobile sensors with identical square sensing radius $r_2=\frac{1+\epsilon}{2\lfloor \sqrt{n}\rfloor}$, placed uniformly and independently at random on the unit square $[0,1]^2.$
 \ENSURE  The final positions of sensors to satisfying $(r_2,s)$-\textit{coverage \& interference  requirement} on the square $[0,1]^2.$
 
\INITIALIZATION  \begin{itemize} \item Choose $\lfloor \sqrt{n}\rfloor^2$ sensors at random;
 \item Sort the initial locations of sensors according to the second coordinate; let the sorted locations be
   $S_1=(x_1,y_1),$ $S_2=(x_2,y_2),\dots$  $S_n=(x_n,y_n),\,\,\,$ $y_1\le y_2\le \dots \le y_n;$ 
   \end{itemize}
   \FOR{$j=1$  \TO $\lfloor \sqrt{n}\rfloor$ } 
 \FOR{$i=1$  \TO $\lfloor \sqrt{n}\rfloor$ } 
 \STATE{move sensor $S_{(j-1)\lfloor \sqrt{n}\rfloor+i}$ to position\\ $\left(x_{(j-1)\lfloor \sqrt{n}\rfloor+i}, \frac{j}{\lfloor\sqrt{n}\rfloor}-\frac{1}{2\lfloor\sqrt{n}\rfloor}\right)$}
 \ENDFOR  
 \ENDFOR
  \FOR{$j=1$ to $\lfloor\sqrt{n}\rfloor$}
  \STATE{Apply Algorithm $CV_1(n,r_1,s)$ for $n:=\lfloor\sqrt{n}\rfloor,$ $s:=\frac{1-\delta}{\lfloor\sqrt{n}\rfloor},$ $r_1:=\frac{1+{\epsilon}}{2\lfloor\sqrt{n}\rfloor}$ and sensors 
  $S_{(j-1)\lfloor\sqrt{n}\rfloor+1}, S_{(j-1)\lfloor\sqrt{n}\rfloor+2},\dots S_{(j-1)\lfloor\sqrt{n}\rfloor+\lfloor\sqrt{n}\rfloor}; $ }
  \ENDFOR
\end{algorithmic}
\end{algorithm}
Notice that our Algorithm \ref{alg_coverage2} is in two phases. 
During the first phase (see steps $(1)$-$(7)$) we use a greedy strategy and move all the sensors only according to second coordinate.
As a result of the first phase we get $\lfloor\sqrt{n}\rfloor$ lines each with $\lfloor\sqrt{n}\rfloor$ random sensors.
For the second phase the main result from Section \ref{sec:one} (see Theorem \ref{thm:const}) is applicable. 

It is worth pointing out that the first phase of Algorithm \ref{alg_coverage2} reduces the $a$-total displacement on the unit square to the $a$-total displacement on the unit interval.
Obviously Algorithm \ref{alg_coverage2} moves sensors only in vertical and horizontal fashion but it is powerful enough to derive the desired threshold.

We are now ready to prove the main result for the sensor on the unit square.
\begin{theorem}
\label{thm:const2}
Let $a>0$ be a constant.
Fix $\epsilon>0$, $1>\delta>0$ arbitrary small constans independent on the number of sensors $n.$ 
Let $s=\frac{1-\delta}{\lfloor \sqrt{n}\rfloor}.$
Assume that $n$ mobile sensors with identical square
sensing radius $r_2=\frac{1+\epsilon}{2\lfloor\sqrt{n}\rfloor}$ are thrown uniformly and independently at random on the unit square $[0,1]^2.$
Then Algorithm \ref{alg_coverage2} 
solves $(r_2,s)$-\textit{coverage \& interference requirement} and has expected $a$-total displacement in $O\left(n^{1-\frac{a}{2}}\right).$
\end{theorem}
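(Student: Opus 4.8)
The plan is to analyze the two phases of Algorithm \ref{alg_coverage2} separately: first verify correctness by a short geometric argument, then bound the expected $a$-total cost of each phase by $O\left(n^{1-\frac{a}{2}}\right)$, and finally glue the two bounds together with Lemma \ref{lem:cruciana}. Throughout I would write $M=\lfloor\sqrt{n}\rfloor$, so that $M^2=\Theta(n)$; the $n-M^2=O(\sqrt{n})$ sensors not chosen in the Initialization are simply deactivated, so they contribute nothing to the $a$-total displacement and, being inactive, are irrelevant both to coverage and to the interference count.

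For \emph{correctness}, observe that after the first phase the $M^2$ chosen sensors lie on the $M$ horizontal lines $y=\frac{2j-1}{2M}$, $j=1,\dots,M$, with exactly $M$ sensors on each. A sensor on line $j$ with square sensing radius $r_2=\frac{1+\epsilon}{2M}$ covers the vertical strip $\left[\frac{2j-2-\epsilon}{2M},\frac{2j+\epsilon}{2M}\right]$; consecutive strips overlap in a nondegenerate interval (because $\epsilon>0$), the strip of line $1$ reaches below $0$, and the strip of line $M$ reaches above $1$, so the union of these strips contains $[0,1]$. Within line $j$ the second phase runs $CV_1\!\left(M,\frac{1+\epsilon}{2M},\frac{1-\delta}{M}\right)$ on the $M$ sensors of that line, whose $x$-coordinates are i.i.d. uniform on $[0,1]$ and independent across lines (the assignment of sensors to lines depends only on the $y$-coordinates, which are independent of all the $x$-coordinates); hence Theorem \ref{thm:const} applies line by line and turns each line into a valid $\left(\frac{1+\epsilon}{2M},\frac{1-\delta}{M}\right)$-coverage \& interference configuration on $[0,1]$. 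Consequently every point of $[0,1]^2$ is covered, and any two active sensors are at Euclidean distance at least $\frac{1-\delta}{M}=s$: within a line this is the $CV_1$ guarantee, and between two different lines the vertical separation is already $\frac{1}{M}>s$.

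For the \emph{cost of the first phase}, sort the chosen sensors by second coordinate; their $y$-values are the order statistics $y_{(1)}\le\dots\le y_{(M^2)}$ of $M^2$ i.i.d. uniform variables, and the $i$-th of them is moved vertically to an anchor which differs from the equidistant point $\frac{i}{M^2}-\frac{1}{2M^2}$ by at most $\frac{1}{2M}=O\!\left(\frac{1}{M}\right)$ (a one-line computation with $i=(j-1)M+t$). Thus the vertical displacement $W_i$ obeys $W_i\le\left|y_{(i)}-\left(\frac{i}{M^2}-\frac{1}{2M^2}\right)\right|+O\!\left(\frac{1}{M}\right)$, and by Lemma \ref{lem:cruciana},
\[
\sum_{i=1}^{M^2}\mathbf{E}\!\left[W_i^a\right]\le C_a\left(\mathbf{E}\!\left[\sum_{i=1}^{M^2}\left|y_{(i)}-\left(\frac{i}{M^2}-\frac{1}{2M^2}\right)\right|^a\right]+M^2\,O\!\left(M^{-a}\right)\right).
\]
The first expectation is exactly the quantity estimated by Theorem \ref{thm:exact_one} with $M^2$ sensors, hence is $\Theta\!\left((M^2)^{1-\frac{a}{2}}\right)$, while the second term is $O\!\left(M^{2-a}\right)$; since $M=\Theta(\sqrt{n})$, both are $O\!\left(n^{1-\frac{a}{2}}\right)$.

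For the \emph{cost of the second phase}, each of the $M$ lines runs $CV_1$ on $M$ i.i.d. uniform sensors, which by Theorem \ref{thm:const} has expected $a$-total (horizontal) displacement $O\!\left(M^{1-a}\right)$; summing over the $M$ lines gives $O\!\left(M^{2-a}\right)=O\!\left(n^{1-\frac{a}{2}}\right)$. Since every sensor's total displacement is at most the sum of its first-phase vertical move and its second-phase horizontal move, Lemma \ref{lem:cruciana} bounds the expected $a$-total displacement of Algorithm \ref{alg_coverage2} by a constant times the sum of the two phase costs, i.e. by $O\!\left(n^{1-\frac{a}{2}}\right)$, which finishes the proof. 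The hard part will be the reduction of the first-phase cost to Theorem \ref{thm:exact_one}: the algorithm collapses the $M^2$ sorted sensors onto only $M$ distinct heights, so the targets are not the equidistant anchors of $M^2$ points; the resolution is that the per-sensor rounding error is $O(1/M)$, contributing $M^2\,O(M^{-a})=O(M^{2-a})$ in aggregate, which is absorbed by the $\Theta(M^{2-a})$ main term, with the triangle inequality and Lemma \ref{lem:cruciana} making this rigorous. A secondary point is the independence argument that legitimizes invoking $CV_1$ line by line.
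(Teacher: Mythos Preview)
Your proof is correct and follows the same two-phase skeleton as the paper, combining the phase costs via Lemma \ref{lem:cruciana} and handling the second phase identically (Theorem \ref{thm:const} applied to each of the $M=\lfloor\sqrt{n}\rfloor$ lines).

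The one genuine difference is in the first phase. The paper simply cites an external estimate from \cite{KK_2016_cube} (the bound $E^{(a)}_{(1\text{--}6)}=O(n^{1-a/2})$ from the proof of Theorem~5 there) and moves on. You instead give a self-contained reduction to Theorem \ref{thm:exact_one}: you observe that the vertical target of the sensor with $y$-rank $i$ differs from the equidistant anchor $\tfrac{i}{M^2}-\tfrac{1}{2M^2}$ by at most $\tfrac{1}{2M}$, then split each displacement via the triangle inequality and apply Lemma \ref{lem:cruciana}. This yields the same $O(M^{2-a})$ bound without appealing to an outside paper; the price is the small rounding computation, and the gain is that the argument stays entirely within the present paper's toolkit. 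Your explicit correctness discussion (overlapping vertical strips, line-by-line interference, and the independence argument that the $x$-coordinates on each line remain i.i.d.\ uniform because the line assignment depends only on the $y$-coordinates) also fills in points the paper leaves implicit.
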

\begin{proof} (Theorem \ref{thm:const2})
Firstly, we look at the expected $a$-total displacement in first phase of the algorithm (see steps $(1)$-$(7)$).
It was proved in \cite{KK_2016_cube} that the expected $a$-total displacement in steps $(1)$-$(7)$ of Algorithm \ref{alg_coverage2} is in
$O\left(n^{1-\frac{a}{2}}\right)$ (see estimation of $E^{(a)}_{(1-6)}$ for $n:=\left(\lfloor \sqrt{n}\rfloor\right)^2,$ 
$d=2$ in the proof of  \cite[Theorem 5, Formulas (8), (10), page 41]{KK_2016_cube}).

Observe that in the second phase of Algorithm \ref{alg_coverage2} (see steps $(8)$-$(10)$) we have $\lfloor \sqrt{n}\rfloor$ lines each with $\lfloor\sqrt{n}\rfloor$ random sensors
with identical sensing radius $r_1=\frac{1+\epsilon}{2\lfloor \sqrt{n}\rfloor}.$
According to Theorem \ref{thm:const} 
the expected $a$-total displacement is $\lfloor \sqrt{n}\rfloor O\left(\left(\lfloor\sqrt{n}\rfloor\right)^{1-a}\right)=O\left(n^{1-\frac{a}{2}}\right).$
This together with Lemma \ref{lem:cruciana} completes the proof of Theorem \ref{thm:const2}. 
\end{proof}
\section{Sensors in higher dimensions}
\label{sec:discussion}
In this section we discuss the expected $a$-total displacement for $(r_m,s)$-\textit{coverage \& interference} requirement in higher dimensions, when $m>2.$

Let us recall that the proposed Algorithm \ref{alg_coverage2} 
moves the sensors only vertical and horizontal fashion and reduces the $a$-total displacement on the unit square to the $a$-total displacement on the unit interval.

Hence Algorithm \ref{alg_coverage2}  can be extended for the random sensors on the $m$-dimensional cube $[0,1]^m,$ when $m>2.$ 
We can similary to Square Sensing Radius (see Definition \ref{def:square}) define
$m$-Dimensional Cube Sensing Radius, move the sensors only according to the axes and reduce the $a$-total displacement on the unit cube to the $a$-total displacement on the unit interval.

Namely, for the sensors with the identical $m$-cube sensing radius 
$r_m>\frac{1}{2n^{1/m}}$ (the sum of sensing area of $n$ sensors is greater than the area of unit cube)
and the interference distance $s<\frac{1}{n^{1/m}}$ it is possible to give an algorithm  with $O\left(n^{1-\frac{a}{m}}\right)$ expected $a$-total displacement for all powers $a>0.$
However, even though Theorem \ref{thm:const2} can be generalized for the random sensors with the identical $m$-cube sensing radius 
$r_m>\frac{1}{2n^{1/m}}$ on $m$-dimensional cube, when $m>2,$ the \textit{proposed generalization is weak.}

Notice that Theorem \ref{thm:tal} is closely related to the main result of paper \cite{ajtai_84}.
Namely, consider two sequences $X_1, X_2,\dots, X_n;$ $Y_1, Y_2,\dots, Y_n$ of points that are
independently uniformly distributed 
and the non-random points $(Z_i)_{i\le n}$ are evenly distributed, i.e.
$Z_i=\left(\frac{k}{\sqrt{n}}-\frac{1}{2\sqrt{n}}, \frac{l}{\sqrt{n}}-\frac{1}{2\sqrt{n}}\right),$ where $1\le k,l\le \sqrt{n},\,\, $   $i=k\sqrt{n}+l.$
on the unit square $[0,1]^2$ then
$$\mathbb{E}\left({\inf_{\pi}\sum_{i=1}^{n}d\left(X_i,Z_{\pi(i)}\right)}\right)=\mathbb{E}\left({\inf_{\pi}\sum_{i=1}^{n}d\left(X_i,Y_{\pi(i)}\right)}\right)=\Theta\left(\sqrt{\ln(n) n}\right),$$
where $\pi$ ranges over all permutations of $\{1,2,\dots, n\}$ and $n=q^2$ for some $q\in\mathbb{N}.$

On the other hand, there is a difference between $m=2$ (the $2$-dimensional case) and $m>2$ (the case of dimension at least $3$). 
Namely for two sequences $X_1, X_2,\dots,$ $X_n;$ $Y_1, Y_2,\dots, Y_n$ of points that are
independently uniformly distributed 
on the $m$-dimensional cube $[0,1]^m,$ when $m>2$ we have
$$\mathbb{E}\left({\inf_{\pi}\sum_{i=1}^{n}d\left(X_i,Y_{\pi(i)}\right)}\right)=\Theta\left(n^{1-\frac{1}{m}}\right),$$
provided that $\pi$ ranges over all permutations of $\{1,2,\dots, n\}$
(see \cite{yukich} for details). 

Hence, it seems that Theorem \ref{thm:tal}
together with Theorem \ref{thm:tal_a} can be generalized for $n$ random mobile sensors $X_1,X_2,\dots,X_n$ on the $m$-dimensional cube $[0,1]^m,$ when $m>2$
and the following result should hold.

Assume that $n$ random variables $X_1, X_2,\dots, X_n$ are independently uniformly distributed 
and the 
non-random points $(Z_i)_{i\le n}$ evenly distributed
at the the positions
$
\left(\frac{l_1}{n^{1/d}}-\frac{1}{2n^{1/d}},
\frac{l_2}{n^{1/d}}-\frac{1}{2n^{1/d}},
\dots ,\frac{l_d}{n^{1/d}}-\frac{1}{2n^{1/d}}
\right),
$
for $1\le l_1,l_2,\dots , l_d\le n^{1/d}$ and $l_1,l_2,\dots,l_d\in \mathbb{N}$ 
on the unit $m$-dimensional cube $[0,1]^m$ then
\begin{equation}
\label{eq:open}
\mathbb{E}\left({\inf_{\pi}\sum_{i=1}^{n}d^a\left(X_i,Z_{\pi(i)}\right)}\right)=\Theta\left(n^{1-\frac{a}{m}}\right)
\end{equation}
for all powers $a\ge 1,$  
where $\pi$ ranges over all permutations of $\{1,2,\dots, n\}$ and
$n=q^m$ for some $q\in\mathbb{N}.$


Therefore, it is an open problem to prove that
$(r_m,s)$-\textit{coverage \& interference} requirement for  $m$-cube sensing radius 
$r_m=\frac{1}{2n^{1/m}}$ (the sum of sensing area of $n$ sensors is equal to the area of unit cube)
and the interference distance $s=\frac{1}{n^{1/m}}$ can be solved in $\Theta\left(n^{1-\frac{a}{m}}\right)$
and to study the expected $a$-total displacement for $(r_m,s)$-\textit{coverage \& interference} requirement, when 
$r_m>\frac{1}{2n^{1/m}}$ and $s<\frac{1}{n^{1/m}}.$
\section{Experimental Results}
\label{sec:simu}
In this section we provide a set of experiments to \textit{confirm} the discovered theoretical threshold
for the expected $a$-total displacement. Wolfram Mathematica $10.0$ was used for our experiments when $a=1,$ $a=\frac{3}{2}$ and $a=2.$
We distinguish two cases. \\

\textit{Case $1$: sensing radius $r_1>\frac{1}{2n}$ and interference distance $s<\frac{1}{n}.$}\\

In this case, we conduct Algorithm \ref{sym_Case1}.

Figures \ref{fig:alg2}, \ref{fig:alg32b} and \ref{fig:alg4} illustrates the described experiment for 
$a=1,$ $a=\frac{3}{2}$ and $a=2$

\begin{algorithm}[H]
\caption{}
\label{sym_Case1}
\begin{algorithmic}[1]
\STATE{$n:=1$}
\WHILE{n $\le$ 5000}
 \STATE{Generate independently and uniformly $n$ random points on the unit interval $[0,1]$;}
 \STATE{Calculate  $\mathbf{T}^{(a)}_{n}$ according to Algorithm \ref{alg_left} for $\rho=\frac{1.8}{n}$ and $s=\frac{0.5}{n};$}
 \STATE{Insert the points $(n, T^{(a)}_{n})$ into the chart;}
 \STATE{$n:=n+1$}
 \ENDWHILE
\end{algorithmic}
\end{algorithm} 

Notice that the experimental $a$-total displacement of Algorithm \ref{sym_Case1} is \textit{constant} and independent on the number of sensors for $a=1,$
is $\Theta\left(\frac{1}{\sqrt{n}}\right)$ for $a=\frac{3}{2}$
and is $\Theta\left(\frac{1}{n}\right)$ for $a=2.$
Therefore,  the carried out experiments confirm very well our theoretical upper bound estimation
$O(1)$ for $a=1,$
$O\left(\frac{1}{\sqrt{n}}\right)$ for $a=\frac{3}{2}$
and $O\left(\frac{1}{n}\right)$ for $a=2.$
(see Theorem \ref{thm:covere} for $a=1,$ $a=\frac{3}{2}$ and $a=2$).\\

\textit{Case $2$: sensing radius $r_1=\frac{1}{2n}$ and interference distance $s=\frac{1}{n}.$ }\\

In this case, we conduct Algorithm \ref{sym_Case2}.
\begin{algorithm}[H]
\caption{}
\label{sym_Case2}
\begin{algorithmic}[1]
\STATE{$n:=1$}
\WHILE{n $\le$ 60}
 \FOR{$j=1$  \TO $200$ } 
 \STATE{Generate independently and uniformly $n^2$ random points on the unit interval $[0,1]$;}
 \STATE{Calculate  $\mathbf{T}^{(a)}_{n^2}(j)$ according to Theorem \ref{thm:exact_one};}
 \ENDFOR
  \FOR{$k=1$  \TO $20$ } 
 \STATE{Calculate  the average $T^{(a)}_{n^2,k}=\frac{1}{10}\sum_{j=1}^{10} \mathbf{T}^{(a)}_{n^2}(j+(k-1)*10)$;}
 \STATE{Insert the points $(n^2, T^{(a)}_{n^2,k})$ into the chart;}
 \ENDFOR
 \STATE{$n:=n+1$}
 \ENDWHILE
\end{algorithmic}
\end{algorithm}
In Figures \ref{fig:alg1}, \ref{fig:alg32} and \ref{fig:alg3} the black points represents numerical results of conducted experiments. The additional lines
$\left\{\left(n, \frac{\Gamma\left(\frac{3}{2}\right)}{2\sqrt{2}}\sqrt{n} \right),\,\, 1\le n\le 3600\right\},$\\
$\left\{\left(n, \frac{\Gamma\left(\frac{7}{4}\right)}{2^{\frac{3}{4}}\left(\frac{5}{2}\right)}n^{\frac{1}{4}} \right),\,\, 1\le n\le 3600\right\},$
$\left\{\left(n, \frac{1}{6} \right),\,\, 1\le n\le 3600\right\}$ are the plots of function which is the theoretical estimation
(see the leading term in asymptotic result of Theorem \ref{thm:exact_one} for $a=1,$ $a=\frac{3}{2}$ and $a=2$).
It is worth pointing out that numerical results are situated near the theoretical line. 

It is also possible to repeat experiments to all 
exponents $a>0,$ as well as Algorithms \ref{alg_coverage} and \ref{alg_coverage2}.

 \begin{minipage}{\linewidth}
      \centering
      \begin{minipage}{0.45\linewidth}
          \begin{figure}[H]
              \includegraphics[width=\linewidth]{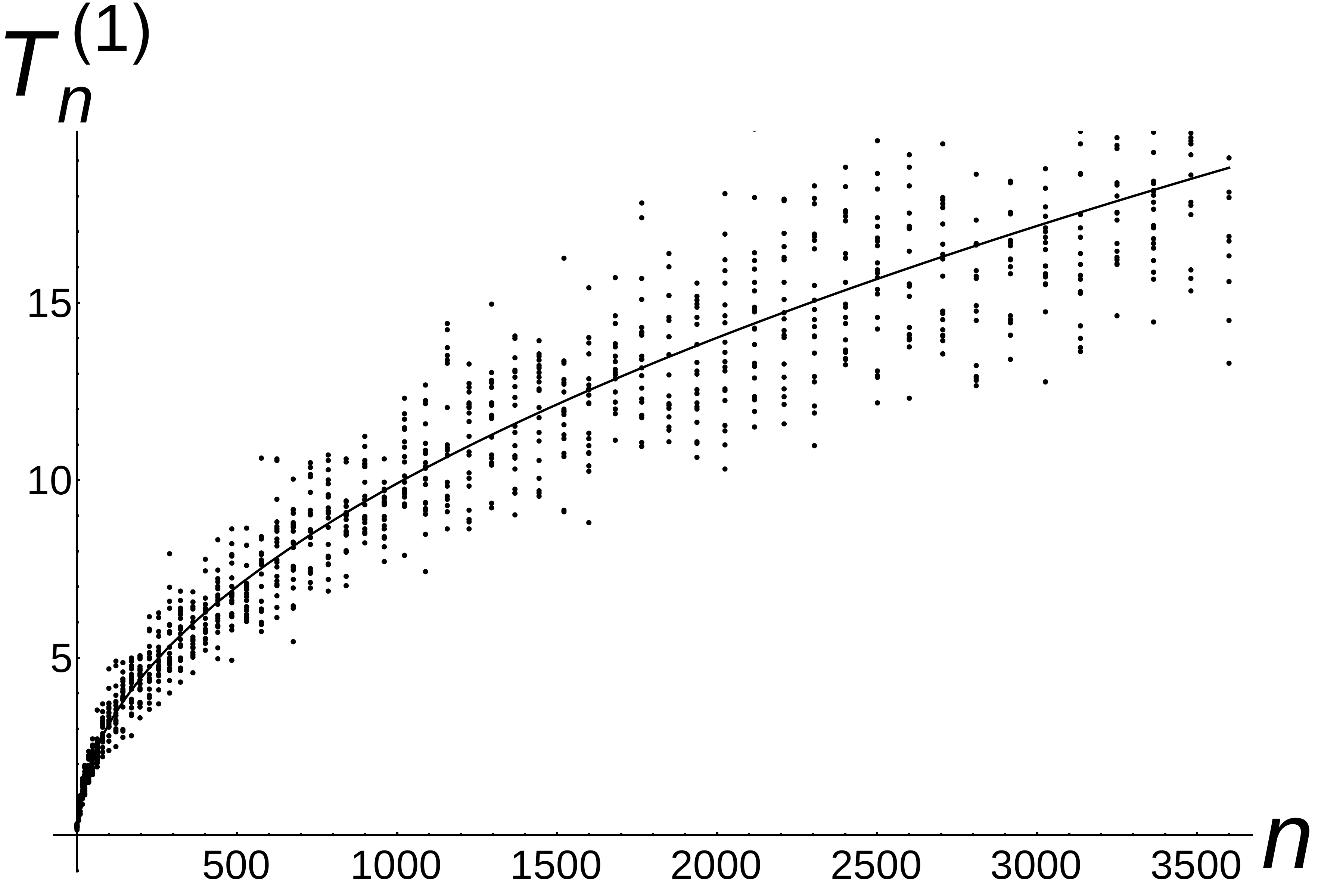}
              \caption{\label{fig:alg1} $T_n^{(1)}\sim \frac{\Gamma\left(\frac{3}{2}\right)}{2\sqrt{2}}\sqrt{n}$ of Algorithm \ref{sym_Case2} with the additional theoretical line according to the leading term of Theorem \ref{thm:exact_one}
              for $a=1.$}
          \end{figure}
      \end{minipage}
      \hspace{0.05\linewidth}
      \begin{minipage}{0.45\linewidth}
          \begin{figure}[H]
              \includegraphics[width=\linewidth]{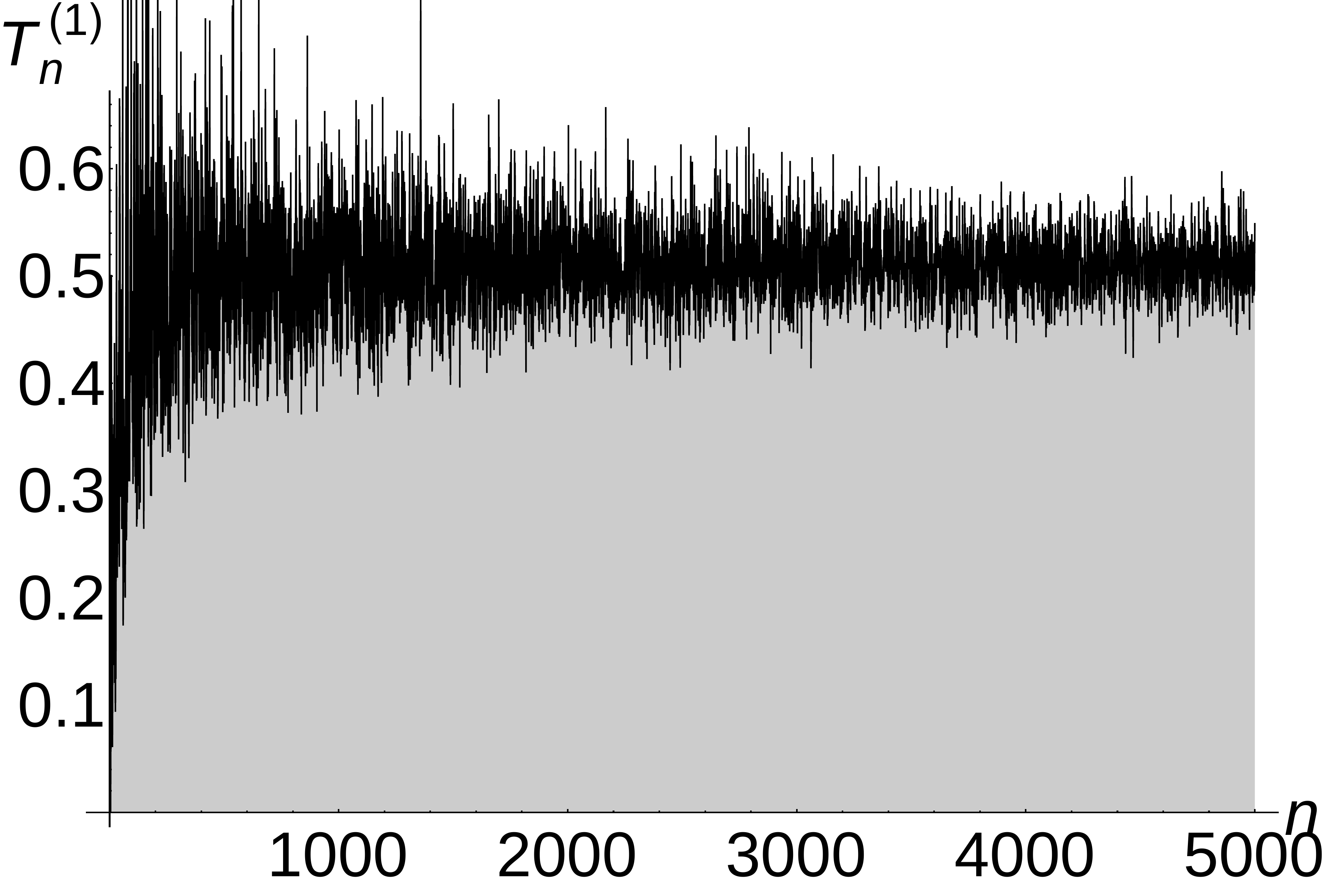}
              \caption{\label{fig:alg2} $T_n^{(1)}=\Theta(1)$ of Algorithm \ref{sym_Case1}.}
          \end{figure}
      \end{minipage}
  \end{minipage}

 \begin{minipage}{\linewidth}
      \centering
      \begin{minipage}{0.45\linewidth}
          \begin{figure}[H]
              \includegraphics[width=\linewidth]{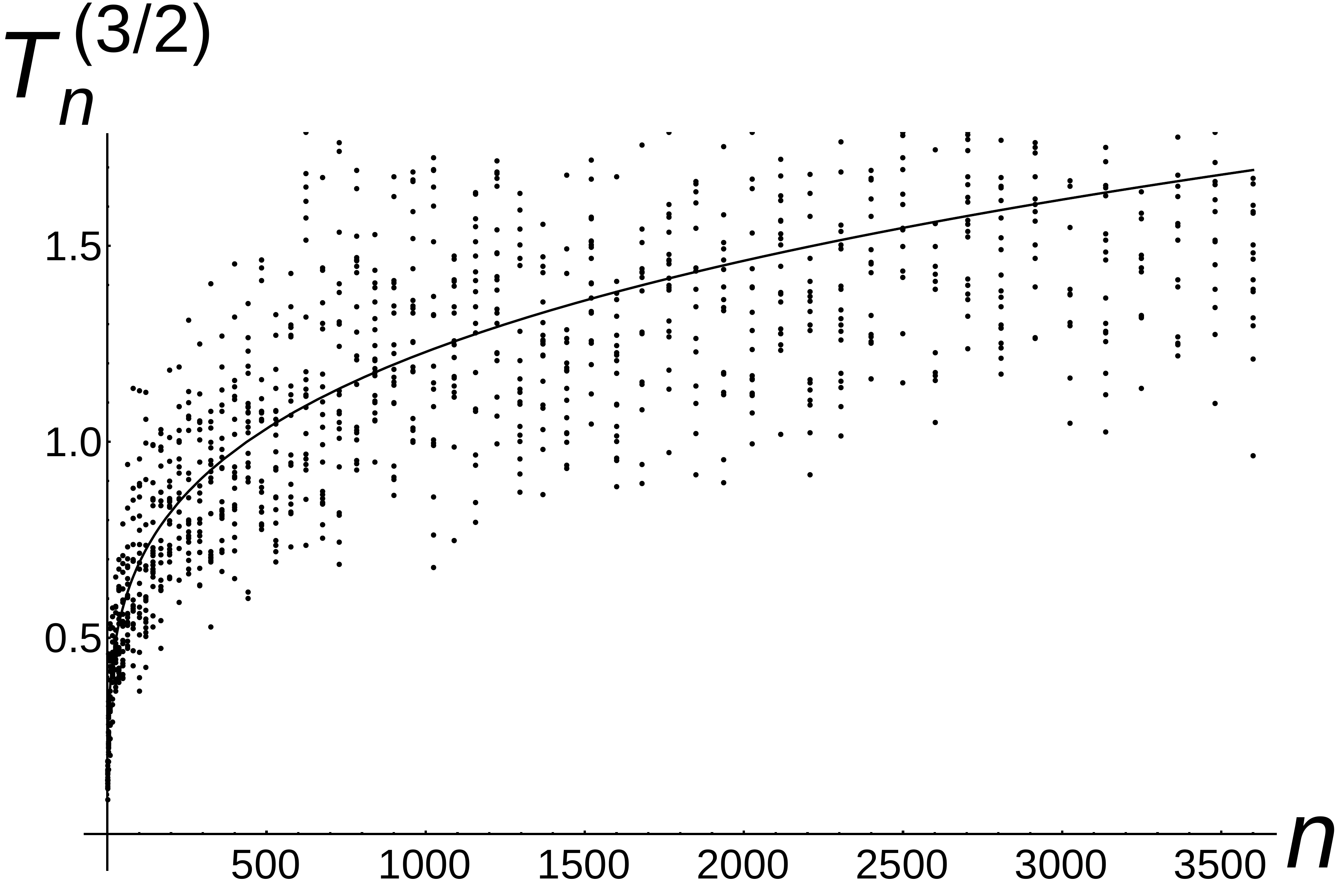}
              \caption{\label{fig:alg32} $T_n^{(3/2)}\sim \frac{\Gamma\left(\frac{7}{4}\right)}{2^{\frac{3}{4}}\left(\frac{5}{2}\right)}n^{\frac{1}{4}} $ of Algorithm \ref{sym_Case2} with the additional theoretical 
              line according to the leading term of Theorem \ref{thm:exact_one} for $a=3/2.$}
          \end{figure}
      \end{minipage}
      \hspace{0.05\linewidth}
      \begin{minipage}{0.45\linewidth}
          \begin{figure}[H]
              \includegraphics[width=\linewidth]{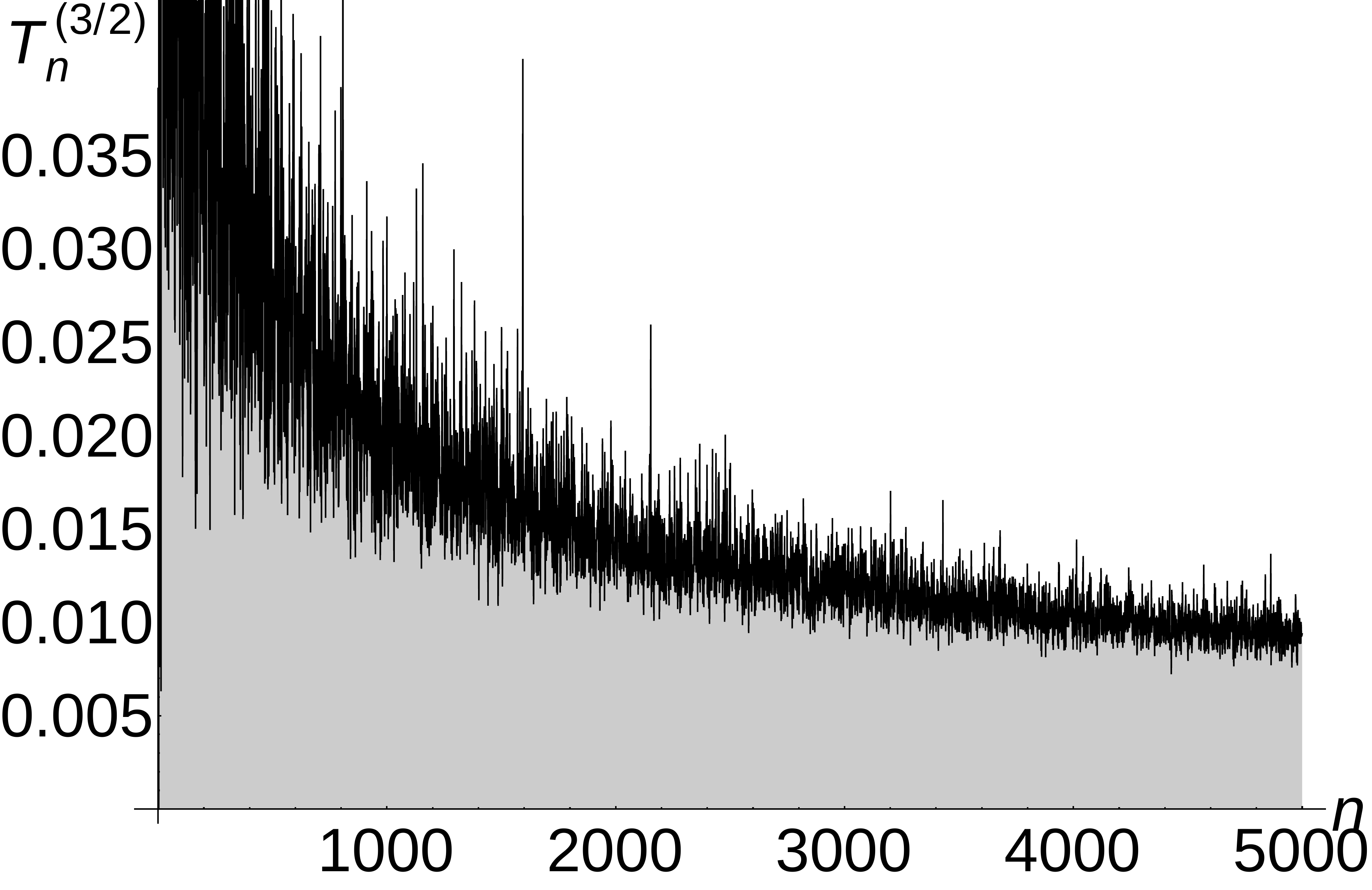}
              \caption{\label{fig:alg32b} $T_n^{(3/2)}=\Theta\left(\frac{1}{\sqrt{n}}\right)$ of Algorithm \ref{sym_Case1}.}
          \end{figure}
      \end{minipage}
  \end{minipage}

 \begin{minipage}{\linewidth}
      \centering
      \begin{minipage}{0.45\linewidth}
          \begin{figure}[H]
              \includegraphics[width=\linewidth]{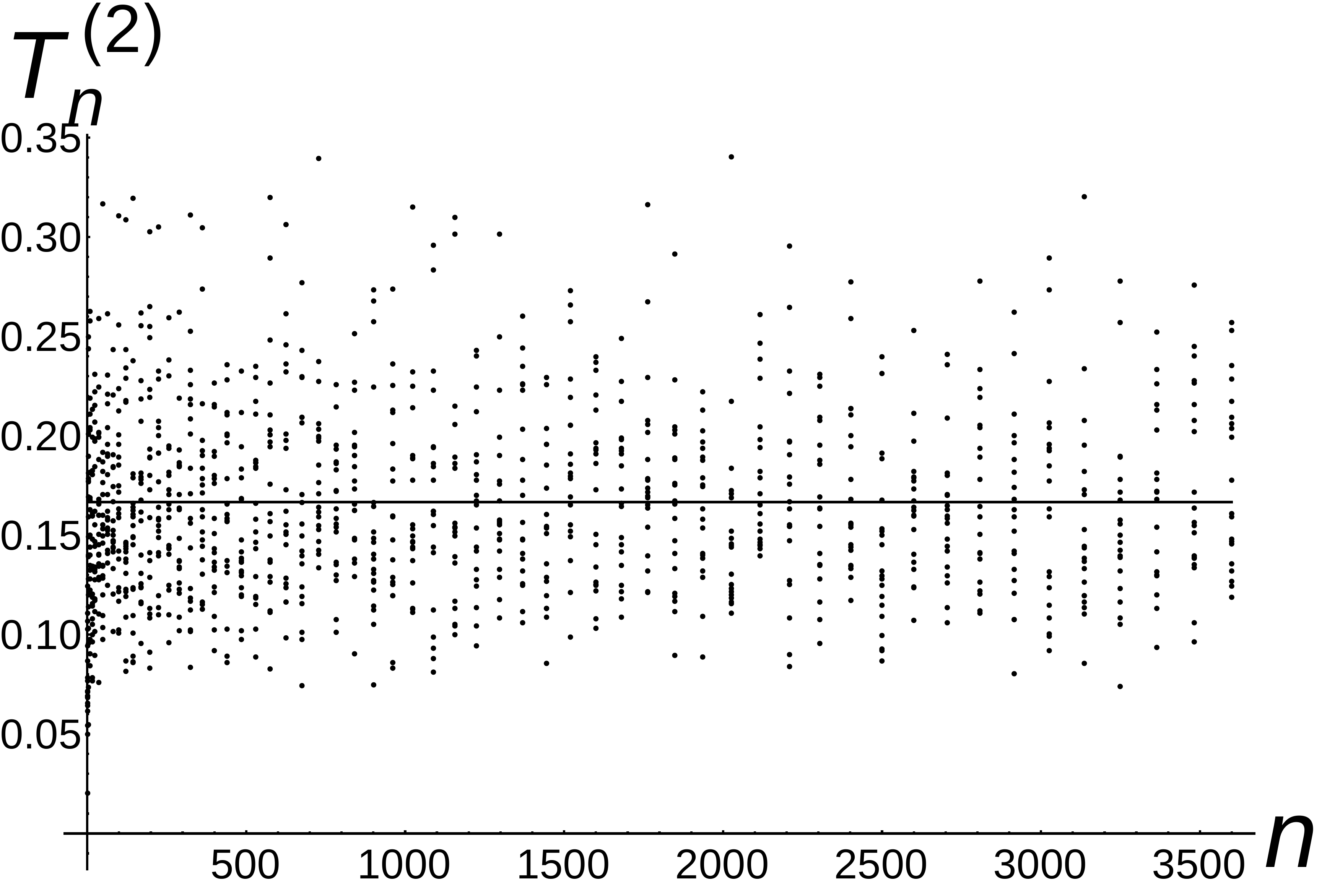}
              \caption{\label{fig:alg3} $T_n^{(2)}\sim \frac{1}{6}$ of Algorithm \ref{sym_Case2} with the additional theoretical line according to the leading term of Theorem \ref{thm:exact_one} for $a=2.$}
          \end{figure}
      \end{minipage}
      \hspace{0.05\linewidth}
      \begin{minipage}{0.45\linewidth}
          \begin{figure}[H]
              \includegraphics[width=\linewidth]{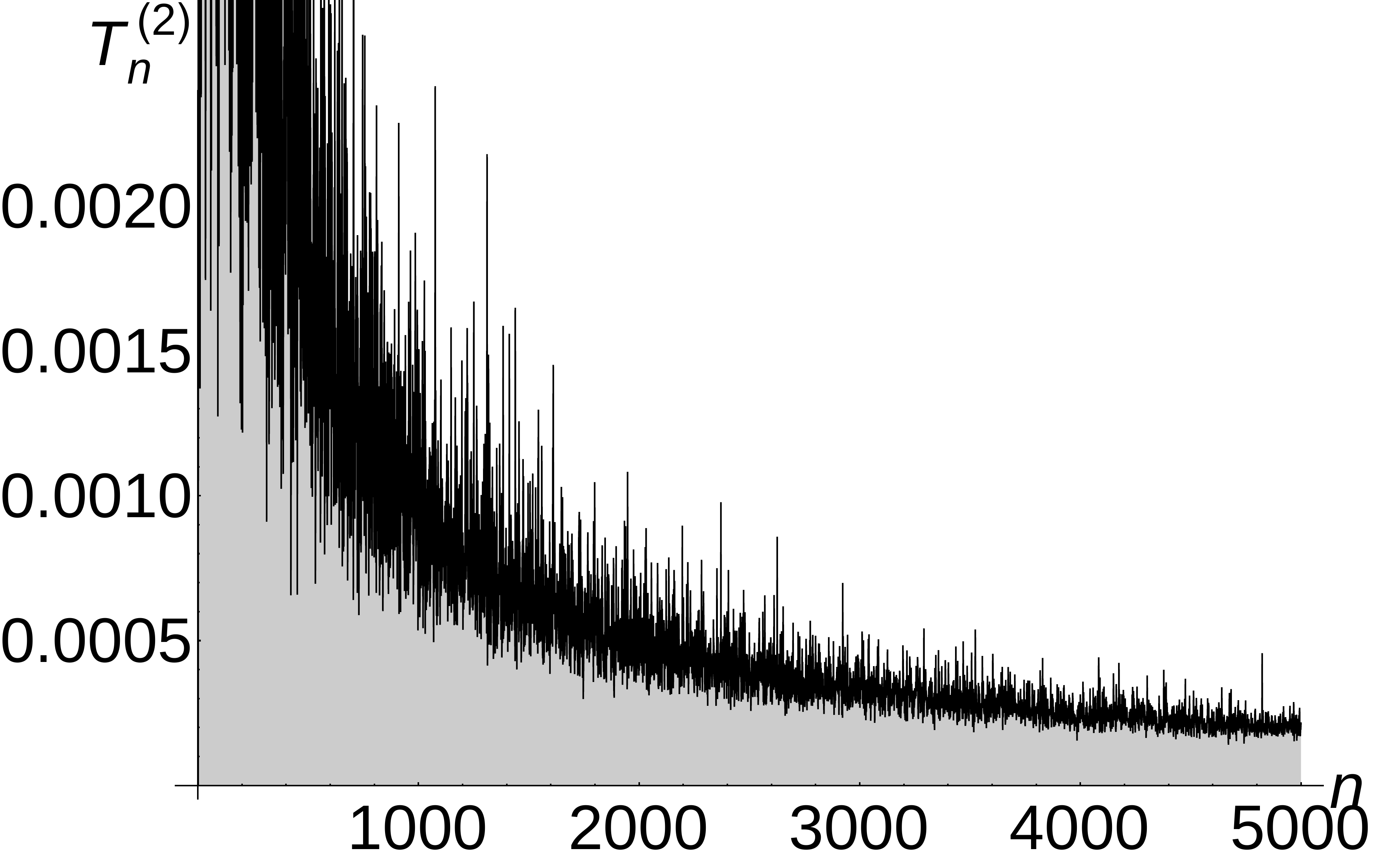}
              \caption{\label{fig:alg4} $T_n^{(2)}=\Theta\left(\frac{1}{n}\right)$ of Algorithm \ref{sym_Case1}.}
          \end{figure}
      \end{minipage}
  \end{minipage}

\section{Conclusion and Future Direction}
\label{sec:con}
 In this paper the following natural problem was investigated: given $n$ uniformly random mobile sensors in $m$-dimensional unit cube, where $m\in\{1,2\},$ what is the minimal energy consumption to move them
so that are pairwise at interference distance at least $s$ apart, and so that every point of $m$-dimensional unit cube
is within the range of at least one sensor?

As energy consumption measure for the displacement of $n$ sensors we considered the $a$-total displacement defined as the sum $\sum_{i=1}^n d_i^a,$ where $d_i$ is the distance sensor $i$ has been moved
and $a>0.$
The main findings can be summarized as follows:
\begin{itemize}
\item For the sensors placed on the unit interval, sensing radius $r_1=\frac{1}{2n}$ and interference distance $s=\frac{1}{n}$ the expected minimal $a$-total displacement is of order $O\left(n^{1-\frac{a}{2}}\right).$
When $r_1=\frac{1+\epsilon}{2n}$ and $s=\frac{1-\delta}{n},$ provided that $\epsilon>0$, $1>\delta>0$ are arbitrary small constants independent on the number of sensors $n,$
then there is an algorithm with $O\left(n^{1-a}\right)$ expected $a$-total displacement for all powers $a>0.$
\item For the case of the unit square and $a>0$, square sensing radius $r_2=\frac{1}{2\sqrt{n}}$ and interference distance $s=\frac{1}{\sqrt{n}}$ the expected minimal $a$-total displacement is at least of order 
$\Omega\left(\left(\log(n)\right)^{\frac{a}{2}}n^{1-\frac{a}{2}}\right),$ provided that $n$ is the square of a natural number.
When $r_2=\frac{1+\epsilon}{2\lfloor\sqrt{n}\rfloor}$ and $s=\frac{1-\delta}{\lfloor\sqrt{n}\rfloor},$ provided that $\epsilon>0$, $1>\delta>0$ are arbitrary small constants independent on the number of sensors $n,$
then there is an algorithm with $O\left(n^{1-\frac{a}{2}}\right)$ expected $a$-total displacement for all powers $a\ge 1.$
\end{itemize}

This paper opens several research directions. 

First, it would be interesting to know what happens if $\epsilon$ and $\delta$ depend on $n$  and decreases to $0.$ This would give the complete picture of the threshold phenomena
for \textit{coverage \& interference} requirement.

Second, in this paper we investigated \textit{coverage \& interference} requirement only for one and two dimensional network. It is an open problem to generalize this study to the higher dimensions
and investigate threshold phenomena for th $m$-dimensional cube, similar to $1$- and $2$-dimensional cubes.

Additionally it would be interesting for future research to study \textit{coverage \& interference} requirement for non-uniform displacement of sensors, on other domains,
as well for some real-life sensor displacement.

\bibliographystyle{abbrvnat}
\bibliography{refs}
\label{sec:biblio}
\newpage
\appendix
\section*{Appendix A} 
\textbf{Let us recall Lemma \ref{lemma_e}.}

Let $a>0$ be a constant.
Fix $\gamma>0$ independent on $n.$ Let $\rho=\frac{1+\gamma}{n}.$
Assume that $\ell, n$ are positive integers and $\ell\le n.$
Then
\begin{equation}
\mathbb{E}\left[\left(|\mathrm{Beta}(\ell,n-\ell+1)-\rho \ell|^{+}\right)^a\right]=O\left(\frac{1}{n^a}\right),
\,\,\,\text{uniformly in}\,\,\,\ell\in\{1,2,\dots,n\},
\end{equation}
\begin{equation}
\sum_{\ell=1}^{n}\frac{n}{\ell} \mathbb{E}\left[\left(|\mathrm{Beta}(\ell,n-\ell+1)-\rho \ell|^{+}\right)^a\right]=O\left(n^{1-a}\right).
\end{equation}

\begin{proof} (\textbf{Lemma \ref{lemma_e}})
Let $b=\lceil a\rceil$ be the smallest integer greater or equal to $a.$
We estimate separately when
$0\le \rho \ell\le1- \frac{2}{n+b-1}$ and when $1-\frac{2}{n+b-1}< \rho \ell\le 1.$

\textit{Case $0\le \rho \ell\le1- \frac{2}{n+b-1}.$ } Observe that
\begin{equation}
\label{eq:asak100a}
\mathbb{E}\left[\left(\left|\mathrm{Beta}(\ell,n-\ell+1)-\rho \ell\right|^{+}\right)^b\right]=\int_{\rho \ell}^{1}(t-\rho \ell)^bf_{\ell,n}(t)dt\le \int_{\rho \ell}^{1}t^bf_{\ell,n}(t)dt,
\end{equation}
where $f_{\ell,n}(t)=\ell\binom{n}{\ell}t^{\ell-1}(1-t)^{n-\ell}.$
Applying Identities (\ref{incomplete:first}), (\ref{probal_eq}) for $c:=\ell+b,\,\,$ $d:=n-\ell+1,$ $z:=1$ 
and $c:=\ell+b,\,\,$ $d:=n-\ell+1,$ $z:=\rho \ell$ 
we have
\begin{align}
 \nonumber\int_{\rho \ell}^{1}&t^bf_{\ell,n}(t)dt=\int_{0}^{1}t^bf_{\ell,n}(t)dt-\int_0^{\rho \ell}t^bf_{\ell,n}(t)dt\\ 
 \nonumber&=\frac{\ell(\ell+1)\dots(\ell+b-1)}{(n+1)(n+2)\dots(n+b)}\left(I_1(\ell+b,n-\ell+1)-I_{\rho \ell}(\ell+b,n-\ell+1)\right)\\
 \nonumber&=\frac{\ell(\ell+1)\dots(\ell+b-1)}{(n+1)(n+2)\dots(n+b)}\left(1-I_{\rho \ell}(\ell+b,n-\ell+1)\right)\\
 \nonumber &=\frac{\ell(\ell+1)\dots(\ell+b-1)}{(n+1)(n+2)\dots(n+b)}\sum_{j=0}^{\ell+b-1}\binom{n+b}{j}(\rho \ell)^j(1-\rho \ell)^{n+b-j}\\
 \nonumber &=\frac{\ell(\ell+1)\dots(\ell+b-1)}{(n+1)(n+2)\dots(n+b)}\times\\
 \label{eq:statistica01}&\sum_{j=0}^{\ell+b-1}\frac{n+b}{n+b-j}(1-\rho \ell)\binom{n+b-1}{j}(\rho \ell)^j(1-\rho \ell)^{n+b-1-j}.
\end{align}
From Inequality (\ref{eq:lecam}) for $x:=\rho \ell$ and $n:=n+b-1$ we get
\begin{align}
\nonumber&\binom{n+b-1}{j}(\rho \ell)^j(1-\rho \ell)^{n+b-1-j}\\
\label{eq:lecam_app}& \le \left(\frac{n+b-1}{(n+b-1)(1-\rho \ell)-1}\right)^{\frac{1}{2}} e^{-(n+b-1)\rho \ell}\frac{\left((n+b-1)\rho \ell\right)^j}{j!}.
\end{align}
Using assumption $\rho \ell\le 1-\frac{2}{n+b-1}$ we easily derive
\begin{equation}
\label{eq:lecam_estime} 
(1-\rho \ell) \left(\frac{n+b-1}{(n+b-1)(1-\rho \ell)-1}\right)^{\frac{1}{2}}\le\left(\frac{1-\rho \ell}{1-\rho \ell-\frac{1}{n+b-1}}\right)^{\frac{1}{2}}\le\sqrt{2}.
\end{equation}
Since $\rho \ell<1$ and $\rho =\frac{1+\gamma}{n},$ we have
\begin{align}
\nonumber&\frac{n+b}{n+b-j}\le\frac{n+b}{n+1-\ell}<\frac{n+b}{n+1-\frac{1}{\rho}}
\label{eq:lecam_ddd} 
=\frac{n+b}{n+1-\frac{n}{1+\gamma}}=\frac{n+b}{n\frac{\gamma}{1+\gamma}+1},\\
&\,\,\,\text{when}\,\,\,j\le \ell+b-1. 
\end{align}
Combining together (\ref{eq:asak100a}---\ref{eq:lecam_ddd}) 
we get
\begin{align}
\nonumber&\mathbb{E}\left(\left[|\mathrm{Beta}(\ell,n-\ell+1)-\rho \ell|^{+}\right)^b\right]\le \frac{\ell(\ell+1)\dots(\ell+b-1)}{(n+1)(n+2)\dots(n+b-1)}
\label{eq:fusolala}\times\\
&\times \frac{\sqrt{2}}{n\frac{\gamma}{1+\gamma}+1}e^{-(n+b-1)\rho \ell}\sum_{j=0}^{\ell+b-1}\frac{((n+b-1)\rho \ell)^j}{j!}.
\end{align}
Putting together assumptions: $j\le \ell+b-1$ and $\ell<n$ with the elementary inequality $\left(1+\frac{1}{x}\right)^x\le e,$ when $x>0$ we have
$$
\left(\frac{n+b-1}{n}\right)^j\le \left(\frac{n+b-1}{n}\right)^{n+b-1}=\left(\left(1+\frac{b-1}{n}\right)^{\frac{n}{b-1}}\right)^{\frac{(b-1)(n+b-1)}{n}}
\le e^{(b-1)b}.
$$
Hence
\begin{equation}
\label{eq:dusol}
(n+b-1)^j\le n^j e^{(b-1)b}.
\end{equation}
Observe  that
\begin{equation}
\label{eq:dusola}
e^{-(n+b-1)\rho \ell}\le e^{-n\rho \ell}.
\end{equation}
Combining together (\ref{eq:fusolala}---\ref{eq:dusola})
we get
\begin{align}
\nonumber&\mathbb{E}\left(\left[|\mathrm{Beta}(\ell,n-\ell+1)-\rho \ell|^{+}\right)^b\right]\le \frac{\ell(\ell+1)\dots(\ell+b-1)}{(n+1)(n+2)\dots(n+b-1)}\times\\
\label{eq:firstveldosa}&\times \frac{\sqrt{2}e^{(b-1)b}}{n\frac{\gamma}{1+\gamma}+1}e^{-n\rho \ell}\sum_{j=0}^{\ell+b-1}\frac{(n\rho \ell)^j}{j!}.
\end{align}
Using assumption $\rho n>1$ we easily derive the following inequality
\begin{equation}
\label{eq:desteer}
\frac{(n\rho \ell)^j}{j!}\le \frac{(n\rho \ell)^{j+1}}{(j+1)!},\,\,\,\text{when}\,\,\,j\le \ell-1.
\end{equation}
Hence
\begin{equation}
\label{eq;durenek}
\sum_{j=0}^{\ell}\frac{(n\rho \ell)^j}{j!}\le(\ell+1)\frac{(n\rho \ell)^\ell}{\ell!}.
\end{equation}
Observe that
\begin{equation}
\label{eq:asamutra}
\sum_{j=\ell+1}^{\ell+b-1}\frac{(n\rho \ell)^j}{j!}\le(b-1)\frac{(n\rho \ell)^{\ell+b-1}}{\ell!}.
\end{equation}
From Stirling's formula (\ref{eq:stirlingform}) for $N:=\ell$ we have
\begin{equation}
\label{eq:asak101a}
\frac{\ell^\ell}{\ell!}\le\frac{e^\ell}{\ell^{\frac{1}{2}}}\le e^\ell.
\end{equation}
Putting together (\ref{eq:firstveldosa})---(\ref{eq:asak101a}) 
we have
\begin{align*}
\mathbb{E}&\left[\left(|\mathrm{Beta}(\ell,n-\ell+1)-\rho \ell|^{+}\right)^b\right]\le\frac{\sqrt{2}e^{(b-1)b}\ell(\ell+1)\dots(\ell+b-1)}{(n+1)(n+2)\dots(n+b-1)\left(n\frac{\gamma}{1+\gamma}+1\right)}\times\\
&\times\left((\ell+1)+(b-1)\ell^{b-1}(n\rho)^{b-1}\right)\left(\frac{n\rho e}{e^{n\rho}}\right)^\ell.
\end{align*}
Since $\rho n=1+\gamma$ is some constant independent on $n$ we derive
\begin{equation}
\label{eq:firstveld}
\mathbb{E}\left[\left(|\mathrm{Beta}(\ell,n-\ell+1)-\rho \ell|^{+}\right)^b\right]\le\frac{O\left(\ell^{\max(b+1,2b-1)}\right)}{O\left(n^b\right)} \left(\frac{n\rho e}{e^{n\rho}}\right)^\ell.
\end{equation}
Let us recall that $b=\lceil a\rceil$ is the smallest integer greater or equal to $a.$
From Jensen's inequality for $f(x):=x^{\frac{\lceil a\rceil}{a}}$ and
$X:=\left(|\mathrm{Beta}(\ell,n-\ell+1)-\rho \ell|^{+}\right)^a$ we get
\begin{equation}
\label{eq:jensen_modyf}
\mathbb{E}\left[\left(|\mathrm{Beta}(\ell,n-\ell+1)-\rho \ell|^{+}\right)^a\right]\le \left(\mathbb{E}\left[\left(|\mathrm{Beta}(\ell,n-\ell+1)-\rho \ell|^{+}\right)^{\lceil a\rceil}\right]\right)^{\frac{a}{\lceil a\rceil}}.
\end{equation}
Putting together Estimation (\ref{eq:firstveld}), as well as $b=\lceil a\rceil$ and Inequality (\ref{eq:jensen_modyf}) we have
\begin{equation}
\label{eq:firstveldosos}
\mathbb{E}\left[\left(|\mathrm{Beta}(\ell,n-\ell+1)-\rho \ell|^{+}\right)^a\right]\le\frac{O\left(\ell^{\max\left(a+\frac{a}{\lceil a\rceil},2a-\frac{a}{\lceil a\rceil}\right)}\right)}{O\left(n^a\right)} 
\left(\left(\frac{n\rho e}{e^{n\rho}}\right)^{\frac{a}{\lceil a\rceil}}\right)^\ell.
\end{equation}
Combining assumption $\rho n=1+\gamma>1$ with the elementary inequality $\gamma+1<e^{\gamma},$ when $\gamma>0$ we deduce that
$\frac{n\rho e}{e^{n\rho}}=\frac{\gamma+1 }{e^{\gamma}}<1$. Hence
$$
\left(\frac{n\rho e}{e^{n\rho}}\right)^{\frac{a}{\lceil a\rceil}}\le 1.
$$
Therefore
\begin{equation}
\label{eq:dudek_marszowy01a}
\frac{O\left(\ell^{\max\left(a+\frac{a}{\lceil a\rceil},2a-\frac{a}{\lceil a\rceil}\right)}\right)}{O\left(n^a\right)} 
\left(\left(\frac{n\rho e}{e^{n\rho}}\right)^{\frac{a}{\lceil a\rceil}}\right)^\ell=O\left(\frac{1}{n^a}\right)\,\,\,\text{uniformly in}\,\,\,\ell\in\{1,2,\dots,n\}
\end{equation}
\begin{equation}
\label{eq:kutasos}
\sum_{\ell=1}^{n}\frac{n}{\ell}\frac{O\left(\ell^{\max\left(a+\frac{a}{\lceil a\rceil},2a-\frac{a}{\lceil a\rceil}\right)}\right)}{O\left(n^a\right)} 
\left(\left(\frac{n\rho e}{e^{n\rho}}\right)^{\frac{a}{\lceil a\rceil}}\right)^\ell =O\left(n^{1-a}\right).
\end{equation}
Putting together (\ref{eq:firstveldosos}), (\ref{eq:dudek_marszowy01a}) and (\ref{eq:kutasos}) we have
\begin{equation}
\label{eq:dudek_marszowy01}
\mathbb{E}\left[\left(|\mathrm{Beta}(\ell,n-\ell+1)-\rho \ell|^{+}\right)^a\right]=O\left(\frac{1}{n^a}\right),
\,\,\,\text{uniformly in}\,\,\,\ell\in\{1,2,\dots,n\},
\end{equation}
\begin{equation}
\label{eq:dudek_marszowy02}
\sum_{\ell=1}^{n}\frac{n}{\ell} \mathbb{E}\left[\left(|\mathrm{Beta}(\ell,n-\ell+1)-\rho \ell|^{+}\right)^a\right]=O\left(n^{1-a}\right).
\end{equation}
Finally, together (\ref{eq:dudek_marszowy01}) and (\ref{eq:dudek_marszowy02}) are enough to establish the first case.

\textit{Case  $1-\frac{2}{n+b-1}< \rho \ell\le 1.$} 
Observe that
\begin{align}
\nonumber\mathbb{E}&\left[\left(\left|\mathrm{Beta}(\ell,n-\ell+1)-\rho \ell\right|^{+}\right)^a\right]=\int_{\rho \ell}^{1}(t-\rho \ell)^af_{\ell,n}(t)dt
 \le\int_{\rho \ell}^{1}(1-\rho \ell)^af_{\ell,n}(t)dt\\
 &\label{eq:asak100b}\le\left(\frac{2}{n+b-1}\right)^a\int_{\rho \ell}^{1}f_{\ell,n}(t)dt.
\end{align}
Since $f_{\ell,n}(t)$ is the probability density function of the $\mathrm{Beta}(\ell,n-\ell+1),$ we have
\begin{equation}
\label{eq:probbbn}
\int_{\rho \ell}^{1}f_{\ell,n}(t)dt\le \int_{0}^{1}f_{\ell,n}(t)dt=1.
\end{equation}
Putting together (\ref{eq:asak100b}) and (\ref{eq:probbbn}) we have
\begin{equation}
\label{eq:des}
\mathbb{E}\left[\left(|\mathrm{Beta}(\ell,n-\ell+1)-\rho \ell|^{+}\right)^a\right]=O\left(\frac{1}{n^a}\right),
\,\,\,\text{uniformly in}\,\,\,\ell\in\{1,2,\dots,n\}.
\end{equation}
Since $1-t\le1-\rho \ell<\frac{2}{n+b-1}$ and $t\le 1$, we have $t^{\ell-1}\le 1$ and $(1-t)^{n-\ell}<\left(\frac{2}{n+b-1}\right)^{n-\ell}.$ Putting all this together 
with the elementary inequality $\left(1+\frac{1}{x}\right)^x\le e,$ when $x>0$ we have
\begin{align}
\label{eq:probbban}
\nonumber\sum_{\ell=1}^n\frac{1}{\ell}\int_{\rho \ell}^{1}f_{\ell,n}(t)dt&\le \sum_{\ell=1}^{n}\binom{n}{\ell}\left(\frac{2}{n+b-1}\right)^{n-\ell}\int_{\rho \ell}^{1}dt\le\left(1+\frac{2}{n+b-1}\right)^n\\
&\le\left(\left(1+\frac{2}{n+b-1}\right)^{\frac{n+b-1}{2}}\right)^{\frac{2n}{n+b-1}}
\le e^{\frac{2n}{n+b-1}}=O(1).
\end{align}
Together (\ref{eq:asak100b}) and (\ref{eq:probbban}) imply
\begin{align}
\nonumber\sum_{\ell=1}^{n}&\frac{n}{\ell} \mathbb{E}\left[\left(|\mathrm{Beta}(\ell,n-\ell+1)-\rho \ell|^{+}\right)^a\right]\\
&\le n\left(\frac{2}{n+b-1}\right)^a \sum_{\ell=1}^n\frac{1}{\ell}\int_{\rho \ell}^\ell f_{\ell,n}(t)dt
\label{eq:oszac}=O\left(n^{1-a}\right).
\end{align}
Finally, (\ref{eq:des}) and (\ref{eq:oszac})
are enough to prove the second case and sufficient to complete the proof of Lemma \ref{lemma_e}. 
\end{proof}
\section*{Appendix B}
\textbf{Let us recall Lemma \ref{lemma_f}.}

Let $a>0$ be a constant.
Fix $1>\delta>0$ independent on $n.$ Let $s=\frac{1-\delta}{n}.$
Assume that $\ell, n$ are positive integers and $\ell\le n.$
Then
\begin{equation}
\sum_{\ell=1}^{n}\frac{n}{\ell} \mathbb{E}\left[\left(|s\ell-\mathrm{Beta}(\ell,n-\ell+1)|^{+}\right)^a\right]=O\left(n^{1-a}\right).
\end{equation}

\begin{proof} (\textbf{Lemma \ref{lemma_f}})
First of all observe that
\begin{equation}
\label{eq:asak100aw}
\mathbb{E}\left[\left(\left|s\ell-\mathrm{Beta}(\ell,n-\ell+1)\right|^{+}\right)^a\right]=\int_{0}^{s\ell}(s\ell-t)^af_{\ell,n}(t)dt\le (s\ell)^a\int_{0}^{s\ell}f_{\ell,n}(t)dt,
\end{equation}
where $f_{\ell,n}(t)=\ell\binom{n}{\ell}t^{\ell-1}(1-t)^{n-\ell}.$
Applying Identities (\ref{incomplete:first}), (\ref{probal_eq}), (\ref{eq:binole}) for $c:=\ell,\,\,$ $d:=n-\ell+1$ and $z:=s\ell$ we have
\begin{equation}
\label{eq:statistica01a}
\int_{0}^{s\ell}f_{\ell,n}(t)dt=\sum_{j=\ell}^{n}\binom{n}{j}(s \ell)^j(1-s \ell)^{n-j}.
\end{equation}
From Inequality (\ref{eq:lecam}) for $x:=s\ell$ we get
\begin{equation}
\label{eq:lecam_app1m}
\binom{n}{j}(s\ell)^j(1-s\ell)^{n-j}\le
\left(\frac{n}{n(1-s \ell)-1}\right)^{\frac{1}{2}} e^{-ns\ell}\frac{(ns\ell)^j}{j!}.
\end{equation}
Using assumption $s \ell<1-\delta$ we easily derive
\begin{equation}
\label{eq:lecam_estime1b} 
\left(\frac{n}{n(1-s \ell)-1}\right)^{\frac{1}{2}}\le\left(\frac{1}{\delta-\frac{1}{n}}\right)^{\frac{1}{2}}\le\sqrt{\frac{2}{\delta}},\,\,\,\text{when}\,\,\,n>2/\delta.
\end{equation}
Combining together (\ref{eq:asak100aw}---\ref{eq:lecam_estime1b}) 
we get
\begin{align}
\nonumber\mathbb{E}&\left[\left(|s\ell-\mathrm{Beta}(\ell,n-\ell+1)|^{+}\right)^a\right]\le (s\ell)^a \sqrt{\frac{2}{\delta}}e^{-ns\ell}\sum_{j=\ell}^{n}\frac{(ns \ell)^j}{j!}\\
\label{eq:firstveldos}&\le (s\ell)^a \sqrt{\frac{2}{\delta}}e^{-ns\ell}\sum_{j=\ell}^{\infty}\frac{(ns \ell)^j}{j!},
\,\,\,\text{when}\,\,\,n>2/\delta.
\end{align}
Using assumption $sn<1$ we can easily derive the following inequality
$$
\frac{(ns\ell)^{j}}{j!}\ge \frac{(ns\ell)^{j+1}}{(j+1)!},\,\,\,\,\,\,\text{when}\,\,\,\,\,\,j\ge \ell-1.
$$
Therefore
\begin{equation}
\nonumber
\sum_{j=\ell}^{\infty}\frac{(ns\ell)^j}{j!}=\sum_{j=\ell}^{\lceil le\rceil }\frac{(ns\ell)^j}{j!}+\sum_{j=\lceil le\rceil+1 }^{\infty}\frac{(ns\ell)^j}{j!}
\label{eq:styrly02} \le \frac{(ns\ell)^\ell}{\ell!}(\ell e+1)+\sum_{j=\lceil le\rceil+1 }^{\infty}\frac{(ns\ell)^j}{j!}.
\end{equation}
Applying Stirling's formula (\ref{eq:stirlingform}) for $N:=\ell$ and $N:=j$ we get
$$
\frac{\ell^\ell}{\ell!}\le\frac{e^\ell}{\ell^{\frac{1}{2}}}\le e^\ell,\,\,\,\,\,\,   \frac{1}{j!}\le\frac{e^j}{j^{j+\frac{1}{2}}}\le \frac{e^j}{j^j}.
$$
Using these estimations in Inequality (\ref{eq:styrly02})  we derive
\begin{equation}
\label{eq:ffinn}
\sum_{j=\ell}^{\infty}\frac{(ns\ell)^j}{j!}\le(nse)^\ell(\ell e+1)+\sum_{j=\lceil le\rceil+1 }^{\infty}\left(\frac{ns\ell e}{j}\right)^j.
\end{equation}
From assumption $sn<1$ we get
\begin{equation}
\label{eq:ffinna}
\sum_{j=\lceil le\rceil+1 }^{\infty}\left(\frac{ns\ell e}{j}\right)^j\le \sum_{j=\lceil le\rceil+1 }^{\infty}(ns)^j\le\sum_{j=\ell }^{\infty}(ns)^j=(ns)^\ell\frac{1}{1-ns}=\frac{(ns)^\ell}{\delta}.
\end{equation}
Together  Inequalities (\ref{eq:firstveldos}), (\ref{eq:ffinn}) and (\ref{eq:ffinna}) imply
\begin{align}
\nonumber\mathbb{E}&\left[\left(|s\ell-\mathrm{Beta}(\ell,n-\ell+1)|^{+}\right)^a\right]\\
&\le\sqrt{\frac{2}{\delta}} s^a\ell\sum_{\ell=1}^{n}\left(\left(\frac{nse}{e^{ns}}\right)^\ell(\ell e+1)\ell^{a-1}+\left(\frac{ns}{e^{ns}}\right)^\ell \frac{\ell^{a-1}}{\delta}\right),
\,\,\,\,\,\label{eq:ffinnb}\,\,\,\text{when}\,\,\,n>2/\delta.
\end{align}
Combining assumption $s n=1-\delta$ with the elementary inequalities: $1-\delta<e^{-\delta}$ and $1-\delta<e^{1-\delta},$ when $\delta\in (0,1)$ we deduce that
$\frac{nse}{e^{ns}}=\frac{1-\delta}{e^{-\delta}}<1$ and $\frac{ns}{e^{ns}}=\frac{1-\delta}{e^{1-\delta}}<1.$ Hence
\begin{equation}
\label{eq:ostatni02a}
\sum_{\ell=1}^{n}\left(\left(\frac{nse}{e^{ns}}\right)^\ell(\ell e+1)\ell^{a-1}+\left(\frac{ns}{e^{ns}}\right)^\ell\frac{\ell^{a-1}}{\delta} \right)=O(1).
\end{equation}
Putting together (\ref{eq:ffinnb}), (\ref{eq:ostatni02a}) and assumption $s n=1-\delta$ we conclude that
$$
\sum_{\ell=1}^{n}\frac{n}{\ell} \mathbb{E}\left[\left(|s\ell-\mathrm{Beta}(\ell,n-\ell+1)|^{+}\right)^a\right]=O\left(n^{1-a}\right).
$$
This concludes the proof of Lemma \ref{lemma_f}. 
\end{proof}
\end{document}